%*************************************************;
%                                                 ;
%  NAME                                           ;
%    Driver_P2_double_porosity.tex                ;
%                                                 ;
%  WRITTEN BY                                     ;
%    Kalyana B. Nakshatrala                       ;
%    Seyedeh Hanie Seyed Joodat                   ;
%    Roberto Ballarini                            ;
%                                                 ;
%*************************************************;
\documentclass[11pt,reqno]{amsproc}
\linespread{1.1}
\allowdisplaybreaks
\usepackage{cite}
\usepackage{tcolorbox}
\tcbuselibrary{breakable}
\usepackage{enumerate}
\usepackage{psfrag}
\usepackage{caption}
\usepackage{color}
\usepackage{fullpage}
\usepackage{graphicx}
\usepackage{mathrsfs}
\usepackage{subfigure}
\DeclareGraphicsExtensions{.eps}
\usepackage{ucs}
\usepackage[utf8x]{inputenc}
\usepackage{epstopdf}
\usepackage[semicolon,square,authoryear]{natbib}
\usepackage[debug=false, colorlinks=true, pdfstartview=FitV, 
linkcolor=blue, citecolor=blue, urlcolor=blue]{hyperref}
%--------------------------------------------------;
\numberwithin{equation}{section}
%--------------------------------------------------;
%Fixing the font size of equation labels
\makeatletter
\def\maketag@@@#1{\hbox{\m@th\normalfont\normalsize#1}}
\makeatother
%--------------------------------------------------;
\newtheorem{lemma}{Lemma}[section]
\newtheorem{theorem}{Theorem}[section]
\newtheorem{remark}{Remark}[section]
\newtheorem{proposition}{Proposition}[section]

%-----------------------------------------;
%  additional packages to extend storage  ;
%-----------------------------------------;
\usepackage{morefloats}
\newlength{\drop}
\definecolor{amethyst}{rgb}{0.6, 0.4, 0.8}
\definecolor{burgundy}{rgb}{0.5, 0.0, 0.13}

%======================;
%  Title of the paper  ;
%======================;
\title{\textbf{Modeling flow in porous media 
with double porosity/permeability:~A stabilized mixed formulation, error analysis, and numerical 
solutions}}

%========================;
%  Authors of the paper  ;
%========================;
\author{\textbf{S.~H.~S.~Joodat}, \textbf{K.~B.~Nakshatrala},
  and \textbf{R.~Ballarini} \\
  {\small Department of Civil and Environmental
    Engineering, University of Houston. \\
    \textbf{Correspondence to:}~\textsf{knakshatrala@uh.edu}}}

%=========================;
%  Keywords of the paper  ;
%=========================;
\keywords{stabilized mixed formulations; error estimates; 
patch tests; double porosity/permeability; flow through 
porous media}

\begin{document}

\date{\today}

%===========================;
%  Title page of the paper  ;
%===========================;
\begin{titlepage}
  \drop=0.1\textheight
  \centering
  \vspace*{\baselineskip}
  \rule{\textwidth}{1.6pt}\vspace*{-\baselineskip}\vspace*{2pt}
  \rule{\textwidth}{0.4pt}\\[\baselineskip]
       {\Large \textbf{\color{burgundy}
           Modeling flow in porous media with double porosity/permeability\\
           {\small A stabilized mixed formulation, error analysis, and
             numerical solutions}}}\\[0.3\baselineskip]
       \rule{\textwidth}{0.4pt}\vspace*{-\baselineskip}\vspace{3.2pt}
       \rule{\textwidth}{1.6pt}\\[\baselineskip]
       \scshape
       An e-print of the paper is available on arXiv:~1705/08883.  \par 
       \vspace*{1\baselineskip}
       Authored by \\[\baselineskip]

  {\Large S.~H.~S.~Joodat\par}
  {\itshape Graduate Student, University of Houston}\\[0.75\baselineskip]
           
  {\Large K.~B.~Nakshatrala\par}
  {\itshape Department of Civil \& Environmental Engineering \\
  University of Houston, Houston, Texas 77204--4003 \\ 
  \textbf{phone:} +1-713-743-4418, \textbf{e-mail:} knakshatrala@uh.edu \\
  \textbf{website:} http://www.cive.uh.edu/faculty/nakshatrala}\\[0.75\baselineskip]
    
  {\Large R.~Ballarini\par}
  {\itshape Thomas and Laura Hsu Professor and Chair \\
    Department of Civil and Environmental Engineering,
     University of Houston.}
  
  \vspace{0.1in} 
  
  %-------------------------------------------------;
  %  Figure: Problem #1: convergence: p-refinement  ;
  %-------------------------------------------------;
\begin{figure}[h]
  \includegraphics[clip,scale=0.35]{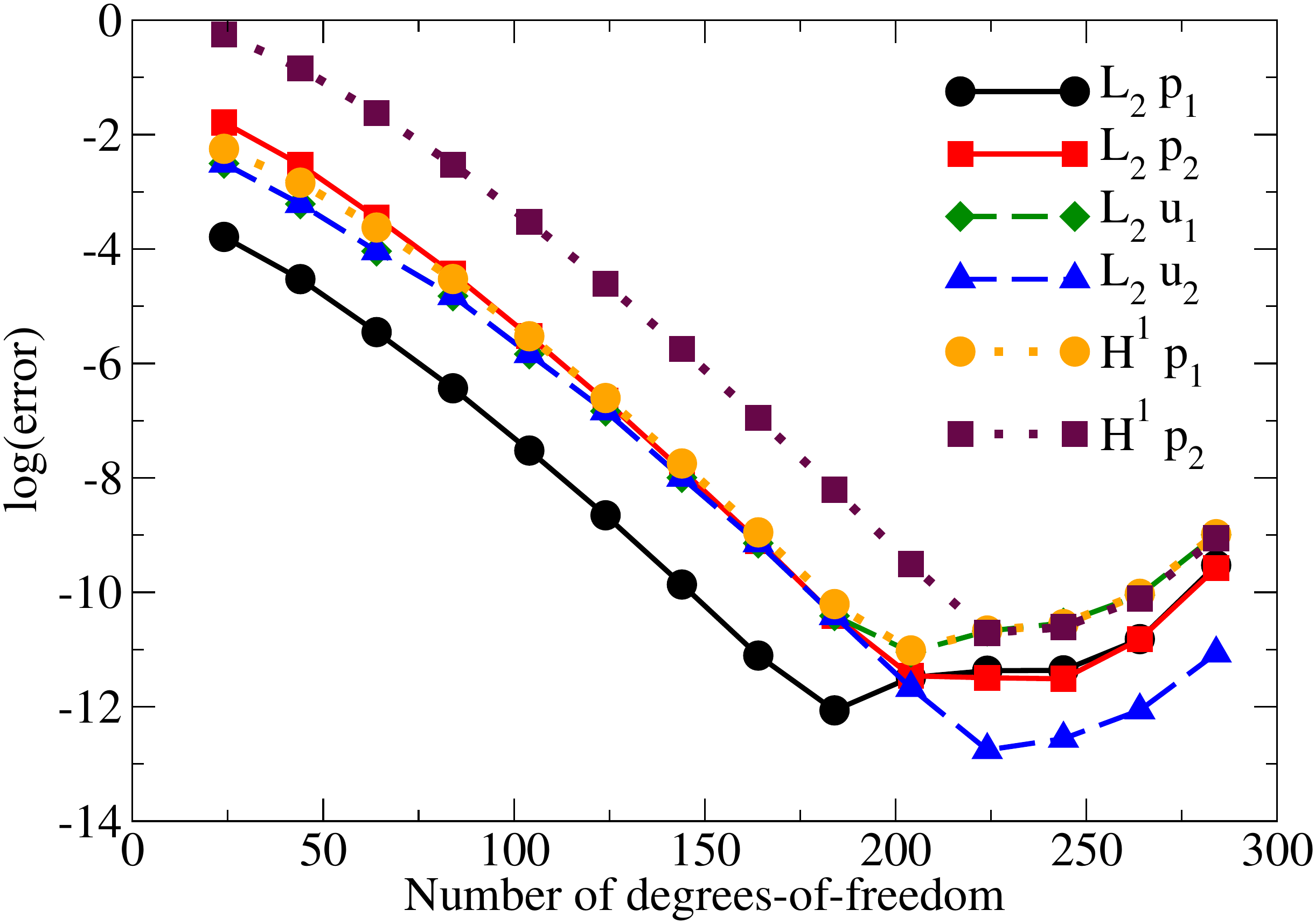}

    \emph{This figure shows that the rates of convergence 
    under the proposed stabilized mixed formulation is 
    exponential with respect to $p$-refinement, which 
    is in accordance with the theory.}
\end{figure}
  \vfill
  {\scshape 2017} \\
  {\small Computational \& Applied Mechanics Laboratory} \par
\end{titlepage}

%=========================;
%  Abstract of the paper  ;
%=========================;
\begin{abstract}
  The flow of incompressible fluids through 
  porous media plays a crucial role in many 
  technological applications such as enhanced 
  oil recovery and geological carbon-dioxide 
  sequestration. 
  The flow within numerous natural and synthetic 
  porous materials that contain multiple scales of 
  pores cannot be adequately described by the 
  classical Darcy equations. 
  It is for this reason that mathematical models for fluid 
  flow in media with multiple scales of pores have been 
  proposed in the literature. However, these models are 
  analytically intractable for realistic problems. 
  In this paper, a stabilized mixed four-field finite 
  element formulation is presented to study the flow 
  of an incompressible fluid in porous media exhibiting
  double porosity/permeability. The stabilization terms 
  and the stabilization parameters are derived in a 
  mathematically consistent 
  manner, and the computationally convenient equal-order 
  interpolation of all the field variables is shown to 
  be stable. A systematic error analysis is performed on 
  the resulting stabilized weak formulation. Representative 
  problems, patch tests and numerical convergence analyses 
  are performed to illustrate the performance and convergence 
  behavior of the proposed mixed formulation in the discrete 
  setting.
  The accuracy of numerical solutions is assessed using the 
  mathematical properties satisfied by the solutions of this 
  double porosity/permeability model. Moreover, it is shown 
  that the proposed framework can perform well under transient 
  conditions and that it can capture well-known instabilities 
  such as viscous fingering.
  \end{abstract}

\maketitle

\vspace{-0.4in}

%==================================;
%  Include all the sections below  ;
%==================================;
%*****************************************************;
%                                                     ;
%  NAME                                               ;
%    S1_VMS_Introduction.tex                          ;
%                                                     ;
%  WRITTEN BY                                         ;
%    Kalyana Babu Nakshatrala                         ;
%    Seyedeh Hanie Seyed Joodat                       ;
%                                                     ;
%*****************************************************;
\section{INTRODUCTION}
Fluid flow in porous media has been extensively
studied, both theoretically and computationally, because of 
its broad applications in different branches of science
and engineering. The most popular model of
flow of an incompressible fluid in rigid porous media
is the Darcy model, which is based on the assumption
that the domain contains only one pore-network.
Due to the restricting assumptions in the classical
Darcy model \citep{rajagopal2007hierarchy,Nakshatrala_Rajagopal_2011, Chang_Nakshatrala_2017}, its application has been
limited and several modifications and
alternative models have been proposed that predict more
realistic flow behaviors.
In particular, due to the complexity of the pore-structure
in many geo-materials such as shale, many studies have
focused on developing mathematical models and computational
frameworks that consider the presence of two (or more)
dominant pore-networks exhibiting different hydro-mechanical
properties. Some of the recent studies on multiple pore-networks
include \citep{Borja_Koliji_2009,Choo_White_Borja_2015_IJG}.

The mathematical models pertaining to the flow in porous
media with multiple pore-networks are complex and involve
numerous field variables. It is not always possible to
derive analytical solutions to these mathematical models,
and one has to resort to numerical solutions for realistic 
problems. 
Different approaches are available for developing
formulations for multi-field mathematical models.
Mixed finite element formulations, which offer
the flexibility of using different approximations
for different field variables, are particularly
attractive for multi-field problems. Accurate
numerical solutions have been obtained using
mixed finite element for various porous media
models; for example, see 
\citep{Masud_Hughes_2002,badia2010stabilized,
  Nakshatrala_Turner_Hjelmstad_Masud_CMAME_2006_v195_p4036,
  Nakshatrala_Rajagopal_2011,Choo_Borja_2015stabilized}.
Moreover, many of the mathematical models pertaining
to the multiple pore-networks, and in particular,
the mathematical model considered in this paper,
cannot be written in terms of a single-field variable.
Although mixed methods are considered a powerful tool,
especially for modeling flow problems in porous media,
they suffer from some restrictions. To obtain stable 
and convergent solutions, a mixed formulation should 
satisfy the Ladyzhenskaya-Babu\v ska-Brezzi
(LBB) stability condition \citep{Babuvska1973finite,
Brezzi_Fortin}. Numerical instability of the solution 
and probable spurious oscillations in the profile of 
unknown variables are the main consequences of the
violation of this condition. Such drawbacks are observed 
in many of the existing formulations and highlight the 
need for developing more robust computational frameworks.
%
%\subsection{LBB condition:~H(div) elements and stabilized formulations}
In order to resolve numerical instabilities resulting from violation of
the LBB condition, computational approaches are divided broadly into two classes \citep{Franca_Hughes_CMAME_1988_v69_p89}: those that satisfy the LBB condition and those that circumvent it. 

In the former approach, elements are developed by placing restrictions on the interpolation spaces so as to satisfy the LBB condition under the classical mixed (Galerkin) formulation. Such elements are collectively referred to as the H(div) elements \citep{Brezzi_Fortin,brezzi2008mixed}. Two popular works of this type are Raviart-Thomas (RT) spaces \citep{Raviart_Thomas_MAFEM_1977_p292}, and Brezzi-Douglas-Marini (BDM) spaces \citep{Brezzi_Douglas_Marini_NumerMath_1985_v47_p217,Brezzi_Douglas_Durran_Marini_NumerMath_1987_v51_p237}. 
The class of stabilized methods, which falls under the latter approach,
is an attractive way of circumventing the LBB condition. In a stabilized
formulation, stabilization terms are augmented to the classical mixed
formulation to avoid a saddle-point problem as well as mathematical
instabilities \citep{hughes2004multiscale}.
Various stabilized formulations have
been published for flow problems (e.g., see
\citep{badia2009unified,brooks1982streamline,hughes2000large,
  Turner_Nakshatrala_Hjelmstad_IJNMF_2009_v60_p1291}) and
for flow problems in porous media, in particular, (e.g.,
see \citep{Masud_Hughes_2002,badia2010stabilized,
  Nakshatrala_Turner_Hjelmstad_Masud_CMAME_2006_v195_p4036,Choo_Borja_2015stabilized}). 

Herein, we develop a stabilized mixed formulation of the double porosity/permeability model
proposed by \citep{Nakshatrala_Joodat_Ballarini}.
The stabilization terms and the
stabilization parameter are derived in a mathematically consistent
manner by appealing to the variational multiscale formalism
\citep{Hughes_1995}.
It is noteworthy that the nodal-based equal-order
interpolation for all the field variables is stable
under the proposed stabilized mixed formulation.
Such a feature for interpolations is particularly
desirable for studies in porous media for two
reasons. The obvious reason is that the equal-order
interpolation is computationally the most convenient.
The second reason is that, in many porous media
applications, the flow and transport equations
are coupled (Section \ref{Sec:S8_2_VMS_Coupled}
of this paper deals with such a coupled problem).
But many existing formulations (including the stabilized
formulations) produce non-physical negative solutions
for the transport equations (i.e., a negative value
for concentration fields), especially when the
diffusion/dispersion is anisotropic
\citep{nagarajan2011enforcing}. The known robust
non-negative finite element based formulations
for the transport equations are nodal-based (e.g.,
refer to \citep{nagarajan2011enforcing,mudunuru2016enforcing}).
By choosing nodal-based unknowns even for the flow
problem, one can avoid projections from nodal to
non-nodal interpolation spaces and vice-versa.

To determine whether a computational framework is robust, systematic convergence and error
analyses are required.
To this end, we first perform a mathematically
rigorous stability analysis of the proposed
stabilized mixed formulation. Since the proposed
formulation is residual-based, consistency is 
shown quite easily. We also present patch tests
and representative numerical results to
show that the obtained numerical results are stable.
After establishing the stability of the proposed
formulation, we perform a thorough accuracy assessment
of the approximations by estimating the error
associated with the numerical solutions.
Specifically, we perform both \emph{a priori}
and \textit{a posteriori} error estimations,
which individually serve different purposes
\citep{babuska2010finite}.
\textit{A posteriori} error estimations monitor
different forms of the error in the numerical
solution \citep{Becker_Rannacher_2001,babuvska2001finite}
and using the computed approximate solution, they provide
an estimate of the form $ \|u − u_h\| \leq \epsilon$,
where $u$ is the solution, $u_h$ is the finite element
solution for a mesh with mesh size $h$, $\|\cdot\|$
denotes an appropriate norm, and $\epsilon$ is a
constant (real) number.
On the other hand, \textit{a priori} error
estimations provide us with the order of
convergence of a given finite element
method \citep{Ainsworth_Oden_1997}.

%====================================================;
%  A case for mechanics-based solution verification  ;
%====================================================;
\citep{2016_Shabouei_Nakshatrala_CiCP} have shown that
porous media models such as those defined by the Darcy 
and Darcy-Brinkman equations satisfy certain mechanics-based 
properties, and they have utilized these properties to construct
solution verification procedures.
Recently, \citep{Nakshatrala_Joodat_Ballarini}
have shown that the double porosity/permeability model 
also enjoys properties with strong mechanics underpinning. 
These include
the minimum dissipation theorem and a reciprocal
relation. Herein, we utilize these mechanics-based
properties to construct \emph{a posteriori} solution
verification procedures to assess the accuracy of
numerical solutions obtained under the proposed
formulation for the double porosity/permeability
model.

%====================;
%  Gibbs phenomenon  ;
%====================;
Another type of numerical instability, known as Gibbs phenomenon, can also be observed in the numerical solutions of problems associated with flow through porous media with disparate properties. In layered porous domains, conventional continuous finite element methods are not capable of capturing abrupt changes in material properties and result in overshoots and undershoots in the profiles of numerical solutions along the interface of layers where there are jump discontinuities.
In order to eliminate such erroneous oscillations, one possible approach is discontinuous Galerkin (DG) methods. DG methods have been successfully employed by \citep{Hughes_Masud_Wan_2006} for the case of Darcy equations. An extension of the proposed framework using discontinuous Galerkin method for double porosity/permeability model can be obtained using a method similar to the one proposed by \citep{Hughes_Masud_Wan_2006}. However, obtaining such an extension and comparison between the performance of continuous and discontinuous formulations for capturing abrupt changes in material properties are beyond the scope of this paper and will be addressed in a subsequent one. 

%==================================;
%  A case for transient extension  ;
%==================================;
A common assumption in models of flow
in porous media is that of steady-state
conditions. However, many flows occurring in
porous media such as aquifers and oil-bearing
strata are transient or unsteady in nature. In this
paper, we extend the proposed stabilized mixed formulation
for the double porosity/permeability mathematical model to
the transient case, and we illustrate this extension can
accurately capture the transient flow characteristics.

Recently, it has been shown that some stabilized
methods (which are primarily designed to suppress
numerical instabilities) when applied to solve
problems with physical instabilities, suppress
both types of instabilities
\citep{Shabouei_Nakshatrala_VF,Shabouei_PhDThesis_UH_2016}.
Therefore, a good test of the
proposed stabilized mixed formulation for
a coupled flow and transport problem involves a problem that exhibits a physical instability similar to the
classical Saffman-Taylor instability
\citep{Saffman_1958}. Using numerical
simulations we show that the proposed
formulation suppresses only the spurious
numerical instabilities while capturing 
the underlying physical instability. 

%===========================;
%  An outline of the paper  ;
%===========================;
The rest of this paper is organized as follows.
After an outline of the governing equations of
the double porosity/permeability model in Section
\ref{Sec:S2_VMS_GE}, the corresponding stabilized
mixed formulation is presented in Section
\ref{Sec:S3_VMS_Mixed} with a derivation
provided in Appendix \ref{Sec:VMS_Appendix_A}.
The theoretical convergence analysis for the proposed stabilized mixed formulation is presented in Section \ref{Sec:S4_VMS_Theoretical}, followed by the numerical convergence behavior of the elements presented in Section \ref{Sec:S5_VMS_Canonical} where patch tests in one- and three-dimensional spaces are described. The representative numerical results are used to showcase the performance of the proposed mixed formulation in Section \ref{Sec:S6_VMS_NR}. Section \ref{Sec:S7_VMS_Verification} provides the mechanics-based assessment of the numerical accuracy. The transient analysis and the capability of the computational framework for modeling coupled problems and capturing well-known physical instabilities in fluid mechanics are discussed in Sections \ref{Sec:S8_VMS_Transient} and \ref{Sec:S8_2_VMS_Coupled}.
Finally, conclusions are drawn in Section \ref{Sec:S10_VMS_CR}.

Throughout this paper, repeated indices do not imply
summation. The terms \emph{classical mixed formulation}
and \emph{Galerkin formulation} are used interchangeably.

%*****************************************************;
%                                                     ;
%  NAME                                               ;
%    S2_VMS_GE.tex                                    ;
%                                                     ;
%  WRITTEN BY                                         ;
%    Kalyana Babu Nakshatrala                         ;
%                                                     ;
%*****************************************************;
\section{GOVERNING EQUATIONS FOR DOUBLE POROSITY/PERMEABILITY}
\label{Sec:S2_VMS_GE}
For convenience to the reader and for future
referencing, we document the equations that
govern the double porosity/permeability
mathematical model considered in
\citep{Nakshatrala_Joodat_Ballarini}.
Let $\Omega \subset \mathbb{R}^{nd}$ be a bounded domain,
where ``$nd$'' denotes the number of spatial dimensions.
The boundary of the domain $\partial \Omega$ is assumed
to be piecewise smooth. Mathematically, $\partial \Omega
\equiv \overline{\Omega} - \Omega$, where the superposed
bar denotes the set closure \citep{Evans_PDE}.
A spatial point is denoted by $\mathbf{x} \in \overline{\Omega}$. 
The gradient and divergence operators with respect
to $\mathbf{x}$ are denoted by $\mathrm{grad}[\cdot]$ 
and $\mathrm{div}[\cdot]$, respectively. The unit
outward normal to the boundary is denoted by
$\widehat{\mathbf{n}}(\mathbf{x})$.

The porous domain is assumed to consist of two
dominant pore-networks, which will be referred
to as the macro-pore and micro-pore networks
and are, respectively, denoted by subscripts
$1$ and $2$. These pore-networks are connected
with the possibility of mass exchange between
them.
The pressure field and the discharge (or Darcy) velocity
in the macro-pore network are, respectively, denoted by
$p_{1}(\mathbf{x})$ and $\mathbf{u}_{1}(\mathbf{x})$,
and the corresponding ones in the micro-pore network
are denoted by $p_{2}(\mathbf{x})$ and $\mathbf{u}_{2}
(\mathbf{x})$.
The governing equations under the double porosity/permeability
model take the following form: 
%---------------------------------;
%  Equation: Governing equations  ;
%---------------------------------;
\begin{subequations}
  \begin{alignat}{2}
    \label{Eqn:Dual_GE_Darcy_BLM_1}
    &\mu \mathbf{K}_{1}^{-1} \mathbf{u}_1(\mathbf{x})
    + \mathrm{grad}[p_1] = \gamma \mathbf{b}(\mathbf{x})
    &&\quad \mathrm{in} \; \Omega \\
    \label{Eqn:Dual_GE_Darcy_BLM_2}
    &\mu \mathbf{K}_{2}^{-1} \mathbf{u}_2(\mathbf{x})
    + \mathrm{grad}[p_2] = \gamma \mathbf{b}(\mathbf{x})
    &&\quad \mathrm{in} \; \Omega \\
    \label{Eqn:Dual_GE_Darcy_mass_balance_1}
    &\mathrm{div}[\mathbf{u}_1] = +\chi(\mathbf{x})
    &&\quad \mathrm{in} \; \Omega \\
    \label{Eqn:Dual_GE_Darcy_mass_balance_2}
    &\mathrm{div}[\mathbf{u}_2] = -\chi(\mathbf{x})
    &&\quad \mathrm{in} \; \Omega \\
    \label{Eqn:Dual_GE_vBC_1}
    &\mathbf{u}_1(\mathbf{x}) \cdot
    \widehat{\mathbf{n}}(\mathbf{x})
    = u_{n1}(\mathbf{x})
    &&\quad \mathrm{on} \; \Gamma^{u}_{1} \\
    \label{Eqn:Dual_GE_vBC_2}
    &\mathbf{u}_2(\mathbf{x}) \cdot
    \widehat{\mathbf{n}}(\mathbf{x})
    = u_{n2}(\mathbf{x}) 
    &&\quad \mathrm{on} \; \Gamma^{u}_{2} \\
    \label{Eqn:Dual_GE_Darcy_pBC_1}
    &p_1(\mathbf{x}) = p_{01} (\mathbf{x})
    &&\quad \mathrm{on} \; \Gamma^{p}_{1} \\
    \label{Eqn:Dual_GE_Darcy_pBC_2}
    &p_2(\mathbf{x}) = p_{02} (\mathbf{x})
    &&\quad \mathrm{on} \; \Gamma^{p}_{2} 
  \end{alignat}
\end{subequations}
where $\mathbf{b}(\mathbf{x})$ is the specific
body force. The true density and coefficient
of viscosity of the fluid are, respectively,
denoted by $\gamma$ and $\mu$. $\mathbf{K}_{i} 
(\mathbf{x})$ denotes the permeability tensor for macro-pore ($i=1$) and micro-pore ($i=2$) networks. $\Gamma_{i}^{u}$ denotes the part
of the boundary on which the normal component of the velocity is 
prescribed in the macro-pore ($i=1$) and micro-pore ($i=2$) networks. Similarly, $\Gamma_{i}^{p}$ is that part of 
the boundary on which the pressure is prescribed in the macro-pore ($i=1$) and micro-pore ($i=2$) networks. $p_{01}(\mathbf{x})$ and $p_{02}(\mathbf{x})$ denote the prescribed pressures on $\Gamma_{1}^{p}$ and $\Gamma_{2}^{p}$, respectively. $u_{n1}(\mathbf{x})$ and $u_{n2}(\mathbf{x})$ denote the prescribed normal components of the velocities on $\Gamma_{1}^{u}$ and $\Gamma_{2}^{u}$, respectively. 
$\chi(\mathbf{x})$ is the rate of volume exchange
of the fluid between the two pore-networks per
unit volume of the porous medium, and we model
it as follows
\citep{Barenblatt_Zheltov_Kochina_v24_p1286_1960_ZAMM}: 
%-----------------------------------------;
%  Equation: Mass transfer between pores  ;
%-----------------------------------------;
\begin{align}
\label{Eqn:Dual_GE_mass_transfer}
  \chi(\mathbf{x}) = - \frac{\beta}{\mu}
  (p_1(\mathbf{x}) - p_2(\mathbf{x}))
\end{align}
where $\beta$ is a dimensionless characteristic
of the porous medium. In the rest of the paper, 
as is commonly done in the literature, $\chi(
\mathbf{x})$ will be simply referred to as the
mass transfer. 
For mathematical well-posedness, we assume that
%----------------------------;
%  Equation: Well-posedness  ;
%----------------------------;
\begin{align}
  \Gamma_{1}^{u} \cup \Gamma_{1}^{p} = \partial \Omega, \quad 
  \Gamma_{1}^{u} \cap \Gamma_{1}^{p} = \emptyset, \quad 
  \Gamma_{2}^{u} \cup \Gamma_{2}^{p} = \partial \Omega
  \quad \mathrm{and} \quad 
  \Gamma_{2}^{u} \cap \Gamma_{2}^{p} = \emptyset
\end{align}

%****************************************;
%                                        ;
%  NAME                                  ;
%    S3_VMS_Mixed_Weak.tex               ;
%                                        ;
%  WRITTEN BY                            ;
%    Kalyana Babu Nakshatrala            ;
%    Hanie Joodat                        ;
%                                        ;
%****************************************;
\section{A STABILIZED MIXED WEAK FORMULATION}
\label{Sec:S3_VMS_Mixed}
In this section, we present the proposed
stabilized mixed formulation for the double
porosity/permeability model. A derivation
of the proposed formulation is provided in
Appendix \ref{Sec:VMS_Appendix_A}.
The proposed formulation
  is built upon the stabilization ideas
  put forth in a pioneering paper by
  \citep{Masud_Hughes_2002}. The proposed
  formulation for double porosity/permeability
  model can be obtained by adding a stabilization
  term, similar to the 
  one proposed by \citep{Masud_Hughes_2002}
  for the case of single-pore network Darcy equations,
  to each pore-network.
  The stabilization terms are based on the
  residual of the balance of linear momentum
  in each pore-network. The stability can be achieved without
  adding residual-based stabilization terms related
  to the mass balance equations for any of the
  pore-networks.
We also present an extension of the proposed formulation
for enforcing the velocity boundary conditions weakly,
which will be convenient for problems involving curved
boundaries. This extension is achieved by employing
a procedure similar to the one proposed by 
\citep{Nitsche_1971}.

We define the relevant function spaces, which
will be used in the rest of this paper. We
denote the set of all square-integrable
functions on $\Omega$ by $L_{2}(\Omega)$.
For mathematical well-posedness, we assume
that
%----------------------------;
%  Equation: Well-posedness  ;
%----------------------------;
\begin{align}
  \gamma \mathbf{b}(\mathbf{x}) \in (L_2(\Omega))^{nd}, \;
  p_{01}(\mathbf{x}) \in H^{1/2}(\Gamma_{1}^{p})
  \quad \mathrm{and} \quad  
  p_{02}(\mathbf{x}) \in H^{1/2}(\Gamma_{2}^{p}) 
\end{align}
where $H^{1/2}(\cdot)$ is a non-integer
Sobolev space \citep{Adams_Sobolev}.
The function spaces for the velocity
and pressures fields are defined as
follows:
%-----------------------------;
%  Equation: Function spaces  ;
%-----------------------------;
\begin{subequations}
  \begin{align}
    \label{Eqn:VMS_Function_space_U1}
    \mathcal{U}_{1} &:= 
    \left\{\mathbf{u}_{1}(\mathbf{x}) \in 
    \left(L_{2}(\Omega)\right)^{nd} 
    \; \Big\vert \;
    \mathrm{div}[\mathbf{u}_{1}] \in L_{2}(\Omega), 
    \mathbf{u}_{1}(\mathbf{x}) \cdot \widehat{\mathbf{n}}(\mathbf{x}) 
    = u_{n1}(\mathbf{x}) \in H^{-1/2}(\Gamma_{1}^{u})\right\} \\
    \mathcal{U}_{2} &:= 
    \left\{\mathbf{u}_{2}(\mathbf{x}) \in 
    \left(L_{2}(\Omega)\right)^{nd} 
    \; \Big\vert \;
    \mathrm{div}[\mathbf{u}_{2}] \in L_{2}(\Omega), 
    \mathbf{u}_{2}(\mathbf{x}) \cdot \widehat{\mathbf{n}}(\mathbf{x}) 
    = u_{n2}(\mathbf{x}) \in H^{-1/2}(\Gamma_{2}^{u})\right\} \\
    \mathcal{W}_{1} &:= 
    \left\{\mathbf{w}_{1}(\mathbf{x}) \in 
    \left(L_{2}(\Omega)\right)^{nd} 
    \; \Big\vert \;
    \mathrm{div}[\mathbf{w}_{1}] \in L_{2}(\Omega), 
    \mathbf{w}_{1}(\mathbf{x}) \cdot \widehat{\mathbf{n}}(\mathbf{x}) 
    = 0 \; \mathrm{on} \; \Gamma_{1}^{u} \right\} \\
    \mathcal{W}_{2} &:= 
    \left\{\mathbf{w}_{2}(\mathbf{x}) \in 
    \left(L_{2}(\Omega)\right)^{nd} 
    \; \Big\vert \;
    \mathrm{div}[\mathbf{w}_{2}] \in L_{2}(\Omega), 
    \mathbf{w}_{2}(\mathbf{x}) \cdot \widehat{\mathbf{n}}(\mathbf{x}) 
    = 0 \; \mathrm{on} \; \Gamma_{2}^{u}\right\} \\
    \mathcal{P} &:= 
    \left\{(p_1(\mathbf{x}),p_{2}(\mathbf{x})) 
    \in L_{2}(\Omega) \times L_{2}(\Omega)
    \; \Big\vert \;
    \left(\int_{\Omega} p_{1}(\mathbf{x}) \mathrm{d} \Omega \right) 
    \left(\int_{\Omega} p_{2}(\mathbf{x}) \mathrm{d} \Omega \right) 
    = 0 \right\} \\
    \label{Eqn:VMS_Function_space_Q}
    \mathcal{Q} &:= 
    \left\{(p_1(\mathbf{x}),p_{2}(\mathbf{x})) 
    \in H^{1}(\Omega) \times H^{1}(\Omega)
    \; \Big\vert \;
    \left(\int_{\Omega} p_{1}(\mathbf{x}) \mathrm{d} \Omega \right) 
    \left(\int_{\Omega} p_{2}(\mathbf{x}) \mathrm{d} \Omega \right) 
    = 0 \right\}
  \end{align}
\end{subequations}
where $H^{1}(\Omega)$ is a standard Sobolev space,
and $H^{-1/2}(\cdot)$ is the dual space corresponding
to $H^{1/2}(\cdot)$ \citep{Adams_Sobolev}. The standard
$L_2$ inner-product over a set $K$ is denoted as
%----------------------------------------;
%  Equation: Notation for inner-product  ;
%----------------------------------------;
\begin{align}
  (\mathbf{a};\mathbf{b})_{K} \equiv \int_{K}
  \mathbf{a}(\mathbf{x})\cdot \mathbf{b}
  (\mathbf{x}) \; \mathrm{d} K 
\end{align}
For convenience, the subscript $K$ will be
dropped if $K = \Omega$. Moreover, the action of a
linear functional on a vector from its
associated vector space is denoted by
$\langle \cdot; \cdot\rangle$. 

A few remarks are needed regarding the
following condition on the pressures in
the function spaces $\mathcal{P}$ and
$\mathcal{Q}$:
%--------------------------------;
%  Equation: Pressure condition  ;
%--------------------------------;
\begin{align*}
  \left(\int_{\Omega} p_1(\mathbf{x})
  \mathrm{d} \Omega \right)
  \left(\int_{\Omega} p_2(\mathbf{x})
  \mathrm{d} \Omega \right) = 0 
\end{align*}
This condition of vanishing mean pressure
in one of pore-networks is a mathematically
elegant way of fixing the datum for the
pressure. Without fixing the datum for
the pressure (which will be the case
when only the velocity boundary conditions
are prescribed on the entire boundary), one
can find the pressures only up to an arbitrary
constant, which will be the case even under Darcy
equations
\citep{Nakshatrala_Turner_Hjelmstad_Masud_CMAME_2006_v195_p4036}.
Herein, we introduced the vanishing mean
pressure condition into the function spaces
to ensure uniqueness of the solutions, which
will be established later in this paper.
However, it should be emphasized that
vanishing mean pressure in one of the
pore-networks is \emph{not} necessary
for all the problems under the double
porosity/permeability model. 
One can fix the datum for the pressure
under the double porosity/permeability
model by prescribing the pressure in
at least one of the pore-networks on
a portion of the boundary, which is a
set of non-zero measure. To put it
differently, for problems with
pressure boundary conditions, the
datum for the pressure is automatically
fixed through the prescribed boundary
condition, and hence, for those problems,
one does not include the zero mean
pressure condition in the function
spaces $\mathcal{P}$ and $\mathcal{Q}$. 
For example, see the problem in
subsection \ref{Sub:One-dimensional_patch_test},
which deals with prescribed pressure boundary
conditions.

%===============================;
%  Classical mixed formulation  ;
%===============================;
The \emph{classical mixed formulation}, which is
based on the Galerkin formalism, reads as follows:~Find
$\left(\mathbf{u}_1(\mathbf{x}),\mathbf{u}_2(\mathbf{x})
\right)\in \mathcal{U}_1 \times \mathcal{U}_2$
and $\left(p_1(\mathbf{x}), p_2(\mathbf{x})\right)
\in \mathcal{P}$ such that we have
%--------------------------------l
%  Equation: Galerkin weak form  ;
%--------------------------------;
\begin{align}
\label{Eqn:VMS_classical_mixed_formulation}
  \mathcal{B}_{\mathrm{Gal}}(\mathbf{w}_1,\mathbf{w}_2,q_1,q_2;
  \mathbf{u}_1,\mathbf{u}_2,p_1,p_2) = \mathcal{L}_{\mathrm{Gal}}
  (\mathbf{w}_1,\mathbf{w}_2,q_1,q_2) \nonumber \\
  \quad \forall \left(\mathbf{w}_1(\mathbf{x}),
  \mathbf{w}_{2}(\mathbf{x})\right) \in
  \mathcal{W}_1 \times \mathcal{W}_2,~
  \left(q_1(\mathbf{x}),~q_2(\mathbf{x})\right) \in \mathcal{P} 
\end{align}
where the bilinear form and the linear functional 
are, respectively, defined as follows:
%-------------------------------------------------------;
%  Equation: Galerkin -- Bilinear form and linear form  ;
%-------------------------------------------------------;
\begin{alignat}{2}
  \label{Eqn:VMS_bilinear_form}
  \mathcal{B}_{\mathrm{Gal}}(\mathbf{w}_1,\mathbf{w}_2,q_1,q_2;
  \mathbf{u}_1,\mathbf{u}_2,p_1,p_2) &:= (\mathbf{w}_1;\mu \mathbf{K}_{1}^{-1}\mathbf{u}_1)
  - (\mathrm{div}[\mathbf{w}_1];p_1)
  + (q_1;\mathrm{div}[\mathbf{u}_1]) \nonumber \\
  &~+ (\mathbf{w}_2;\mu \mathbf{K}_{2}^{-1}\mathbf{u}_2)
  - (\mathrm{div}[\mathbf{w}_2];p_2)
  + (q_2;\mathrm{div}[\mathbf{u}_2]) \nonumber \\
  &~ + (q_1 - q_2;\beta/\mu(p_1 - p_2)) \\
  \label{Eqn:VMS_linear_form}
  \mathcal{L}_{\mathrm{Gal}}(\mathbf{w}_1,\mathbf{w}_2,q_1,q_2) := (\mathbf{w}_1;\gamma \mathbf{b})
  &~+ (\mathbf{w}_2;\gamma \mathbf{b})
  - \langle\mathbf{w}_1 \cdot \widehat{\mathbf{n}};p_{01}
    \rangle_{\Gamma^{\mathrm{p}}_{1}} 
    - \langle\mathbf{w}_2 \cdot \widehat{\mathbf{n}};p_{02}
    \rangle_{\Gamma^{\mathrm{p}}_{2}} 
\end{alignat}

In a subsequent section, we will show that the
equal-order interpolation for all the variables,
which is computationally the most convenient, is
not stable under the classical mixed formulation.
Of course, one could use divergence-free elements
(e.g., Raviart-Thomas spaces
\citep{Raviart_Thomas_MAFEM_1977_p292}) but they need
special data structures and computer implementations.
We, therefore, present a stabilized mixed formulation,
which is stable under the equal-order interpolation
for all the field variables. 

%======================================;
%  Variational multiscale formulation  ;
%======================================;
\begin{tcolorbox}[breakable]
The \emph{proposed stabilized mixed formulation}
reads as follows:~Find $\left(\mathbf{u}_1(\mathbf{x}),
\mathbf{u}_2(\mathbf{x}) \right) \in \mathcal{U}_1
\times \mathcal{U}_2$ and $\left(p_1(\mathbf{x}),
p_2(\mathbf{x})\right) \in \mathcal{Q}$ such that we have
%---------------------------;
%  Equation: VMS weak form  ;
%---------------------------;
\begin{align}
  \mathcal{B}_{\mathrm{stab}}(\mathbf{w}_1,\mathbf{w}_2,q_1,q_2;
  \mathbf{u}_1,\mathbf{u}_2,p_1,p_2)
  = \mathcal{L}_{\mathrm{stab}}(\mathbf{w}_1,\mathbf{w}_2,q_1,q_2) \nonumber \\
  \quad \forall
  \left(\mathbf{w}_1(\mathbf{x}), \mathbf{w}_2(\mathbf{x})\right) ~
  \in \mathcal{W}_1 \times \mathcal{W}_2,
  \left(q_1(\mathbf{x}),q_2(\mathbf{x})\right)
  \in \mathcal{Q} 
\label{Eqn:VMS_Galerkin_Weak_Form} 
\end{align}
where the bilinear form and the linear functional
are, respectively, defined as follows:
%-------------------------------------------------;
%  Equation: Bilinear form and linear functional  ;
%-------------------------------------------------;
\begin{align}
  \label{Eqn:Dual_B_VMS}
  \mathcal{B}_{\mathrm{stab}}(\mathbf{w}_1,\mathbf{w}_2,q_1,q_2;
  \mathbf{u}_1,\mathbf{u}_2,p_1,p_2)
  := \mathcal{B}_{\mathrm{Gal}}(\mathbf{w}_1,\mathbf{w}_2,q_1,q_2;
  \mathbf{u}_1,\mathbf{u}_2,p_1,p_2) \nonumber \\
  -\frac{1}{2} \left(\mu \mathbf{K}_1^{-1}
  \mathbf{w}_1 - \mathrm{grad}[q_1];\frac{1}{\mu}
  \mathbf{K}_1 (\mu \mathbf{K}_1^{-1}
  \mathbf{u}_1 + \mathrm{grad}[p_1])\right)
  \nonumber \\
  -\frac{1}{2} \left(\mu \mathbf{K}_2^{-1} \mathbf{w}_2
  - \mathrm{grad}[q_2]; \frac{1}{\mu} \mathbf{K}_2
  (\mu \mathbf{K}_2^{-1}
  \mathbf{u}_2 + \mathrm{grad}[p_2])\right)
\end{align}
\begin{align}
  \label{Eqn:Dual_L_VMS}
  \mathcal{L}_{\mathrm{stab}}(\mathbf{w}_1,\mathbf{w}_2,q_1,q_2)
  &:= \mathcal{L}_{\mathrm{Gal}}(\mathbf{w}_1,\mathbf{w}_2,q_1,q_2)
  -\frac{1}{2} \left(\mu \mathbf{K}_1^{-1}
  \mathbf{w}_1 - \mathrm{grad}[q_1];
  \frac{1}{\mu} \mathbf{K}_1 \gamma \mathbf{b}\right) \nonumber \\
  &-\frac{1}{2} \left(\mu \mathbf{K}_2^{-1} \mathbf{w}_2
  - \mathrm{grad}[q_2]; \frac{1}{\mu} \mathbf{K}_2
  \gamma \mathbf{b}\right)
\end{align}
\end{tcolorbox}

In subsequent sections, we show that the proposed
stabilized mixed formulation is consistent, stable
and accurate.

%==============================;
%  Subsection: Nitsche Method  ;
%==============================;
\subsection{Weak enforcement of velocity boundary conditions}
In the previous derivations made earlier in this section, the pressure boundary conditions (i.e., equations
\eqref{Eqn:Dual_GE_Darcy_pBC_1} and
\eqref{Eqn:Dual_GE_Darcy_pBC_2}) are
enforced weakly under the proposed stabilized
mixed formulation and the classical mixed
formulation. However, the velocity boundary
conditions in which the normal components
of the velocities are prescribed (i.e.,
equations \eqref{Eqn:Dual_GE_vBC_1} and
\eqref{Eqn:Dual_GE_vBC_2}) are enforced
strongly. For domains with curved
boundaries, which are commonly encountered
in subsurface modeling, it is desirable to even
prescribe the velocity boundary
conditions weakly. We, therefore, provide a possible
extension of the proposed stabilized mixed
formulation for weak enforcement of the velocity
boundary conditions. To this end, we follow the
approach proposed by \citep{Nitsche_1971}. The
Nitsche's method is a powerful tool for weakly
enforcing Dirichlet boundary conditions without
the use of Lagrange multipliers, and has been
utilized by several works such as
\citep{bazilevs2007weak,Embar_Folbow_Harari_2010,
  annavarapu2014nitsche,Schillinger_Harari_2016}.
The Nitsche's method is sometimes referred to
as a variationally consistent penalty method
to enforce Dirichlet boundary conditions
\citep{Hansbo_2005}.
We extend the Nitsche's method to the proposed
four-field stabilized formulation to enforce
the prescribed normal components of the
velocities in the macro- and micro-pore
networks. 

The stabilized mixed formulation that enforces the
velocity boundary conditions weakly can be obtained
as follows:~Find $\left(\mathbf{u}_1(\mathbf{x}),
\mathbf{u}_2(\mathbf{x}) \right) \in H(\mathrm{div},
\Omega) \times H(\mathrm{div},\Omega)$ and
$\left(p_1(\mathbf{x}),p_2(\mathbf{x})\right)
\in \mathcal{Q}$ such that we have
%------------------------------;
%  Equation: Nitsche's Method  ;
%------------------------------;
\begin{align}
  \mathcal{B}_{\mathrm{stab}}^{\mathrm{weak~B.C.}}(\mathbf{w}_1,\mathbf{w}_2,q_1,q_2;
  \mathbf{u}_1,\mathbf{u}_2,p_1,p_2)
  = \mathcal{L}_{\mathrm{stab}}^{\mathrm{weak~B.C.}}(\mathbf{w}_1,\mathbf{w}_2,q_1,q_2) \nonumber \\
  \quad \forall
  \left(\mathbf{w}_1(\mathbf{x}), \mathbf{w}_2(\mathbf{x})\right) ~
  \in H(\mathrm{div},\Omega) \times H(\mathrm{div},\Omega),
  \; \left(q_1(\mathbf{x}),q_2(\mathbf{x})\right)
  \in \mathcal{Q} 
\label{Eqn:Nitsche_Weak_Form} 
\end{align}
where the bilinear form and the linear functional are, respectively, defined as follows:
\begin{subequations}
\begin{alignat}{2}
  \label{Eqn:Dual_B_Nitsche}
  \mathcal{B}_{\mathrm{stab}}^{\mathrm{weak~B.C.}}(\mathbf{w}_1,\mathbf{w}_2,q_1,q_2; \mathbf{u}_1,\mathbf{u}_2,p_1,p_2) &:= \mathcal{B}_{\mathrm{stab}}(\mathbf{w}_1,\mathbf{w}_2,q_1,q_2; \mathbf{u}_1,\mathbf{u}_2,p_1,p_2) \nonumber \\
  &~+ (\mathbf{w}_{1}\cdot\widehat{\mathbf{n}};p_{1})_{\Gamma^{u}_{1}}
    + (\mathbf{w}_{2}\cdot\widehat{\mathbf{n}};p_{2})_{\Gamma^{u}_{2}}\nonumber \\
    &~+ (q_{1};\mathbf{u}_{1}\cdot\widehat{\mathbf{n}})_{\Gamma^{u}_{1}}  
    + (q_{2};\mathbf{u}_{2}\cdot\widehat{\mathbf{n}})_{\Gamma^{u}_{2}} \nonumber \\
   &~+ \frac{\eta}{h}(\mathbf{w}_{1}\cdot\widehat{\mathbf{n}};\mathbf{u}_{1}\cdot\widehat{\mathbf{n}})_{\Gamma^{u}_{1}} 
   + \frac{\eta}{h}(\mathbf{w}_{2}\cdot\widehat{\mathbf{n}};\mathbf{u}_{2}\cdot\widehat{\mathbf{n}})_{\Gamma^{u}_{2}}  \\
  \label{Eqn:Dual_L_Nitsche}
  \mathcal{L}_{\mathrm{stab}}^{\mathrm{weak~B.C.}}(\mathbf{w}_1,\mathbf{w}_2,q_1,q_2) := \mathcal{L}_{\mathrm{stab}}(\mathbf{w}_1,&\mathbf{w}_2,q_1,q_2) + (q_{1};u_{n1})_{\Gamma^{u}_{1}}  + (q_{2};u_{n2})_{\Gamma^{u}_{2}} \nonumber \\
  &+ \frac{\eta}{h}(\mathbf{w}_{1}\cdot\widehat{\mathbf{n}};u_{n1})_{\Gamma^{u}_{1}} 
  + \frac{\eta}{h}(\mathbf{w}_{2}\cdot\widehat{\mathbf{n}};u_{n2})_{\Gamma^{u}_{2}} 
\end{alignat} 
\end{subequations}
where $h$ is the mesh size and 
$\eta$ is the penalty parameter. In this
paper, we have taken $h$ to be the maximum
edge length in the mesh, and have taken the
penalty parameter to be 10.
In the above statement of the weak formulation,
since the velocity boundary conditions are
enforced weakly, the appropriate function
space for the velocities and the associated
weighting functions will be $H(\mathrm{div},\Omega)$,
which can be mathematically defined as follows: 
%---------------------------------;
%  Equation: Hdiv function space  ;
%---------------------------------;
\begin{align}
  H(\mathrm{div},\Omega) := 
\left\{\mathbf{u}(\mathbf{x}) \in 
\left(L_{2}(\Omega)\right)^{nd} 
\; \Big\vert \;
\mathrm{div}[\mathbf{u}] \in L_{2}(\Omega)\right\} 
\end{align}
The function spaces for the pressures and their
weighting functions, however, remain same as
before (i.e., the $\mathcal{Q}$ space).

%***********************************************;
%                                               ;
%  NAME                                         ;
%    S4_VMS_Theoretical_convergence.tex         ;
%                                               ;
%  WRITTEN BY                                   ;
%    Kalyana Babu Nakshatrala                   ;
%    Hanie Joodat                               ;
%                                               ;
%***********************************************;
\section{A THEORETICAL ANALYSIS OF THE PROPOSED MIXED FORMULATION}
\label{Sec:S4_VMS_Theoretical}
In this section, we present a systematic mathematical
analysis (i.e., existence, uniqueness and well-posedness)
and error analysis (i.e., consistency, stability, order 
of convergence) of the proposed stabilized mixed formulation. 
For convenience, we define the following product spaces:
%----------------------------;
%  Equation: Product spaces  ;
%----------------------------;
\begin{align}
    \mathbb{U} = \mathcal{U}_{1} \times \mathcal{U}_{2} 
    \times \mathcal{Q}, \quad \mathrm{and} \quad
    \mathbb{W} = \mathcal{W}_{1} \times \mathcal{W}_{2} 
    \times \mathcal{Q}
\end{align}
We group the field variables as follows:
%--------------------------------------;
%  Equation: Grouping field variables  ;
%--------------------------------------;
\begin{subequations}
\begin{align}
\label{Eqn:VMS_Uexact}
\mathbf{U} &= (\mathbf{u}_1(\mathbf{x}),
\mathbf{u}_2(\mathbf{x}),p_1(\mathbf{x}),p_2(\mathbf{x})) \in \mathbb{U} \\
\label{Eqn:VMS_Wexact}
\mathbf{W} &= (\mathbf{w}_1(\mathbf{x}),
\mathbf{w}_2(\mathbf{x}),q_1(\mathbf{x}),q_2(\mathbf{x})) \in \mathbb{W}
\end{align}
\end{subequations}
Then, the proposed mixed formulation in equation
\eqref{Eqn:VMS_Galerkin_Weak_Form} can be compactly
written as:~Find $\mathbf{U} \in \mathbb{U}$ such
that we have
%---------------------------------------------;
%  Equation: Compact Form of VMS formulation  ;
%---------------------------------------------;
\begin{align}
\label{Eqn:VMS_Galerkin_Weak_Form2} 
\mathcal{B}_{\mathrm{stab}}(\mathbf{W},\mathbf{U}) 
= \mathcal{L}_{\mathrm{stab}}(\mathbf{W}) \quad 
\forall \mathbf{W} \in \mathbb{W}
\end{align} 
We shall establish the stability of the
formulation under the following norm:
%------------------------------------------;
%  Equation: Definition of stability norm  ;
%------------------------------------------;
\begin{align}
  \|\mathbf{W}\|_{\mathrm{stab}}^{2}
  := \mathcal{B}_{\mathrm{stab}}(\mathbf{W},\mathbf{W}) 
  &= \frac{1}{2}\left\|\sqrt{\mu} \mathbf{K}_1^{-1/2} \mathbf{w}_{1}\right\|^2
  + \frac{1}{2}\left\|\frac{1}{\sqrt{\mu}}\mathbf{K}_1^{1/2} \mathrm{grad}[q_1]\right\|^2
  + \frac{1}{2}\left\|\sqrt{\mu} \mathbf{K}_2^{-1/2} \mathbf{w}_2 \right\|^2 \nonumber \\
  & + \frac{1}{2}\left\|\frac{1}{\sqrt{\mu}}\mathbf{K}_2^{1/2} \mathrm{grad}[q_2]\right\|^2
  + \left\|\sqrt{\left(\frac{\beta}{\mu}\right)} (q_1 - q_2)\right\|^2 \quad 
  \forall \mathbf{W} \in \mathbb{W}
\end{align}
where $\|\cdot\|$ denotes the norm corresponding 
to the standard $L_{2}$ inner-product. We need to first
show that $\|\cdot\|_{\mathrm{stab}}$ is in fact a norm on 
$\mathbb{W}$ and $\mathbb{U}$. To this end, the following
lemma will be used. 

%====================;
%  Lemma: Semi-norm  ;
%--------------------;
\begin{lemma}{(A property of semi-norms)}
\label{Lemma:DD2_semi-norm}
If $\|\cdot\|_{1}$ and $\|\cdot\|_2$ 
are semi-norms, then $\|\cdot\|_{3} := 
\sqrt{\|\cdot\|_1^{2} + \|\cdot\|^2_{2}}$ 
is also a semi-norm. 
\end{lemma}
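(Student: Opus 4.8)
The plan is to verify the two defining properties of a semi-norm for $\|\cdot\|_3$: (i) absolute homogeneity, $\|\alpha x\|_3 = |\alpha|\,\|x\|_3$ for every scalar $\alpha$, and (ii) the triangle inequality, $\|x+y\|_3 \le \|x\|_3 + \|y\|_3$. (Nonnegativity is automatic since $\|\cdot\|_3$ is defined as a square root of a sum of squares.) Property (i) is immediate: since $\|\cdot\|_1$ and $\|\cdot\|_2$ are semi-norms, $\|\alpha x\|_i^2 = |\alpha|^2\|x\|_i^2$, so $\|\alpha x\|_3 = \sqrt{|\alpha|^2(\|x\|_1^2+\|x\|_2^2)} = |\alpha|\,\|x\|_3$. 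The only real content is property (ii).

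For the triangle inequality, I would first apply the triangle inequality for each of $\|\cdot\|_1$ and $\|\cdot\|_2$ separately, giving $\|x+y\|_i \le \|x\|_i + \|y\|_i$ for $i=1,2$. Then
\begin{align*}
\|x+y\|_3^2 = \|x+y\|_1^2 + \|x+y\|_2^2
\le \bigl(\|x\|_1+\|y\|_1\bigr)^2 + \bigl(\|x\|_2+\|y\|_2\bigr)^2.
\end{align*}
Now the task reduces to showing that the right-hand side is bounded by $\bigl(\|x\|_3 + \|y\|_3\bigr)^2 = \bigl(\sqrt{\|x\|_1^2+\|x\|_2^2} + \sqrt{\|y\|_1^2+\|y\|_2^2}\bigr)^2$. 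This is exactly the statement that the Euclidean norm on $\mathbb{R}^2$ satisfies the triangle inequality applied to the vectors $(\|x\|_1,\|x\|_2)$ and $(\|y\|_1,\|y\|_2)$ — equivalently it follows from the Cauchy–Schwarz inequality $\|x\|_1\|y\|_1 + \|x\|_2\|y\|_2 \le \sqrt{\|x\|_1^2+\|x\|_2^2}\,\sqrt{\|y\|_1^2+\|y\|_2^2}$. Expanding both sides and cancelling the common terms $\|x\|_1^2+\|x\|_2^2+\|y\|_1^2+\|y\|_2^2$, the desired inequality is precisely this Cauchy–Schwarz bound.

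**The main (and only mild) obstacle** is keeping the bookkeeping straight: one must combine two triangle inequalities (the ones for $\|\cdot\|_1$, $\|\cdot\|_2$) with the two-dimensional Cauchy–Schwarz inequality, and verify that all the cross-terms line up after expansion. There is nothing deep here — the result is essentially that $\ell^2$-combinations of semi-norms are semi-norms — but the proof should present the two reductions cleanly so the reader sees that monotonicity of $t\mapsto t^2$ on $[0,\infty)$ (used to pass from $\|x+y\|_i \le \|x\|_i+\|y\|_i$ to the squared inequality) is what makes the chain valid. I would write it as the short three-line display above followed by the Cauchy–Schwarz step, and remark that the same argument shows $\|\cdot\|_3$ is a norm whenever at least one of $\|\cdot\|_1$, $\|\cdot\|_2$ is; this is how the lemma will be applied to the $\|\cdot\|_{\mathrm{stab}}$ functional, where the pressure-gradient and mass-transfer contributions together control the pressures.
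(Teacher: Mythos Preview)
Your proposal is correct and follows essentially the same approach as the paper: verify homogeneity and non-negativity directly, then for the triangle inequality apply the triangle inequalities for $\|\cdot\|_1$ and $\|\cdot\|_2$ and bound the resulting cross-term $\|x\|_1\|y\|_1+\|x\|_2\|y\|_2$ by $\sqrt{\|x\|_1^2+\|x\|_2^2}\sqrt{\|y\|_1^2+\|y\|_2^2}$. The only cosmetic difference is that the paper labels this last step ``AM-GM'' whereas you (correctly) identify it as the Cauchy--Schwarz inequality on $\mathbb{R}^2$; the computation is identical.
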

%======================;
%  Proof of the lemma  ;
%----------------------;
\begin{proof}
The homogeneity of $\|\cdot\|_{3}$ directly stems from 
the homogeneity of the semi-norms $\|\cdot\|_{1}$ 
and $\|\cdot\|_{2}$. To wit, 
%-------------------------;
%  Equation: Homogeneity  ;
%-------------------------;
\begin{align}
\|\alpha \mathbf{x}\|_{3} = \sqrt{\|\alpha \mathbf{x}\|^2_{1} 
+ \|\alpha \mathbf{x}\|^2_{2}} = 
\sqrt{|\alpha|^2 \|\mathbf{x}\|^2_{1} 
+ |\alpha|^2 \|\mathbf{x}\|^2_{2}} 
= |\alpha| \sqrt{\|\mathbf{x}\|^2_{1} + \|\mathbf{x}\|^2_{2}} 
= |\alpha| \|\mathbf{x}\|_{3}
\end{align}
The non-negativity of $\|\cdot\|_{3}$ is
straightforward; that is, $\|\mathbf{x}\|_{3}
\geq 0 \; \forall \mathbf{x}$. 
The triangle inequality for the semi-norms 
$\|\cdot\|_{1}$ and $\|\cdot\|_{2}$ implies
that 
\begin{align}
    \|\mathbf{a}+\mathbf{b}\|_{1} \leq 
    \|\mathbf{a}\|_{1} +  
    \|\mathbf{b}\|_{1}, \quad \mathrm{and} \quad
    \|\mathbf{a}+\mathbf{b}\|_{2} \leq 
    \|\mathbf{a}\|_{2} +  
    \|\mathbf{b}\|_{2}   
\end{align}
These inequalities imply that 
\begin{align}
  \|\mathbf{a}+\mathbf{b}\|^{2}_{3} &= 
  \|\mathbf{a}+\mathbf{b}\|_{1}^{2} + 
  \|\mathbf{a}+\mathbf{b}\|_{2}^{2} \nonumber \\
  &\leq \|\mathbf{a}\|_{1}^{2} +  
  \|\mathbf{a}\|_{2}^{2} 
  + \|\mathbf{b}\|_{1}^{2} + 
  \|\mathbf{b}\|_{2}^{2}
  + 2 \left\{\|\mathbf{a}\|_{1} 
  \|\mathbf{b}\|_{1} 
  + \|\mathbf{a}\|_{2} \|\mathbf{b}\|_{2} 
  \right\} \nonumber \\
  &\leq \left(\sqrt{\|\mathbf{a}\|_{1}^{2}
    + \|\mathbf{a}\|_{2}^{2} }\right)^{2}
  + \left(\sqrt{\|\mathbf{b}\|_{1}^{2}     
    + \|\mathbf{b}\|_{2}^{2}}\right)^{2} 
  + 2 \; \sqrt{\|\mathbf{a}\|_{1}^{2}     
    + \|\mathbf{a}\|_{2}^{2}} \; 
  \sqrt{\|\mathbf{b}\|_{1}^{2}     
    + \|\mathbf{b}\|_{2}^{2}}
  \label{Eqn:DD2_semi-norm}
\end{align}
We have employed the AM-GM inequality in 
obtaining equation \eqref{Eqn:DD2_semi-norm}, which 
further implies that 
\begin{align}
  \|\mathbf{a}+\mathbf{b}\|_{3}
  \leq \sqrt{\|\mathbf{a}\|_{1}^{2} 
    + \|\mathbf{a}\|_{2}^{2}} 
  + \sqrt{\|\mathbf{b}\|_{1}^{2} 
    + \|\mathbf{b}\|_{2}^{2}} 
  = \|\mathbf{a}\|_{3} + \|\mathbf{b}\|_{3}
\end{align}
This establishes the triangle inequality 
for $\|\cdot\|_{3}$. The homogeneity, 
non-negativity and triangle inequality 
imply that $\|\cdot\|_{3}$ is a semi-norm.   
\end{proof}

%===============================;
%  Proposition: Stability norm  ;
%-------------------------------;
\begin{proposition}{(Stability norm)}
\label{Prop:DD2_stab_norm}
$\|\cdot\|_{\mathrm{stab}}$ is a norm 
on $\mathbb{W}$ and $\mathbb{U}$. 
\end{proposition}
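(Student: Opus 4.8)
The plan is to prove the two defining features of a norm separately: first that $\|\cdot\|_{\mathrm{stab}}$ is a semi-norm on $\mathbb{W}$ (and on $\mathbb{U}$), and then that it is positive-definite, the latter being where the vanishing-mean-pressure constraint in $\mathcal{Q}$ does the essential work.

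For the semi-norm property, I would observe that each of the five terms under the square root in $\|\mathbf{W}\|_{\mathrm{stab}}^{2}$ is the square of a semi-norm: each is the composition of the standard $L_{2}$-norm with a bounded linear map (multiplication by $\sqrt{\mu}\,\mathbf{K}_{i}^{-1/2}$, by $\tfrac{1}{\sqrt{\mu}}\mathbf{K}_{i}^{1/2}\mathrm{grad}[\cdot]$, or by $\sqrt{\beta/\mu}\,(q_{1}-q_{2})$), so homogeneity, non-negativity and the triangle inequality are inherited; note these are only semi-norms, since e.g. $\mathrm{grad}[q_{i}]$ annihilates constant pressures. Applying Lemma \ref{Lemma:DD2_semi-norm} repeatedly — pairing the terms one at a time — then shows that $\|\cdot\|_{\mathrm{stab}}$, being the square root of a sum of squares of semi-norms, is itself a semi-norm on $\mathbb{W}$; the argument on $\mathbb{U}$ is identical.

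For positive-definiteness, suppose $\mathbf{W}=(\mathbf{w}_{1},\mathbf{w}_{2},q_{1},q_{2})$ has $\|\mathbf{W}\|_{\mathrm{stab}}=0$. Since the five non-negative terms must each vanish, and since $\mu>0$, $\beta>0$, and the permeability tensors $\mathbf{K}_{i}(\mathbf{x})$ are uniformly positive-definite and bounded, I conclude $\mathbf{w}_{1}=\mathbf{w}_{2}=\mathbf{0}$ and $\mathrm{grad}[q_{1}]=\mathrm{grad}[q_{2}]=\mathbf{0}$ almost everywhere; as $\Omega$ is a bounded connected domain this forces $q_{1}$ and $q_{2}$ to be constants, and the vanishing of $\|\sqrt{\beta/\mu}\,(q_{1}-q_{2})\|$ gives $q_{1}=q_{2}=c$ for a single constant $c$. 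Then the constraint defining $\mathcal{Q}$, namely $\left(\int_{\Omega}q_{1}\,\mathrm{d}\Omega\right)\left(\int_{\Omega}q_{2}\,\mathrm{d}\Omega\right)=0$, yields $c^{2}|\Omega|^{2}=0$, hence $c=0$. Thus $\mathbf{W}=\mathbf{0}$, and together with the semi-norm property this establishes that $\|\cdot\|_{\mathrm{stab}}$ is a norm on $\mathbb{W}$; since the pressure component of any element of $\mathbb{U}$ also lies in $\mathcal{Q}$, the same reasoning gives the conclusion on $\mathbb{U}$.

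The main obstacle is conceptual rather than computational: the definiteness step genuinely relies on the zero-mean-pressure condition built into $\mathcal{Q}$, because without it the pair of equal constant pressures $(c,c)$ would have zero stability norm and the statement would be false. The calculation itself is routine, but the proof must flag the standing hypotheses it uses — connectedness of $\Omega$ (to pass from $\mathrm{grad}[q_{i}]=\mathbf{0}$ to $q_{i}$ constant) and uniform ellipticity/boundedness of $\mathbf{K}_{i}$ (to pass from the weighted $L_{2}$ norms vanishing to the pointwise conclusions) — since these are what make the argument go through.
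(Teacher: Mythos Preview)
Your proposal is correct and follows essentially the same route as the paper: identify the five terms as semi-norms, invoke Lemma~\ref{Lemma:DD2_semi-norm} to get that $\|\cdot\|_{\mathrm{stab}}$ is a semi-norm, then use positive-definiteness of the $\mathbf{K}_i$, the vanishing of $\mathrm{grad}[q_i]$, and the zero-mean-pressure constraint in $\mathcal{Q}$ to force $\mathbf{W}=\mathbf{0}$. If anything, you are slightly more explicit than the paper in flagging the standing hypotheses (connectedness of $\Omega$, uniform ellipticity of $\mathbf{K}_i$) that make the definiteness step go through.
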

%===========================;
%  Proof of stability norm  ;
%---------------------------;
\begin{proof}
We first note that $\mathbf{K}_1$ and 
$\mathbf{K}_2$ are symmetric and positive 
definite tensors. The square root of a 
symmetric and positive definite tensor 
exists, and is itself a symmetric and 
positive definite tensor \citep{Gurtin1982}. 
This implies that the following individual 
terms form semi-norms on $\mathbb{W}$ and 
$\mathbb{U}$:

\begin{align}
\frac{1}{\sqrt{2}}\left\|\sqrt{\mu} \mathbf{K}_1^{-1/2} \mathbf{w}_{1}\right\|, \; 
\frac{1}{\sqrt{2}}\left\|\frac{1}{\sqrt{\mu}}\mathbf{K}_1^{1/2} \mathrm{grad}[q_1]\right\|, \; 
\frac{1}{\sqrt{2}}\left\|\sqrt{\mu}\mathbf{K}_2^{-1/2} \mathbf{w}_2 \right\|, \nonumber \\
\frac{1}{\sqrt{2}}\left\|\frac{1}{\sqrt{\mu}}\mathbf{K}_2^{1/2} \mathrm{grad}[q_2]\right\|,
\quad \mathrm{and} \quad 
\left\|\sqrt{\left(\frac{\beta}{\mu}\right)} (q_1 - q_2)\right\|
\end{align}
Then, Lemma \ref{Lemma:DD2_semi-norm} implies that 
$\|\cdot\|_{\mathrm{stab}}$ is a semi-norm. It is easy to 
show that $\|\mathbf{W}\|_{\mathrm{stab}} = 0$ implies 
that 
\begin{align}
\mathbf{w}_1(\mathbf{x}) = \mathbf{0}, \; 
\mathbf{w}_2(\mathbf{x}) = \mathbf{0} 
\; \mathrm{and} \; 
q_1(\mathbf{x}) = q_2(\mathbf{x}) = c
\end{align}
where $c$ is a constant. Noting that
$(q_1(\mathbf{x}),q_2(\mathbf{x})) \in
\mathcal{Q}$ and utilizing the following
condition in the definition of  $\mathcal{Q}$:
\begin{align}
  \left(\int_{\Omega} q_1(\mathbf{x}) \mathrm{d} \Omega \right) 
  \left(\int_{\Omega} q_2(\mathbf{x}) \mathrm{d} \Omega \right) = 0 
\end{align}
we conclude that $c = 0$. With this, we have established that 
$\|\mathbf{W}\|_{\mathrm{stab}}$ = 0 implies that $\mathbf{W} = 
\mathbf{0}$.  Hence, $\|\cdot\|_{\mathrm{stab}}$ is a norm. 
\end{proof}

%=======================;
%  Theorem: Uniqueness  ;
%-----------------------;
\begin{theorem}{(Uniqueness of weak solutions)}
The weak solution under the proposed mixed formulation is unique. 
\end{theorem}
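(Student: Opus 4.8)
The plan is to prove uniqueness by a standard linearity argument exploiting the fact, just established in Proposition \ref{Prop:DD2_stab_norm}, that $\|\cdot\|_{\mathrm{stab}}$ is a genuine norm on $\mathbb{U}$ together with the identity $\|\mathbf{W}\|_{\mathrm{stab}}^2 = \mathcal{B}_{\mathrm{stab}}(\mathbf{W},\mathbf{W})$. Suppose there are two weak solutions $\mathbf{U}^{(1)}=(\mathbf{u}_1^{(1)},\mathbf{u}_2^{(1)},p_1^{(1)},p_2^{(1)})$ and $\mathbf{U}^{(2)}=(\mathbf{u}_1^{(2)},\mathbf{u}_2^{(2)},p_1^{(2)},p_2^{(2)})$ in $\mathbb{U}$, each satisfying \eqref{Eqn:VMS_Galerkin_Weak_Form2} for all $\mathbf{W}\in\mathbb{W}$. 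Set $\mathbf{E} := \mathbf{U}^{(1)} - \mathbf{U}^{(2)}$. The first thing to check is that $\mathbf{E}$ is an admissible test function, i.e. $\mathbf{E}\in\mathbb{W}$: since the velocity boundary data $u_{n1},u_{n2}$ and the zero-mean constraint are the same for both solutions, the velocity differences satisfy homogeneous normal boundary conditions (so lie in $\mathcal{W}_1\times\mathcal{W}_2$) and the pressure difference pair still lies in $\mathcal{Q}$ (the zero-mean product constraint is preserved under subtraction because it is satisfied by each of the two pairs — here I would note that $\mathbb{U}$ and $\mathbb{W}$ share the same pressure component $\mathcal{Q}$, so no issue arises).

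Next, by bilinearity of $\mathcal{B}_{\mathrm{stab}}(\cdot,\cdot)$ in its arguments, subtracting the two weak statements gives $\mathcal{B}_{\mathrm{stab}}(\mathbf{W},\mathbf{E}) = 0$ for all $\mathbf{W}\in\mathbb{W}$. Choosing $\mathbf{W} = \mathbf{E}$ yields $\mathcal{B}_{\mathrm{stab}}(\mathbf{E},\mathbf{E}) = 0$, i.e. $\|\mathbf{E}\|_{\mathrm{stab}}^2 = 0$. Since $\|\cdot\|_{\mathrm{stab}}$ is a norm on $\mathbb{U}$ (Proposition \ref{Prop:DD2_stab_norm}), this forces $\mathbf{E} = \mathbf{0}$, hence $\mathbf{U}^{(1)} = \mathbf{U}^{(2)}$, establishing uniqueness.

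One subtlety worth spelling out is that $\|\mathbf{E}\|_{\mathrm{stab}}=0$ only immediately delivers $\mathbf{w}_1=\mathbf{0}$, $\mathbf{w}_2=\mathbf{0}$ and $q_1=q_2=\mathrm{const}$ for the corresponding components of $\mathbf{E}$ (i.e. $\mathbf{u}_1^{(1)}=\mathbf{u}_1^{(2)}$, $\mathbf{u}_2^{(1)}=\mathbf{u}_2^{(2)}$ and $p_1^{(1)}-p_1^{(2)}=p_2^{(1)}-p_2^{(2)}=c$); the zero-mean condition in $\mathcal{Q}$ is then exactly what pins down $c=0$, just as in the proof of Proposition \ref{Prop:DD2_stab_norm}. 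I expect the main (and really the only) obstacle to be purely bookkeeping: verifying carefully that the difference of two solutions lands in the test space $\mathbb{W}$ — in particular that the nonlinear-looking product constraint defining $\mathcal{Q}$ does not obstruct subtraction — and making sure $\mathcal{B}_{\mathrm{stab}}$ is genuinely bilinear (it is, since all the added stabilization terms in \eqref{Eqn:Dual_B_VMS} are bilinear forms in $(\mathbf{W};\mathbf{U})$). Once those are in hand the argument is the textbook Lax–Milgram-style uniqueness proof and nothing deeper is needed; existence and well-posedness, which require more work, are handled separately.
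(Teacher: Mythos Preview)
Your proposal is correct and follows essentially the same route as the paper: assume two solutions, subtract, note that the difference lies in the test space $\mathbb{W}$, test against the difference itself to obtain $\|\mathbf{U}^{(1)}-\mathbf{U}^{(2)}\|_{\mathrm{stab}}^2=0$, and invoke Proposition~\ref{Prop:DD2_stab_norm}. You are in fact more careful than the paper in flagging the bookkeeping step that $\mathbf{E}\in\mathbb{W}$ and the role of the zero-mean constraint in killing the residual constant; the paper simply asserts $\mathbf{U}_1-\mathbf{U}_2\in\mathbb{W}$ without comment.
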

%===================================;
%  Proof of the uniqueness theorem  ;
%-----------------------------------;
\begin{proof}
On the contrary, assume that $\mathbf{U}_1$ and 
$\mathbf{U}_{2}$ are both (weak) solutions of 
the weak formulation. This implies that 
\begin{align}
\mathcal{B}_{\mathrm{stab}}(\mathbf{W},\mathbf{U}_1) 
= \mathcal{L}_{\mathrm{stab}}(\mathbf{W})\quad 
\forall \mathbf{W} \in \mathbb{W}, \quad \mathrm{and} \quad
\mathcal{B}_{\mathrm{stab}}(\mathbf{W},\mathbf{U}_2) 
= \mathcal{L}_{\mathrm{stab}}(\mathbf{W})\quad 
\forall \mathbf{W} \in \mathbb{W} 
\end{align}
By subtracting the above two equations and 
noting the linearity in the second slot, 
we obtain 
\begin{align}
\mathcal{B}_{\mathrm{stab}}(\mathbf{W},\mathbf{U}_1 - \mathbf{U}_2) 
&= 0 \quad \forall \mathbf{W} \in \mathbb{W}  
\end{align}
Since $\mathbf{U}_1 - \mathbf{U}_2 \in 
\mathbb{W}$, we can choose $\mathbf{W} = 
\mathbf{U}_1 - \mathbf{U}_2$. This particular 
choice implies that 
\begin{align}
\mathcal{B}_{\mathrm{stab}}(\mathbf{U}_1 - \mathbf{U}_2,
\mathbf{U}_1 - \mathbf{U}_2) 
= \|\mathbf{U}_1 - \mathbf{U}_2\|_{\mathrm{stab}}^2 = 0 
\end{align}
Using Proposition \ref{Prop:DD2_stab_norm} (which establishes that $\|\cdot\|_{\mathrm{stab}}$ is 
a norm on $\mathbb{W}$) we conclude that 
$\mathbf{U}_1 = \mathbf{U}_2$.
\end{proof}

%========================;
%  Theorem: Boundedness  ;
%------------------------;
\begin{theorem}{(Boundedness)}
  The bilinear form is bounded. That is, 
  \begin{align}
    \Big|\mathcal{B}_{\mathrm{stab}}(\mathbf{W},\mathbf{U}) \Big| 
    \leq C \|\mathbf{W}\|_{\mathrm{stab}} \|\mathbf{U}\|_{\mathrm{stab}}
  \end{align}
  where $C$ is a constant.
\end{theorem}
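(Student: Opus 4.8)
The plan is to write the bilinear form $\mathcal{B}_{\mathrm{stab}}(\mathbf{W},\mathbf{U})$ as a sum of terms and bound each one by a product of $L_2$-type norms that are controlled by $\|\mathbf{W}\|_{\mathrm{stab}}$ and $\|\mathbf{U}\|_{\mathrm{stab}}$. First I would observe that, because of the residual-based stabilization, many of the ``Galerkin'' terms cancel: expanding $\mathcal{B}_{\mathrm{Gal}}$ and subtracting the two half-residual terms, the mixed pressure-divergence terms $-(\mathrm{div}[\mathbf{w}_i];p_i)$ and $(q_i;\mathrm{div}[\mathbf{u}_i])$ combine with pieces of the stabilization, and what survives should be expressible purely in terms of the quantities $\sqrt{\mu}\,\mathbf{K}_i^{-1/2}\mathbf{w}_i$, $\tfrac{1}{\sqrt{\mu}}\mathbf{K}_i^{1/2}\mathrm{grad}[q_i]$, $\sqrt{\mu}\,\mathbf{K}_i^{-1/2}\mathbf{u}_i$, $\tfrac{1}{\sqrt{\mu}}\mathbf{K}_i^{1/2}\mathrm{grad}[p_i]$, and $\sqrt{\beta/\mu}\,(q_1-q_2)$, $\sqrt{\beta/\mu}\,(p_1-p_2)$. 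This is exactly the same algebra that produced the clean expression for $\|\mathbf{W}\|_{\mathrm{stab}}^2 = \mathcal{B}_{\mathrm{stab}}(\mathbf{W},\mathbf{W})$, so I would mirror that computation but keep $\mathbf{W}$ and $\mathbf{U}$ distinct.

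Once the bilinear form is written in that symmetric ``inner-product-like'' shape, each term is of the form $(\mathbf{a};\mathbf{b})$ with $\mathbf{a}$ built from $\mathbf{W}$-data and $\mathbf{b}$ built from $\mathbf{U}$-data. I would then apply the Cauchy--Schwarz inequality to each term to get $|(\mathbf{a};\mathbf{b})| \le \|\mathbf{a}\|\,\|\mathbf{b}\|$. Each factor $\|\mathbf{a}\|$ is (up to a fixed constant like $\tfrac12$ or $\tfrac{1}{\sqrt 2}$) one of the five semi-norm pieces appearing in the definition of $\|\cdot\|_{\mathrm{stab}}$, hence bounded by $\|\mathbf{W}\|_{\mathrm{stab}}$ (or $\|\mathbf{U}\|_{\mathrm{stab}}$). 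For the cross term involving $(q_1-q_2)$ and $(p_1-p_2)$ the same applies directly. Summing the finitely many terms and collecting the numerical constants gives $|\mathcal{B}_{\mathrm{stab}}(\mathbf{W},\mathbf{U})| \le C\|\mathbf{W}\|_{\mathrm{stab}}\|\mathbf{U}\|_{\mathrm{stab}}$ with an explicit $C$ (something like $3$ or a similar small integer after bookkeeping).

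The only genuine subtlety — and the step I expect to be the main obstacle — is confirming that no ``leftover'' term appears that is \emph{not} controlled by the stability norm. A priori, $\mathcal{B}_{\mathrm{Gal}}$ contains the terms $-(\mathrm{div}[\mathbf{w}_i];p_i)$ and $(q_i;\mathrm{div}[\mathbf{u}_i])$, and it is not obvious that $\|\mathbf{U}\|_{\mathrm{stab}}$ controls $\|p_i\|$ or $\|\mathrm{div}[\mathbf{u}_i]\|$ on their own. The resolution is that the stabilization is engineered precisely so that, after integration by parts is encoded in the derivation and after the cancellations, these ``bad'' terms either drop out or recombine with $\mathrm{grad}[p_i]$-type terms that \emph{are} in the norm; one must verify this cancellation carefully rather than just assert it. I would therefore devote most of the write-up to that expansion, presenting $\mathcal{B}_{\mathrm{stab}}(\mathbf{W},\mathbf{U})$ in the reduced form
\begin{align*}
  \mathcal{B}_{\mathrm{stab}}(\mathbf{W},\mathbf{U})
  &= \tfrac{1}{2}\bigl(\sqrt{\mu}\,\mathbf{K}_1^{-1/2}\mathbf{w}_1;\sqrt{\mu}\,\mathbf{K}_1^{-1/2}\mathbf{u}_1\bigr)
   + \tfrac{1}{2}\bigl(\tfrac{1}{\sqrt{\mu}}\mathbf{K}_1^{1/2}\mathrm{grad}[q_1];\tfrac{1}{\sqrt{\mu}}\mathbf{K}_1^{1/2}\mathrm{grad}[p_1]\bigr) \\
  &\quad + \tfrac{1}{2}\bigl(\sqrt{\mu}\,\mathbf{K}_2^{-1/2}\mathbf{w}_2;\sqrt{\mu}\,\mathbf{K}_2^{-1/2}\mathbf{u}_2\bigr)
   + \tfrac{1}{2}\bigl(\tfrac{1}{\sqrt{\mu}}\mathbf{K}_2^{1/2}\mathrm{grad}[q_2];\tfrac{1}{\sqrt{\mu}}\mathbf{K}_2^{1/2}\mathrm{grad}[p_2]\bigr) \\
  &\quad + \bigl(\sqrt{\tfrac{\beta}{\mu}}(q_1-q_2);\sqrt{\tfrac{\beta}{\mu}}(p_1-p_2)\bigr) + (\text{antisymmetric terms}),
\end{align*}
where the antisymmetric remainder (the genuinely non-symmetric off-diagonal pieces that vanish when $\mathbf{W}=\mathbf{U}$) is likewise bounded by the same products via Cauchy--Schwarz; then Cauchy--Schwarz on each line and the triangle inequality finish the proof.
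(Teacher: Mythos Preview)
Your proposal is correct and follows essentially the same route as the paper: expand $\mathcal{B}_{\mathrm{stab}}$ so that the divergence terms from $\mathcal{B}_{\mathrm{Gal}}$ recombine (via integration by parts) into $\mathrm{grad}$-type terms controlled by $\|\cdot\|_{\mathrm{stab}}$, then apply Cauchy--Schwarz to each surviving inner product. The paper's only additional device beyond your outline is a second application of the Cauchy--Schwarz inequality to the resulting finite list of products, viewed as a Euclidean inner product of two tuples of norms, which yields the explicit constant $C=2$ rather than the slightly larger constant your ``sum and collect'' step would produce.
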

%====================================;
%  Proof of the boundedness theorem  ;
%------------------------------------;
\begin{proof}
  A direct application of the triangle inequality
  of the absolute value on real numbers implies that
  \begin{align}
    \Big|\mathcal{B}_{\mathrm{stab}}(\mathbf{W},\mathbf{U})\Big|
    &\leq
    \frac{1}{2}\Big|\left(\mathbf{w}_1;\mu \mathbf{K}_1^{-1}\mathbf{u}_1\right)\Big|
    + \frac{1}{2}\Big|\left(\mathbf{w}_2;\mu \mathbf{K}_2^{-1}\mathbf{u}_2\right)\Big|
    + \frac{1}{2}\Big|\left(\mathbf{w}_1;\mathrm{grad}[p_1]\right)\Big|
    + \frac{1}{2}\Big|\left(\mathbf{w}_2;\mathrm{grad}[p_2]\right)\Big| \nonumber \\
    &+ \frac{1}{2}\Big|\left(\mathrm{grad}[q_1];\mu^{-1} \mathbf{K}_1\mathrm{grad}[p_1]\right)\Big|
    + \frac{1}{2}\Big|\left(\mathrm{grad}[q_2];\mu^{-1} \mathbf{K}_2\mathrm{grad}[p_2]\right)\Big| \nonumber \\
    &+ \Big|\left(q_1 - q_2;\beta/\mu (p_1 - p_2)\right)\Big|
  \end{align}
  Cauchy-Schwartz inequality on $L_2$ inner-product implies that
  \begin{align}
    \Big|\mathcal{B}_{\mathrm{stab}}(\mathbf{W},\mathbf{U})\Big|
    &\leq
    \frac{1}{2}\left\|\sqrt{\mu} \mathbf{K}_{1}^{-1/2}\mathbf{w}_1\right\|
    \left\|\sqrt{\mu} \mathbf{K}_1^{-1/2}\mathbf{u}_1\right\|
    +\frac{1}{2}\left\|\sqrt{\mu} \mathbf{K}_{2}^{-1/2}\mathbf{w}_2\right\|
    \left\|\sqrt{\mu} \mathbf{K}_2^{-1/2}\mathbf{u}_2\right\| \nonumber \\
    &+ \frac{1}{2}\left\|\sqrt{\mu} \mathbf{K}_{1}^{-1/2}\mathbf{w}_1\right\|
    \left\|\frac{1}{\sqrt{\mu}}\mathbf{K}_{1}^{1/2}\mathrm{grad}[p_1]\right\|
    + \frac{1}{2}\left\|\sqrt{\mu} \mathbf{K}_{2}^{-1/2}\mathbf{w}_2\right\|
    \left\|\frac{1}{\sqrt{\mu}}\mathbf{K}_{2}^{1/2}\mathrm{grad}[p_2]\right\|
    \nonumber \\
    &+ \frac{1}{2}\left\|\frac{1}{\sqrt{\mu}}\mathbf{K}_{1}^{1/2}\mathrm{grad}[q_1]\right\|
    \left\|\frac{1}{\sqrt{\mu}} \mathbf{K}_1^{1/2}\mathrm{grad}[p_1]\right\| 
    + \frac{1}{2}\left\|\frac{1}{\sqrt{\mu}}\mathbf{K}_{2}^{1/2}\mathrm{grad}[q_2]\right\|
    \left\|\frac{1}{\sqrt{\mu}} \mathbf{K}_2^{1/2}\mathrm{grad}[p_2]\right\| \nonumber \\
    &+ \left\|\sqrt{\frac{\beta}{\mu}}\left(q_1 - q_2\right)\right\| \;
    \left\|\sqrt{\frac{\beta}{\mu}} \left(p_1 - p_2\right)\right\|
  \end{align}
  By applying Cauchy-Schwartz inequality on $n$-tuple
  real numbers (i.e., on Euclidean spaces) we obtained the following:
  {\tiny
  \begin{align}
    &\Big|\mathcal{B}_{\mathrm{stab}}(\mathbf{W},\mathbf{U})\Big| \leq \nonumber \\
    &\sqrt{\left\|\sqrt{\mu} \mathbf{K}_{1}^{-1/2}\mathbf{w}_1\right\|^{2}
    +\left\|\sqrt{\mu} \mathbf{K}_{2}^{-1/2}\mathbf{w}_2\right\|^{2}
    + \frac{1}{2}\left\|\frac{1}{\sqrt{\mu}} \mathbf{K}_1^{1/2}\mathrm{grad}[q_1]\right\|^{2} 
    + \frac{1}{2}\left\|\frac{1}{\sqrt{\mu}}\mathbf{K}_{2}^{1/2}\mathrm{grad}[q_2]\right\|^{2}
    + \left\|\sqrt{\frac{\beta}{\mu}}\left(q_1 - q_2\right)\right\|^{2}
    } \nonumber \\
    &\sqrt{
      \frac{1}{2}\left\|\sqrt{\mu} \mathbf{K}_{1}^{-1/2}\mathbf{u}_1\right\|^{2}
      + \frac{1}{2}\left\|\sqrt{\mu} \mathbf{K}_{2}^{-1/2}\mathbf{u}_2\right\|^{2}
      + \left\|\frac{1}{\sqrt{\mu}} \mathbf{K}_1^{1/2}\mathrm{grad}[p_1]\right\|^{2} 
    + \left\|\frac{1}{\sqrt{\mu}}\mathbf{K}_{2}^{1/2}\mathrm{grad}[p_2]\right\|^{2}
    + \left\|\sqrt{\frac{\beta}{\mu}}\left(p_1 - p_2\right)\right\|^{2}
    } \nonumber \\
    \leq &2 \sqrt{\frac{1}{2}\left\|\sqrt{\mu} \mathbf{K}_{1}^{-1/2}\mathbf{w}_1\right\|^{2}
    +\frac{1}{2} \left\|\sqrt{\mu} \mathbf{K}_{2}^{-1/2}\mathbf{w}_2\right\|^{2}
    + \frac{1}{2}\left\|\frac{1}{\sqrt{\mu}} \mathbf{K}_1^{1/2}\mathrm{grad}[q_1]\right\|^{2} 
    + \frac{1}{2}\left\|\frac{1}{\sqrt{\mu}}\mathbf{K}_{2}^{1/2}\mathrm{grad}[q_2]\right\|^{2}
    + \left\|\sqrt{\frac{\beta}{\mu}}\left(q_1 - q_2\right)\right\|^{2}
    } \nonumber \\
    &\sqrt{
      \frac{1}{2}\left\|\sqrt{\mu} \mathbf{K}_{1}^{-1/2}\mathbf{u}_1\right\|^{2}
      + \frac{1}{2}\left\|\sqrt{\mu} \mathbf{K}_{2}^{-1/2}\mathbf{u}_2\right\|^{2}
      + \frac{1}{2} \left\|\frac{1}{\sqrt{\mu}} \mathbf{K}_1^{1/2}\mathrm{grad}[p_1]\right\|^{2} 
      + \frac{1}{2} \left\|\frac{1}{\sqrt{\mu}}\mathbf{K}_{2}^{1/2}\mathrm{grad}[p_2]\right\|^{2}
      + \left\|\sqrt{\frac{\beta}{\mu}}\left(p_1 - p_2\right)\right\|^{2}
    } 
  \end{align}
  }
  That is, we have established that 
  \begin{align}
    \Big|\mathcal{B}_{\mathrm{stab}}(\mathbf{W},\mathbf{U}) \Big| 
    \leq 2 \|\mathbf{W}\|_{\mathrm{stab}} \|\mathbf{U}\|_{\mathrm{stab}}
  \end{align}
  which completes the proof.
\end{proof}

%=======================;
%  Theorem: Coercivity  ;
%-----------------------;
\begin{theorem}{(Coercivity)}
  The bilinear form is coercive. That is, the
  bilinear form is bounded below. 
\end{theorem}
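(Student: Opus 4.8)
The plan is to observe that the stability norm was \emph{defined} precisely so that $\|\mathbf{W}\|_{\mathrm{stab}}^{2} = \mathcal{B}_{\mathrm{stab}}(\mathbf{W},\mathbf{W})$; consequently coercivity will hold with constant one, provided one verifies that this diagonal evaluation really does collapse to the sum of squared semi-norms displayed in the definition of $\|\cdot\|_{\mathrm{stab}}$. So I would take $\mathbf{U}=\mathbf{W}$ (that is, $\mathbf{u}_{i}=\mathbf{w}_{i}$ and $p_{i}=q_{i}$) in the definition \eqref{Eqn:Dual_B_VMS} of $\mathcal{B}_{\mathrm{stab}}$ and expand term by term.

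First I would treat the Galerkin part $\mathcal{B}_{\mathrm{Gal}}(\mathbf{W},\mathbf{W})$: the velocity--pressure coupling terms $-(\mathrm{div}[\mathbf{w}_{i}];q_{i})+(q_{i};\mathrm{div}[\mathbf{w}_{i}])$ cancel identically by skew-symmetry, leaving $\|\sqrt{\mu}\,\mathbf{K}_{1}^{-1/2}\mathbf{w}_{1}\|^{2}+\|\sqrt{\mu}\,\mathbf{K}_{2}^{-1/2}\mathbf{w}_{2}\|^{2}+\|\sqrt{\beta/\mu}\,(q_{1}-q_{2})\|^{2}$. Next I would expand each stabilization contribution $-\frac{1}{2}\bigl(\mu\mathbf{K}_{i}^{-1}\mathbf{w}_{i}-\mathrm{grad}[q_{i}];\frac{1}{\mu}\mathbf{K}_{i}(\mu\mathbf{K}_{i}^{-1}\mathbf{w}_{i}+\mathrm{grad}[q_{i}])\bigr)$: here the symmetry of $\mathbf{K}_{i}$ makes the two mixed cross-terms $(\mu\mathbf{K}_{i}^{-1}\mathbf{w}_{i};\mathrm{grad}[q_{i}])$ cancel, and using that the square root of a symmetric positive-definite tensor is again symmetric positive-definite (so that $(\mathbf{K}_{i}^{-1}\mathbf{w}_{i};\mathbf{w}_{i})=\|\mathbf{K}_{i}^{-1/2}\mathbf{w}_{i}\|^{2}$ and $(\mathrm{grad}[q_{i}];\mathbf{K}_{i}\mathrm{grad}[q_{i}])=\|\mathbf{K}_{i}^{1/2}\mathrm{grad}[q_{i}]\|^{2}$), each stabilization term reduces to $-\frac{1}{2}\|\sqrt{\mu}\,\mathbf{K}_{i}^{-1/2}\mathbf{w}_{i}\|^{2}+\frac{1}{2}\|\frac{1}{\sqrt{\mu}}\mathbf{K}_{i}^{1/2}\mathrm{grad}[q_{i}]\|^{2}$.

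Adding everything, the velocity terms combine with coefficient $1-\frac{1}{2}=\frac{1}{2}$ and the pressure-gradient terms appear with coefficient $\frac{1}{2}$, so $\mathcal{B}_{\mathrm{stab}}(\mathbf{W},\mathbf{W})$ is exactly the right-hand side of the definition of $\|\mathbf{W}\|_{\mathrm{stab}}^{2}$. Since Proposition~\ref{Prop:DD2_stab_norm} already established that $\|\cdot\|_{\mathrm{stab}}$ is a genuine norm on $\mathbb{W}$, this yields $\mathcal{B}_{\mathrm{stab}}(\mathbf{W},\mathbf{W})=\|\mathbf{W}\|_{\mathrm{stab}}^{2}\ge \|\mathbf{W}\|_{\mathrm{stab}}^{2}$ for every $\mathbf{W}\in\mathbb{W}$, i.e.\ coercivity (below-boundedness) with constant $1$. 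There is no real obstacle here beyond careful bookkeeping; the only point that genuinely matters is invoking the symmetry of $\mathbf{K}_{i}$ (and hence of $\mathbf{K}_{i}^{1/2}$, $\mathbf{K}_{i}^{-1/2}$) to annihilate the indefinite cross-terms in the residual-based stabilization, which is precisely the mechanism by which the stabilization converts the saddle-point structure into a coercive form.
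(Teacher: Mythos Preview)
Your proposal is correct and follows essentially the same approach as the paper: both arguments rest on the identity $\mathcal{B}_{\mathrm{stab}}(\mathbf{W},\mathbf{W})=\|\mathbf{W}\|_{\mathrm{stab}}^{2}$ together with Proposition~\ref{Prop:DD2_stab_norm}. The only difference is that you spell out the term-by-term cancellation yielding this identity, whereas the paper records it as part of the definition of $\|\cdot\|_{\mathrm{stab}}$ and then invokes it directly.
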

%=======================;
%  Proof of coercivity  ;
%-----------------------;
\begin{proof}
  The coercivity of the bilinear form can be established
  from the definition of $\|\cdot\|_{\mathrm{stab}}$
  and Proposition \ref{Prop:DD2_stab_norm} (i.e.,
  $\|\cdot\|_{\mathrm{stab}}$ is a norm on $\mathbb{W}$)
  as
  \begin{align}
    \|\mathbf{W}\|_{\mathrm{stab}}^2 = \mathcal{B}_{\mathrm{stab}}(\mathbf{W},\mathbf{W})  \quad 
    \forall \mathbf{W} \in \mathbb{W}
  \end{align}
\end{proof}
Given the coercivity and boundedness of the bilinear
form and the continuity of the linear functional,
one can conclude that the proposed mixed weak
formulation is well-posed by invoking the
Lax-Milgram theorem \citep{Brenner_Scott}. 
%================================================================================;
%  Subsection: Convergence and error analysis of the finite element formulation  ;
%================================================================================;
\subsection{Convergence and error analysis of the finite element formulation}
We decompose the computational domain into ``$Nele$''
subdomains (which will be the elements in the context
of the finite element method) such that
%-----------------------------------------;
%  Equation: Decomposition of the domain  ;
%-----------------------------------------;
\begin{align}
  \label{Eqn:Dual_FE_decomposition}
  \overline{\Omega} = \bigcup_{e = 1}^{Nele}
  \overline{\Omega}^{e}
\end{align}
where a superposed bar indicates the set
closure.
We denote the finite element solution by $\mathbf{U}^{h}$. That is, 
%-------------------------------------;
%  Equation: Finite element solution  ;
%-------------------------------------;
\begin{align}
  \mathbf{U}^{h} = (\mathbf{u}_{1}^{h}(\mathbf{x}),\mathbf{u}_{2}^{h}(\mathbf{x}),
  p_{1}^{h}(\mathbf{x}),
  p_{2}^{h}(\mathbf{x}))
\end{align}
Likewise, 
%-------------------------------------------;
%  Equation: Finite element test functions  ;
%-------------------------------------------;
\begin{align}
\mathbf{W}^{h} = (\mathbf{w}_{1}^{h}(\mathbf{x}),\mathbf{w}_{2}^{h}(\mathbf{x}),
q_{1}^{h}(\mathbf{x}),
q_{2}^{h}(\mathbf{x}))
\end{align}

If we denote the set of all polynomials up to
and including $m$-th order over a set $K$
by $\mathscr{P}^{m}(K)$, and the set
of all continuous functions defined on
$\overline{\Omega}$ (which is the set closure
of $\Omega$) by $C^{0}(\overline{\Omega})$, then the following finite-dimensional spaces can be defined:
%---------------------------------------;
%  Equation: Finite-dimensional spaces  ;
%---------------------------------------;
\begin{subequations}
  \begin{align}
    \mathcal{U}_{1}^{h} &:= \left\{\mathbf{u}_{1}^{h}(\mathbf{x})
    \in \mathcal{U}_1 \; \Big| \; \mathbf{u}_{1}^{h}(\mathbf{x})
    \in \left(C^{0}(\overline{\Omega})\right)^{nd};
    \mathbf{u}_{1}^{h}(\mathbf{x})|_{\Omega^{e}} \in
    \left(\mathscr{P}^{k}(\Omega^{e})\right)^{nd}; 
    e = 1, \cdots, Nele  \right\} \\
    \mathcal{U}_{2}^{h} &:= \left\{\mathbf{u}_{2}^{h}(\mathbf{x})
    \in \mathcal{U}_{2} \; \Big| \; \mathbf{u}_{2}^{h}(\mathbf{x})
    \in \left(C^{0}(\overline{\Omega})\right)^{nd};
    \mathbf{u}_{2}^{h}(\mathbf{x})|_{\Omega^{e}} \in
    \left(\mathscr{P}^{k}(\Omega^{e})\right)^{nd}; 
    e = 1, \cdots, Nele  \right\} \\
    \mathcal{W}_{1}^{h} &:= \left\{\mathbf{w}_{1}^{h}(\mathbf{x})
    \in \mathcal{W}_1 \; \Big| \; \mathbf{w}_{1}^{h}(\mathbf{x})
    \in \left(C^{0}(\overline{\Omega})\right)^{nd};
    \mathbf{w}_{1}^{h}(\mathbf{x})|_{\Omega^{e}} \in
    \left(\mathscr{P}^{k}(\Omega^{e})\right)^{nd}; 
    e = 1, \cdots, Nele  \right\} \\
    \mathcal{W}_{2}^{h} &:= \left\{\mathbf{w}_{2}^{h}(\mathbf{x})
    \in \mathcal{W}_{2} \; \Big| \; \mathbf{w}_{2}^{h}(\mathbf{x})
    \in \left(C^{0}(\overline{\Omega})\right)^{nd};
    \mathbf{w}_{2}^{h}(\mathbf{x})|_{\Omega^{e}} \in
    \left(\mathscr{P}^{k}(\Omega^{e})\right)^{nd}; 
    e = 1, \cdots, Nele  \right\} \\
    \mathcal{Q}^{h} &:= \left\{
    \left(p_1^{h},p_2^{h}\right) \in
    \mathcal{Q} \; \Big| \; p_{1}^{h}(\mathbf{x}), 
    p_{2}^{h}(\mathbf{x}) \in
    C^{0}(\overline{\Omega}); 
    p_{1}^{h}(\mathbf{x}), p_{2}^{h}(\mathbf{x})|_{\Omega^{e}} \in
    \mathscr{P}^{1}(\Omega^{e}); 
    e = 1, \cdots, Nele  \right\} 
  \end{align}
\end{subequations}
%
%where $\Omega^{e}$ denotes a finite element in
%the computational domain $\Omega$ (see equation
%\eqref{Eqn:Dual_FE_decomposition}). Recall that
%``$nd$'' denotes the number of spatial dimensions,
%and ``$Nele$'' denotes the number of finite elements. 
%
We define the corresponding product spaces as follows:
%-----------------------------------------------;
%  Equation: Finite-dimensional product spaces  ;
%-----------------------------------------------;
\begin{align}
    \mathbb{U}^{h} = \mathcal{U}_{1}^{h} \times \mathcal{U}_{2}^{h} \times \mathcal{Q}^{h}, \quad \mathrm{and} \quad
    \mathbb{W}^{h} = \mathcal{W}_{1}^{h} \times \mathcal{W}_{2}^{h} \times \mathcal{Q}^{h} 
\end{align}
It is important to note that $\mathbb{W}^{h}$
and $\mathbb{U}^{h}$ are closed linear subspaces
of $\mathbb{W}$ and $\mathbb{U}$, respectively. 
The finite element formulation corresponding
to the proposed stabilized mixed formulation
reads:~Find $\mathbf{U}^{h} \in \mathbb{U}^{h}$
such that we have
%----------------------------------------;
%  Equation: Finite element formulation  ;
%----------------------------------------;
\begin{align}
  \mathcal{B}_{\mathrm{stab}}(\mathbf{W}^{h};\mathbf{U}^{h})
  = \mathcal{L}_{\mathrm{stab}}(\mathbf{W}^{h})
  \quad \forall \mathbf{W}^{h} \in \mathbb{W}^{h}
  \label{Eqn:VMS_Galerkin_Weak_form_h}
\end{align}

In a given coordinate system, we denote 
$\mathbf{x} = (x_1, \cdots, x_{nd})$. For a given 
multi-index (i.e., tuple) of non-negative integers,  
$\alpha = (\alpha_1, \cdots, \alpha_{nd})$,  
with order $|\alpha| = \alpha_1 + \cdots + 
\alpha_{nd}$, the corresponding partial 
derivative of a scalar field, $p(\mathbf{x})$,  
can be written as follows: 
\begin{align}
  \label{Eqn:VMS_partial_derivative}
  D^{\alpha} p(\mathbf{x}) = 
  \frac{\partial^{|\alpha|} p(\mathbf{x})}{\partial x_{1}^{\alpha_1}
    \partial x_{2}^{\alpha_2} 
    \cdots \partial x_{nd}^{\alpha_{nd}}}
\end{align}  
Using the above notation, the $s^{\mathrm{th}}$
Sobolev semi-norm, $|\cdot|_{s}$, for scalar
and vector fields can be compactly written
as follows: 
\begin{align}
  &|p|_{s}^2 = |p|^2_{H^{s}(\Omega;L)} 
  = \sum_{|\alpha| = s} \int_{\Omega} 
  \left(L^s D^{\alpha} p(\mathbf{x}) \right)^2 
  \mathrm{d}\Omega  \\
  &|\mathbf{u}|_{s}^2 = |\mathbf{u}|^2_{H^{s}(\Omega;L)} 
  = \sum_{|\alpha| = s} \sum_{i = 1}^{nd} \int_{\Omega} 
  \left(L^s D^{\alpha} u_{i}(\mathbf{x}) \right)^2
  \mathrm{d}\Omega 
\end{align}
where $\sum_{|\alpha| = s}$ denotes the
summation over all the possible tuples
of non-negative integers with order $s$,
and $L$ denotes the characteristic length
of the domain. 
%-----------------------------------------------;
%  Remark: Remark about the partial derivative  ;
%-----------------------------------------------;
\begin{remark}
  Although the notation introduced in equation
  \eqref{Eqn:VMS_partial_derivative} is common in the 
  theory of partial differential equations (e.g., 
  \citep{Evans_PDE}), it may not be that common in
  the engineering literature. For the benefit
  of the reader, we provide the following few examples
  to make the notation more apparent: 
  \begin{align*}
    &\mathrm{if} \; nd = 2, \; \mathbf{x} = (x_1, x_2), \alpha = (2,1) 
    \; \mathrm{then} \; |\alpha| = 3 \; \mathrm{and} \; 
    D^{\alpha} p(\mathbf{x}) 
    = \frac{\partial^3 p(\mathbf{x})}{\partial x_1^2 \partial x_2} \\
    &\mathrm{if} \; nd = 3, \; \mathbf{x} = (x_1, x_2, x_3), 
    \alpha = (3,0,6) 
    \; \mathrm{then} \; |\alpha| = 9 \; \mathrm{and} \; 
    D^{\alpha} p(\mathbf{x}) 
    = \frac{\partial^9 p(\mathbf{x})}{\partial x_1^3 
      \partial x_3^6} 
  \end{align*}
\end{remark}

We now show the consistency of the formulation,
and then establish the stability. We also
obtain the rates of convergence with the mesh
refinement and the order of interpolation. 
To this end, the error $\mathbf{E}$ 
is defined as
%------------------------------;
%  Equation: Error definition  ;
%------------------------------;
\begin{align}
\mathbf{E} = \mathbf{U}^{h} - \mathbf{U}
\end{align}
We employ the following standard decomposition 
of error (e.g., see \citep{Brenner_Scott}):   
\begin{align}
\label{Eqn:VMS_Error_decomposition}
\mathbf{E} = \mathbf{U}^{h} - \mathbf{U} 
= \underbrace{\mathbf{U}^{h} - \widetilde{\mathbf{U}}^{h}}_{\mbox{approximation error}} 
+ \underbrace{\widetilde{\mathbf{U}}^{h} - \mathbf{U}}_{\mbox{interpolation error}} = \mathbf{E}^h + \mathbf{H}
\end{align}
where $\widetilde{\mathbf{U}}^{h}$ denotes the
interpolate of $\mathbf{U}$ onto $\mathbb{U}^{h}$, 
$\mathbf{E}^h$ is the approximation error and 
$\mathbf{H}$ denotes the interpolation error. 
The interpolation error $\mathbf{H}$ satisfies the 
following standard inequality \citep{Brezzi_Fortin}:
%\begin{align}
%\|\mathbf{H}\|_{\mathrm{stab}} \leq \tilde{C}_{1} \left(\frac{h}{L}\right)^{k+1} |\mathbf{u}_{1}|_{k+1} + \tilde{C}_{2} \left(\frac{h}{L}\right)^{l+1} |\mathbf{u}_{2}|_{l+1} + \tilde{C}_{3} \left(\frac{h}{L}\right)^m |p_{1}|_{m+1} + \tilde{C}_{4} \left(\frac{h}{L}\right)^n |p_{2}|_{n+1}
%\end{align}
%
\begin{align}
\|\mathbf{H}\|_{\mathrm{stab}} \leq C_{1} \left(\frac{h}{L}\right)^{k+1} |\mathbf{u}_{1}|_{k+1} + C_{2} \left(\frac{h}{L}\right)^{l+1} |\mathbf{u}_{2}|_{l+1} + C_{3} \left(\frac{h}{L}\right)^m |p_{1}|_{m+1} + C_{4} \left(\frac{h}{L}\right)^n |p_{2}|_{n+1}
\end{align}
In the above inequality, $h$
  is the characteristic mesh parameter,
$L$ is a characteristic dimension of the
domain $\Omega$, and $k$, $l$, $m$, and
$n$ are natural numbers.
As mentioned earlier, we have taken
$h$ to be the maximum edge length
in the mesh. However, the results
presented herein are equally valid
for other choices of $h$; for example,
the maximum element diameter.
The constants
$C_{1},~C_2,~C_3$ and $C_{4}$ are
defined as follows: 
\begin{align}
&C_{1} = C_0 \sup_{\mathbf{x}\in \Omega} \left(\mu(\mathbf{x})k^{-1}_{1}(\mathbf{x})\right)^\frac{1}{2}, \quad C_{2} = C_0 \sup_{\mathbf{x}\in \Omega} \left(\mu(\mathbf{x})k^{-1}_{2}(\mathbf{x})\right)^\frac{1}{2}, \nonumber \\
&C_{3} = \frac{C_0}{L} \sup_{\mathbf{x}\in \Omega} \left(\frac{1}{\mu(\mathbf{x})}k_{1}(\mathbf{x})\right)^\frac{1}{2} 
\quad \mathrm{and}  \quad C_{4} = \frac{C_0}{L} \sup_{\mathbf{x}\in \Omega} \left(\frac{1}{\mu(\mathbf{x})}k_{2}(\mathbf{x})\right)^\frac{1}{2}
\end{align}
where $C_0$ is a non-dimensional constant. Note that 
$C_1$, $C_2$, $C_3$ and $C_4$ are independent of 
$h$, $\mathbf{u}_{1}$, $\mathbf{u}_2$, $p_1$ and $p_2$. 

%----------------------------;
%  Proposition: Consistency  ;
%----------------------------;
\begin{theorem}{(Consistency)}
  \label{Prop:Dual_Prop_Consistency} 
The error in the finite element solution satisfies
    \begin{align}
    \mathcal{B}_{\mathrm{stab}}(\mathbf{W}^{h};
    \mathbf{E}) = 0 \quad 
    \forall \mathbf{W}^{h} \in \mathbb{W}^{h}  
    \subset \mathbb{W}
  \end{align}
\end{theorem}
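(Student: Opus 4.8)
The plan is to use Galerkin orthogonality, which follows from the residual-based (variationally consistent) nature of the proposed stabilized formulation. The key observation is that the \emph{exact} solution $\mathbf{U}$ satisfies the continuous weak statement $\mathcal{B}_{\mathrm{stab}}(\mathbf{W};\mathbf{U}) = \mathcal{L}_{\mathrm{stab}}(\mathbf{W})$ for all $\mathbf{W} \in \mathbb{W}$. To see this, one checks that when $\mathbf{U}$ is the strong solution of \eqref{Eqn:Dual_GE_Darcy_BLM_1}--\eqref{Eqn:Dual_GE_Darcy_pBC_2}, the Galerkin part $\mathcal{B}_{\mathrm{Gal}}(\mathbf{W};\mathbf{U}) = \mathcal{L}_{\mathrm{Gal}}(\mathbf{W})$ holds (integration by parts together with the strong-form residuals being zero), and moreover each stabilization term vanishes identically because, for the exact solution, $\mu \mathbf{K}_i^{-1}\mathbf{u}_i + \mathrm{grad}[p_i] - \gamma\mathbf{b} = \mathbf{0}$ pointwise in $\Omega$ by \eqref{Eqn:Dual_GE_Darcy_BLM_1}--\eqref{Eqn:Dual_GE_Darcy_BLM_2}; hence $\mathcal{B}_{\mathrm{stab}}(\mathbf{W};\mathbf{U}) - \mathcal{L}_{\mathrm{stab}}(\mathbf{W}) = \mathcal{B}_{\mathrm{Gal}}(\mathbf{W};\mathbf{U}) - \mathcal{L}_{\mathrm{Gal}}(\mathbf{W}) = 0$.

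Next I would restrict the test functions to the finite-dimensional subspace. Since $\mathbb{W}^{h} \subset \mathbb{W}$, the continuous identity above gives in particular
\begin{align}
\mathcal{B}_{\mathrm{stab}}(\mathbf{W}^{h};\mathbf{U}) = \mathcal{L}_{\mathrm{stab}}(\mathbf{W}^{h}) \quad \forall \mathbf{W}^{h} \in \mathbb{W}^{h}.
\end{align}
The finite element solution $\mathbf{U}^{h} \in \mathbb{U}^{h}$ satisfies \eqref{Eqn:VMS_Galerkin_Weak_form_h}, namely $\mathcal{B}_{\mathrm{stab}}(\mathbf{W}^{h};\mathbf{U}^{h}) = \mathcal{L}_{\mathrm{stab}}(\mathbf{W}^{h})$ for all $\mathbf{W}^{h} \in \mathbb{W}^{h}$. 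Subtracting the two identities and invoking linearity of $\mathcal{B}_{\mathrm{stab}}$ in its second argument yields
\begin{align}
\mathcal{B}_{\mathrm{stab}}(\mathbf{W}^{h};\mathbf{E}) = \mathcal{B}_{\mathrm{stab}}(\mathbf{W}^{h};\mathbf{U}^{h} - \mathbf{U}) = 0 \quad \forall \mathbf{W}^{h} \in \mathbb{W}^{h},
\end{align}
which is precisely the claimed consistency relation.

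The main obstacle, such as it is, is the first step: justifying that the strong solution genuinely satisfies the stabilized weak form. This requires that $\mathbf{U}$ be regular enough for the stabilization integrands (which involve $\mathrm{grad}[p_i]$ and $\mathbf{K}_i^{-1}\mathbf{u}_i$) to be well-defined in $L_2(\Omega)$, and that the boundary terms produced by integration by parts match those in $\mathcal{L}_{\mathrm{Gal}}$; both are covered by the regularity assumptions placed on the data and the function spaces $\mathcal{U}_i$, $\mathcal{Q}$ earlier in the paper. Everything after that is a one-line subtraction using bilinearity, so the proof is essentially immediate once variational consistency is recorded.
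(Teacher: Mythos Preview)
Your proof is correct and follows essentially the same approach as the paper: both argue that the exact solution satisfies the stabilized weak form when tested against $\mathbf{W}^{h}\in\mathbb{W}^{h}$, note that the finite element solution satisfies the same identity by definition, and subtract using bilinearity in the second slot. You simply supply more detail on \emph{why} the exact solution satisfies the stabilized form (namely, that the residual-based stabilization terms vanish pointwise for the strong solution), whereas the paper states this step without elaboration.
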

%--------------------;
%  Proof of theorem  ;
%--------------------;
\begin{proof}
The finite element solution satisfies 
\begin{align}
\mathcal{B}_{\mathrm{stab}}(\mathbf{W}^{h},\mathbf{U}^{h}) 
= \mathcal{L}_{\mathrm{stab}}(\mathbf{W}^{h}) 
\quad \forall \mathbf{W}^{h} \in \mathbb{W}^{h} 
\end{align}
The exact solution clearly satisfies 
\begin{align}
\mathcal{B}_{\mathrm{stab}}(\mathbf{W}^{h},\mathbf{U}) 
= \mathcal{L}_{\mathrm{stab}}(\mathbf{W}^{h}) 
\quad \forall \mathbf{W}^{h} \in \mathbb{W}^{h}
\end{align}
By subtracting the above two equations and using the linearity 
of the bilinear form $\mathcal{B}_{\mathrm{stab}}(\cdot,\cdot)$ 
in the second slot, we obtain the desired result.  
\end{proof}

%======================;
%  Theorem: Stability  ;
%----------------------;
\begin{theorem}{(Convergence)}
\label{Thm:Dual_Thm_Stability}
For all $\tilde{\mathbf{U}}^{h} \in \mathbb{U}^h$, 
the error satisfies 
\begin{align}
\|\mathbf{E}\|_{\mathrm{stab}} \leq 
C \|\mathbf{H}\|_{\mathrm{stab}}
\end{align}
where $C$ is a non-dimensional constant.
\end{theorem}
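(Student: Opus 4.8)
The plan is to establish this as a Céa-type estimate, combining the Galerkin orthogonality of Theorem \ref{Prop:Dual_Prop_Consistency} with the coercivity and boundedness of $\mathcal{B}_{\mathrm{stab}}$ proved above and the triangle inequality for $\|\cdot\|_{\mathrm{stab}}$ from Proposition \ref{Prop:DD2_stab_norm}. First I would invoke the decomposition \eqref{Eqn:VMS_Error_decomposition}, $\mathbf{E} = \mathbf{E}^{h} + \mathbf{H}$ with $\mathbf{E}^{h} = \mathbf{U}^{h} - \widetilde{\mathbf{U}}^{h}$, and observe that $\mathbf{E}^{h} \in \mathbb{W}^{h}$: it is a difference of two functions in $\mathbb{U}^{h}$, so each velocity component has vanishing normal trace on $\Gamma_{1}^{u}$ and $\Gamma_{2}^{u}$ (both $\mathbf{U}^{h}$ and $\widetilde{\mathbf{U}}^{h}$ match the prescribed normal velocities there), and the pressure pair still lies in $\mathcal{Q}^{h}$; hence $\mathbf{E}^{h}$ is an admissible test function.

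Next I would use coercivity to write $\|\mathbf{E}^{h}\|_{\mathrm{stab}}^{2} = \mathcal{B}_{\mathrm{stab}}(\mathbf{E}^{h},\mathbf{E}^{h})$, then substitute $\mathbf{E}^{h} = \mathbf{E} - \mathbf{H}$ in the second slot and use linearity there to get $\|\mathbf{E}^{h}\|_{\mathrm{stab}}^{2} = \mathcal{B}_{\mathrm{stab}}(\mathbf{E}^{h},\mathbf{E}) - \mathcal{B}_{\mathrm{stab}}(\mathbf{E}^{h},\mathbf{H})$. Since $\mathbf{E}^{h}\in\mathbb{W}^{h}$, the consistency result gives $\mathcal{B}_{\mathrm{stab}}(\mathbf{E}^{h},\mathbf{E}) = 0$, so $\|\mathbf{E}^{h}\|_{\mathrm{stab}}^{2} = -\mathcal{B}_{\mathrm{stab}}(\mathbf{E}^{h},\mathbf{H}) \leq \big|\mathcal{B}_{\mathrm{stab}}(\mathbf{E}^{h},\mathbf{H})\big|$. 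The boundedness theorem (with the pure number $2$ obtained there) bounds the right-hand side by $2\|\mathbf{E}^{h}\|_{\mathrm{stab}}\|\mathbf{H}\|_{\mathrm{stab}}$; dividing by $\|\mathbf{E}^{h}\|_{\mathrm{stab}}$ (the estimate being trivial if it vanishes) yields $\|\mathbf{E}^{h}\|_{\mathrm{stab}} \leq 2\|\mathbf{H}\|_{\mathrm{stab}}$. Finally, the triangle inequality for $\|\cdot\|_{\mathrm{stab}}$ gives $\|\mathbf{E}\|_{\mathrm{stab}} \leq \|\mathbf{E}^{h}\|_{\mathrm{stab}} + \|\mathbf{H}\|_{\mathrm{stab}} \leq 3\|\mathbf{H}\|_{\mathrm{stab}}$, which is the claim with $C = 3$.

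I do not expect a deep obstacle here; the one point to handle carefully is the admissibility of $\mathbf{E}^{h}$ as a test function in $\mathbb{W}^{h}$ — in particular that the interpolate $\widetilde{\mathbf{U}}^{h}$ can be chosen to respect the prescribed normal velocities and the datum-fixing condition built into $\mathcal{Q}$, so that $\mathbf{E}^{h}$ inherits homogeneous boundary data. Once that is in place the argument is purely mechanical, using only linearity of $\mathcal{B}_{\mathrm{stab}}$ in its second slot, coercivity, boundedness, and the triangle inequality. I would close by noting that $C$ is non-dimensional precisely because the boundedness constant is the dimensionless number $2$, and that combining this bound with the interpolation estimate for $\|\mathbf{H}\|_{\mathrm{stab}}$ stated just above immediately delivers the convergence rates in the mesh parameter $h/L$ and in the interpolation orders.
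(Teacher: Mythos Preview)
Your argument is correct and is actually cleaner than the paper's. Both proofs start from the same ingredients (the error decomposition $\mathbf{E}=\mathbf{E}^{h}+\mathbf{H}$, coercivity, and consistency), but they diverge at the key step. The paper decomposes the \emph{first} slot, writing $\|\mathbf{E}\|_{\mathrm{stab}}^{2}=\mathcal{B}_{\mathrm{stab}}(\mathbf{E}^{h}+\mathbf{H};\mathbf{E})=\mathcal{B}_{\mathrm{stab}}(\mathbf{H};\mathbf{E})$, and then estimates $\mathcal{B}_{\mathrm{stab}}(\mathbf{H};\mathbf{E})$ from scratch by expanding every term and applying Cauchy--Schwarz and Peter--Paul inequalities with fifteen auxiliary constants $\varepsilon_{1},\dots,\varepsilon_{15}$, eventually arriving at $C=4\sqrt{2}$. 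You instead decompose the \emph{second} slot and invoke the boundedness theorem (already proved with constant $2$) as a black box, followed by the triangle inequality, obtaining the sharper constant $C=3$. Your route is the textbook C\'ea argument and avoids re-deriving continuity estimates that the paper had already established; the paper's route, while valid, effectively reproves a weaker version of its own boundedness theorem inside the convergence proof. The admissibility issue you flag for $\mathbf{E}^{h}\in\mathbb{W}^{h}$ (matching of normal traces and the datum-fixing condition in $\mathcal{Q}$) is exactly the same technical point the paper glosses over when it asserts $\mathbf{U}_{1}-\mathbf{U}_{2}\in\mathbb{W}$ in the uniqueness proof, so you are on equal footing there.
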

%==================================;
%  Proof of the stability theorem  ;
%----------------------------------;
\begin{proof}
  Noting the decomposition of error mentioned in equation \eqref{Eqn:VMS_Error_decomposition} (i.e., $\mathbf{E} =
  \mathbf{E}^{h} + \mathbf{H}$), we proceed as follows:
  \begin{alignat}{2}
    \|\mathbf{E}\|_{\mathrm{stab}}^{2} &= \mathcal{B}_{\mathrm{stab}}(\mathbf{E};\mathbf{E})
    && \quad (\mathrm{definition~of~}\|\cdot\|_{\mathrm{stab}}\mathrm{~norm}) \nonumber \\
    & = \mathcal{B}_{\mathrm{stab}}(\mathbf{E}^{h} + \mathbf{H};\mathbf{E})
    && \quad (\mathrm{standard~decomposition~of~}\mathbf{E} ) \nonumber \\
    & = \mathcal{B}_{\mathrm{stab}}(\mathbf{E}^{h};\mathbf{E}) + \mathcal{B}_{\mathrm{stab}}(\mathbf{H};\mathbf{E})
    &&\quad (\mathrm{bilinearity}) \nonumber \\
    & = \mathcal{B}_{\mathrm{stab}}(\mathbf{H};\mathbf{E})
    &&\quad (\mathrm{consistency})
    \label{Eqn:DDP2_stab_theoerem_Estab}
  \end{alignat}
  We now estimate $\mathcal{B}_{\mathrm{stab}}
  (\mathbf{H};\mathbf{E})$. To this end, we
  denote the components of $\mathbf{E}$ and
  $\mathbf{H}$ as follows:
  \begin{align*}
  \mathbf{E} = \left\{\mathbf{e}_{\mathbf{u}_{1}},\mathbf{e}_{\mathbf{u}_{2}},e_{p_{1}},e_{p_{2}} \right\},
  \quad \mathrm{and} \quad 
  \mathbf{H} = \left\{ \boldsymbol{\eta}_{\mathbf{u}_{1}},\boldsymbol{\eta}_{\mathbf{u}_{2}},\eta_{p_{1}},\eta_{p_{2}}\right\}
  \end{align*}
  By repeated use of Cauchy-Schwartz and Peter-Paul inequalities \citep{Hunter_Nachtergaele},
  we estimate $\mathcal{B}_{\mathrm{stab}}(\mathbf{H};\mathbf{E})$ as follows:  
{\small{  
  \begin{align}
    \mathcal{B}_{\mathrm{stab}}(\mathbf{H};\mathbf{E})
    &= \mathcal{B}_{\mathrm{stab}}(\boldsymbol{\eta}_{\mathbf{u}_{1}},\boldsymbol{\eta}_{\mathbf{u}_{2}},
    \eta_{p_1},\eta_{p_{2}};\mathbf{e}_{\mathbf{u}_{1}},\mathbf{e}_{\mathbf{u}_{2}},e_{p_1},e_{p_{2}}) \nonumber \\
    & = (\boldsymbol{\eta}_{\mathbf{u}_{1}};\mu \mathbf{K}_{1}^{-1}\mathbf{e}_{\mathbf{u}_{1}})
  - (\mathrm{div}[\boldsymbol{\eta}_{\mathbf{u}_{1}}];e_{p_{1}})
  + (\eta_{p_{1}};\mathrm{div}[\mathbf{e}_{\mathbf{u}_{1}}]) \nonumber \\
  & + (\boldsymbol{\eta}_{\mathbf{u}_{2}};\mu \mathbf{K}_{2}^{-1}\mathbf{e}_{\mathbf{u}_{2}})
  - (\mathrm{div}[\boldsymbol{\eta}_{\mathbf{u}_{2}}];e_{p_{2}})
  + (\eta_{p_{2}};\mathrm{div}[\mathbf{e}_{\mathbf{u}_{2}}]) \nonumber \\
  &-\frac{1}{2} \left(\boldsymbol{\eta}_{\mathbf{u}_{1}};\mu \mathbf{K}_{1}^{-1} \mathbf{e}_{\mathbf{u}_{1}}\right) -\frac{1}{2} \left(\boldsymbol{\eta}_{\mathbf{u}_{1}};\mathrm{grad}[e_{p_{1}}] \right)+ \frac{1}{2} \left(\mathrm{grad}[\eta_{p_{1}}];\mathbf{e}_{\mathbf{u}_{1}}\right) + \frac{1}{2}\left(\mathrm{grad}[\eta_{p_{1}}];\frac{1}{\mu} \mathbf{K}_{1}\mathrm{grad}[e_{p_{1}}]\right)
  \nonumber \\
   &-\frac{1}{2} \left(\boldsymbol{\eta}_{\mathbf{u}_{2}};\mu \mathbf{K}_{2}^{-1} \mathbf{e}_{\mathbf{u}_{2}}\right) -\frac{1}{2} \left(\boldsymbol{\eta}_{\mathbf{u}_{2}};\mathrm{grad}[e_{p_{2}}] \right)+ \frac{1}{2} \left(\mathrm{grad}[\eta_{p_{2}}];\mathbf{e}_{\mathbf{u}_{2}}\right) + \frac{1}{2}\left(\mathrm{grad}[\eta_{p_{2}}];\frac{1}{\mu} \mathbf{K}_{2}\mathrm{grad}[e_{p_{2}}]\right) \nonumber \\
   & + ((\eta_{p_{1}}- \eta_{p_{2}});\beta/\mu(e_{p_{1}} - e_{p_{2}})) \nonumber \\
& \leq \frac{1}{2}\left\{ \varepsilon_1 \| \sqrt{\mu} \mathbf{K}_1^{-1/2} \boldsymbol{\eta}_{\mathbf{u}_{1}} \|^{2} + \frac{1}{\varepsilon_1} \| \sqrt{\mu} \mathbf{K}_1^{-1/2} \mathbf{e}_{\mathbf{u}_{1}} \|^{2} + \varepsilon_2\| \sqrt{\mu} \mathbf{K}_1^{-1/2} \boldsymbol{\eta}_{\mathbf{u}_{1}} \|^{2}  \right. \nonumber \\
& \left. \quad + \frac{1}{\varepsilon_2} \| \frac{1}{\sqrt{\mu}}\mathbf{K}_1^{1/2}
 \mathrm{grad}[e_{p_{1}}] \|^{2} + \varepsilon_3\| \frac{1}{\sqrt{\mu}}\mathbf{K}_1^{1/2}
 \mathrm{grad}[\eta_{p_{1}}] \|^2  +\frac{1}{\varepsilon_3} \| \sqrt{\mu} \mathbf{K}_1^{-1/2} \mathbf{e}_{\mathbf{u}_{1}} \|^2 \right. \nonumber \\
&\left. \quad +\varepsilon_4 \| \sqrt{\mu} \mathbf{K}_2^{-1/2} \boldsymbol{\eta}_{\mathbf{u}_{2}} \|^{2} + \frac{1}{\varepsilon_4} \| \sqrt{\mu} \mathbf{K}_2^{-1/2} \mathbf{e}_{\mathbf{u}_{2}} \|^{2}  + \varepsilon_5 \| \sqrt{\mu} \mathbf{K}_2^{-1/2} \boldsymbol{\eta}_{\mathbf{u}_{2}} \|^{2}  \right. \nonumber \\
& \left. \quad + \frac{1}{\varepsilon_5}\| \frac{1}{\sqrt{\mu}}\mathbf{K}_2^{1/2} \mathrm{grad}[e_{p_{2}}] \|^{2} + \varepsilon_6 \| \frac{1}{\sqrt{\mu}}\mathbf{K}_2^{1/2} \mathrm{grad}[\eta_{p_{2}}] \|^2  + \frac{1}{\varepsilon_6} \| \sqrt{\mu} \mathbf{K}_2^{-1/2} \mathbf{e}_{\mathbf{u}_{2}} \|^2  \right. \nonumber \\
& \left. \quad +  \frac{\varepsilon_7}{2} \| \sqrt{\mu} \mathbf{K}_1^{-1/2} \boldsymbol{\eta}_{\mathbf{u}_{1}} \|^2 + \frac{1}{2 \varepsilon_7} \| \sqrt{\mu} \mathbf{K}_1^{-1/2} \mathbf{e}_{\mathbf{u}_{1}} \|^2 +  \frac{\varepsilon_8}{2} \| \sqrt{\mu} \mathbf{K}_1^{-1/2} \boldsymbol{\eta}_{\mathbf{u}_{1}} \|^2  \right. \nonumber \\
& \left. \quad + \frac{1}{2\varepsilon_8} \| \frac{1}{\sqrt{\mu}}\mathbf{K}_1^{1/2} \mathrm{grad}[e_{p_{1}}] \|^2 +  \frac{\varepsilon_{9}}{2} \| \frac{1}{\sqrt{\mu}}\mathbf{K}_1^{1/2} \mathrm{grad}[\eta_{p_{1}}] \|^2 + \frac{1}{2 \varepsilon_{9}} \| \sqrt{\mu} \mathbf{K}_1^{-1/2} \mathbf{e}_{\mathbf{u}_{1}} \|^2  \right. \nonumber \\
& \left. \quad + \frac{\varepsilon_{10}}{2} \| \frac{1}{\sqrt{\mu}}\mathbf{K}_1^{1/2} \mathrm{grad}[\eta_{p_{1}}] \|^2 + \frac{1}{2 \varepsilon_{10}} \| \frac{1}{\sqrt{\mu}}\mathbf{K}_1^{1/2} \mathrm{grad}[e_{p_{1}}] \|^2 + \frac{\varepsilon_{11}}{2} \| \sqrt{\mu} \mathbf{K}_2^{-1/2} \boldsymbol{\eta}_{\mathbf{u}_{2}} \|^2  \right. \nonumber \\
  & \left. \quad + \frac{1}{2\varepsilon_{11}} \| \sqrt{\mu} \mathbf{K}_2^{-1/2} \mathbf{e}_{\mathbf{u}_{2}} \|^2 +  \frac{\varepsilon_{12}}{2} \| \sqrt{\mu} \mathbf{K}_2^{-1/2} \boldsymbol{\eta}_{\mathbf{u}_{2}} \|^2 + \frac{1}{2 \varepsilon_{12}} \| \frac{1}{\sqrt{\mu}}\mathbf{K}_2^{1/2} \mathrm{grad}[e_{p_{2}}] \|^2 \right. \nonumber \\
& \left.\quad + \frac{\varepsilon_{13}}{2} \| \frac{1}{\sqrt{\mu}}\mathbf{K}_2^{1/2} \mathrm{grad}[\eta_{p_{2}}] \|^2 +\frac{1}{2 \varepsilon_{13}} \| \sqrt{\mu} \mathbf{K}_2^{-1/2} \mathbf{e}_{\mathbf{u}_{2}} \|^2 + \frac{\varepsilon_{14}}{2} \| \frac{1}{\sqrt{\mu}}\mathbf{K}_2^{1/2} \mathrm{grad}[\eta_{p_{2}}] \|^2  \right. \nonumber \\
& \left.\quad + \frac{1}{2 \varepsilon_{14}} \| \frac{1}{\sqrt{\mu}}\mathbf{K}_2^{1/2} \mathrm{grad}[e_{p_{2}}] \|^2 + \varepsilon_{15} \|\left(\beta/ \mu\right)^{1/2} (\eta_{p_{1}}-\eta_{p_{2}})\|^2 \right. \nonumber \\
& \left.\quad + \frac{1}{\varepsilon_{15}} \| \left(\beta/ \mu\right)^{1/2} (e_{p_{1}}-e_{p_{2}}) \|^2  \right\}
  \end{align}
  }}
  where $\varepsilon_{i},~i=1,\ldots, 15$ are positive
  constants. By choosing 
\begin{align}
  2\varepsilon_1 = 2 \varepsilon_3 = 2\varepsilon_4 = 2\varepsilon_6
  = \varepsilon_7 = \varepsilon_{9} = \varepsilon_{11} = \varepsilon_{13}= 10
  \quad \mathrm{and} \nonumber \\
  2\varepsilon_2 = 2\varepsilon_5 = \varepsilon_8 = \varepsilon_{10} = \varepsilon_{12} = \varepsilon_{14}= 6, \; \varepsilon_{15} = 1
\end{align}
we obtain the following inequality: 
\begin{align}
  \label{Eqn:DPP_estimate_not_sharp}
\mathcal{B}_{\mathrm{stab}}(\mathbf{H};\mathbf{E})& \leq \frac{1}{2}\left\{ \| \mathbf{E}\|_{\mathrm{stab}}^2 + 16 \| \sqrt{\mu} \mathbf{K}_1^{-1/2} \boldsymbol{\eta}_{\mathbf{u}_{1}} \|^{2} + 13 \| \frac{1}{\sqrt{\mu}}\mathbf{K}_1^{1/2}
 \mathrm{grad}[\eta_{p_{1}}] \|^2 \right. \nonumber \\
& \left. \quad + 16 \| \sqrt{\mu} \mathbf{K}_2^{-1/2} \boldsymbol{\eta}_{\mathbf{u}_{2}} \|^{2} + 13 \| \frac{1}{\sqrt{\mu}}\mathbf{K}_2^{1/2}
 \mathrm{grad}[\eta_{p_{2}}] \|^2 +  \|\left(\beta/ \mu\right)^{1/2} (\eta_{p_{1}}-\eta_{p_{2}})\|^2 \right\}\nonumber \\
& \leq \frac{1}{2} \|\mathbf{E}\|_{\mathrm{stab}}^2 + 16 \| \mathbf{H}\|_{\mathrm{stab}}^2
\end{align}
Noting equation \eqref{Eqn:DDP2_stab_theoerem_Estab} we have 
\begin{align}
  \mathcal{B}_{\mathrm{stab}}(\mathbf{H};\mathbf{E}) = 
  \| \mathbf{E}\|_{\mathrm{stab}}^2
  \leq 32 \| \mathbf{H}\|_{\mathrm{stab}}^2
\end{align}
which gives the following estimate of the total error in terms of the interpolation error: 
\begin{align}
  \|\mathbf{E}\|_{\mathrm{stab}} \leq 4\sqrt{2} \| \mathbf{H}\|_{\mathrm{stab}}
\end{align}
This completes the proof.
\end{proof}

The set of choices made for constants,
$\varepsilon_{i} \; (i = 1, \cdots,
15)$, is one of many such ones to
obtain an upper bound for
$\mathcal{B}_{\mathrm{stab}}
(\mathbf{H};\mathbf{E})$ in terms
of the total error, $\mathbf{E}$,
and the interpolation error,
$\mathbf{H}$. We do not claim
that this selection of constants
is optimal. Certainly, the estimate
\eqref{Eqn:DPP_estimate_not_sharp}
and the subsequent ones are not sharp.
Although obtaining sharp estimates is
of theoretical significance in mathematical
analysis, it is not crucial to establish the
convergence of the proposed stabilized
formulation. We, therefore, do not pursue
further with respect to obtaining the optimal
choices for the constants $\varepsilon_{i}$,
and for obtaining a sharp estimate for
$\mathcal{B}_{\mathrm{stab}}(\mathbf{H};\mathbf{E})$.

%*****************************************************;
%                                                     ;
%  NAME                                               ;
%    S5_VMS_Numerical_convergence.tex                 ;
%                                                     ;
%  WRITTEN BY                                         ;
%    Kalyana Babu Nakshatrala                         ;
%    Seyedeh Hanie Seyed Joodat                       ;
%                                                     ;
%*****************************************************;
\section{PATCH TESTS AND NUMERICAL CONVERGENCE ANALYSIS}
\label{Sec:S5_VMS_Canonical}
In order to assess the convergence behavior of
a numerical (finite element) formulation and to
determine whether it is programmed correctly,
patch tests are commonly used.
In this section, we first illustrate the performance
of the proposed stabilized mixed formulation under
the equal-order interpolation for all the field
variables using one-dimensional and three-dimensional
constant-flow patch tests. We also compare the results
obtained under the proposed stabilized mixed formulation
with that of the classical mixed formulation
(which is based on the Galerkin formalism).
We then perform a systematic numerical convergence
analysis of the proposed stabilized mixed formulation
under $h$- and $p$-refinements, and compare the
obtained rates of convergence with the theory. 

Under our studies on patch tests and numerical
convergence analysis, we often use the term
machine precision, which is the smallest
difference between two numbers that the
computing machine recognizes \citep{Heath_numerical}.
Mathematically, the machine precision of
a computing machine, $\epsilon_{\mathrm{mach}}$,
satisfies
%-------------------------------;
%  Equation: Machine precision  ;
%-------------------------------;
\begin{alignat*}{2}
  &(1 + \epsilon) - 1 = 0 \quad
  &&\forall \epsilon < \epsilon_{\mathrm{mach}} \\
  \mathrm{and}\; &(1 + \epsilon) - 1 = \epsilon \neq 0 \quad
  &&\forall \epsilon > \epsilon_{\mathrm{mach}}
\end{alignat*}
It is important to note that the machine precision
depends on the underlying hardware of the computer,
and hence, its value can vary from one computer to
another. 
It is also important to note that the machine precision
of a computer is not the smallest number that the
computer can represent. To put the things quantitatively,
the machine precision on a 32-bit machine is approximately
$10^{-7}$ and on a 64-bit machine, it is approximately
$10^{-16}$ \citep{higham2002accuracy}.
On the other hand, the smallest positive numbers
that a 32-bit machine and a 64-bit machine can
represent are approximately $10^{-38}$ and $10^{-308}$,
respectively \citep{higham2002accuracy}. 

%==========================================;
%  Subsection: One-dimensional patch test  ;
%==========================================;
\subsection{One-dimensional constant flow patch test}
\label{Sub:One-dimensional_patch_test}
The purpose of solving the one-dimensional example
is to provide a simple numerical tool for testing
whether the proposed mixed formulation satisfies
the LBB condition. Figure \ref{Fig:1D_patch_test}
provides a pictorial description of the problem,
and Table \ref{Tb1:1D_patch_test_data}
  provides the data-set for this problem.
The domain is a line of unit
length along x direction. On the left end of the
domain, pressures $p_{1}^{\mathrm{L}}$ and $p_{2}^{\mathrm{L}}$
are prescribed in macro- and micro-pore networks,
respectively. Similarly, on the right end of the
domain, $p_{1}^{\mathrm{R}}$ and $p_{2}^{\mathrm{R}}$
are, respectively, prescribed in the macro- and
the micro-pore networks.
Since a pressure boundary 
condition is prescribed for at least one
of the pore-networks, the condition of
vanishing mean pressure in one of pore-networks
in the function space $\mathcal{Q}$ (which is defined
in equation \eqref{Eqn:VMS_Function_space_Q})
is not appropriate for this problem. See the
discussion in Section \ref{Sec:S3_VMS_Mixed}.

%=========================================================;
%  Table: 1D constant flow patch test and 1D convergence  ; 
%---------------------------------------------------------;
{\small
	\begin{table}[!h]
		\caption{Data-set for one-dimensional
				constant flow patch test and 1D numerical
				convergence analysis.}
		\centering
		\begin{tabular}{|c|c|} \hline
			Parameter & Value \\
			\hline
		    $\gamma b$ & $0.0$\\
		    $L$ & $1.0$ \\
		    $\mu $ & $1.0$ \\
		    $\beta $ & $1.0$ \\
		    $k_1$&  $1.0$ \\
		    $k_2$&  $0.01$\\
		    $p_1^L$&  $10.0$ \\
		    $p_1^R$&  $1.0$ \\
		    $p_2^L$&  $10.0$ \\
		    $p_2^R$&  $1.0$ \\
			\hline 
		\end{tabular}
		\label{Tb1:1D_patch_test_data}
	\end{table}
}

The governing equations can be written as follows:
\begin{subequations}
  \begin{alignat}{2}
  \label{Eqn:VMS_BoLM1_1Dproblem}
    &\mu k_{1}^{-1} \mathbf{u}_1(\mathbf{x})
    + \frac{d p_1}{dx} = 0 
    &\quad \mathrm{in} \; (0,L) \\
    \label{Eqn:VMS_BoLM2_1Dproblem}
    &\mu k_{2}^{-1} \mathbf{u}_2(\mathbf{x})
    + \frac{d p_1}{dx} = 0
    &\quad \mathrm{in} \; (0,L) \\
    \label{Eqn:VMS_BoM1_1Dproblem}
    &\frac{d u_1}{dx} = +\chi(x)  
    &\quad \mathrm{in} \; (0,L) \\
    \label{Eqn:VMS_BoM2_1Dproblem}
    &\frac{d u_2}{dx} = -\chi(x)  
    &\quad \mathrm{in} \; (0,L) \\
    \label{Eqn:VMS_pBC1_1Dproblem}
    &p_1(x = 0) = p_1^{\mathrm{L}}, \quad  
    p_1(x = L) = p_1^{\mathrm{R}} \\
    \label{Eqn:VMS_pBC2_1Dproblem}
    &p_2(x = 0) = p_2^{\mathrm{L}}, \quad  
    p_2(x = L) = p_2^{\mathrm{R}} 
  \end{alignat}
\end{subequations}
It should be noted that the quantities used in equations
\eqref{Eqn:VMS_BoLM1_1Dproblem}--\eqref{Eqn:VMS_pBC2_1Dproblem}
are non-dimensional. More details on non-dimensionalization
procedure can be found in \citep{Nakshatrala_Joodat_Ballarini}. 
In this boundary value problem, $k_1$
and $k_2$ are assumed to be independent
of $x$ and the mass transfer between the
two pore-networks takes the following form: 
%---------------------------;
%  Equation: Mass transfer  ;
%---------------------------;
\begin{align}
  \chi(x) =
  -\left(p_{1}(x)
  - p_{2}(x)\right)
\end{align}

The analytical solution for this simple 1D
problem includes constant velocities and
linearly varying pressures (from $p^L_i$
to $p^R_i$) at each pore-network along
the x direction.

Figure  \ref{Fig:Pressure_velocity_VMS_Galerkin_1D}
shows the numerical results for pressure and velocity
profiles in the two pore-networks under Galerkin and
the proposed stabilized mixed formulations. The values
of velocity vector fields in the two pore-networks match
the analytical solutions under both proposed the stabilized
mixed formulation and the Galerkin formulation. As can
be seen in Figures \ref{Fig:Macro_pressure_patch_test_VG}
and \ref{Fig:Micro_pressure_patch_test_VG}, under the stabilized mixed formulation, pressures in the two pore-networks vary linearly from the prescribed value at the left end ($p_{i}^{\mathrm{L}},~ i=1,2$) to the prescribed one at the right end ($p_{i}^{\mathrm{R}},~ i=1,2$). These results are in agreement with the corresponding analytical solutions up to the machine precision, thus showing that the proposed formulation performs well and that it satisfies the 1D patch test. However, under the Galerkin formulation, spurious oscillations are observed in the pressure fields in both macro- and micro-networks even for equal-order interpolation. 

%============================================;
%  Subsection: Three-dimensional patch test  ;
%============================================;
\subsection{Three-dimensional constant flow patch test}
Previous research studies have shown that many existing numerical formulations cannot perform well when they are extended to 3D settings \citep{Nakshatrala_Turner_Hjelmstad_Masud_CMAME_2006_v195_p4036, Hughes_Masud_Wan_2006}. Herein, using the 3D constant flow patch test we will show that the proposed stabilized mixed formulation performs well even in 3D settings and it's capable of satisfying the LBB condition. To illustrate this, we consider the unit cube computational domain shown in Figure \ref{Fig:3D_patch_test}. On the left and right faces, pressures $p_{i}^{\mathrm{L}},~i=1,2$ and $p_{i}^{\mathrm{R}},~i=1,2$ are prescribed respectively where $ i=1$ denotes the macro-pore network and $i=2$ represents the micro-pore network. On the other faces, the velocity boundary condition is prescribed in the two pore-networks (i.e., $\mathbf{u}_{i} \cdot \widehat{\mathbf{n}}=0,~i=1,2$). Table \ref{Tb2:3D_patch_test_data} provides the parameter values for this test problem.
%===================================================;
%  Table: Data-set for 3D constant flow patch test  ;
%---------------------------------------------------;
{\small
  \begin{table}[!h]
    \caption{ Data-set for
        three-dimensional constant flow
        patch test.}
    \centering
    \begin{tabular}{|c|c|} \hline
      Parameter & Value \\
      \hline
      $\gamma \mathbf{b}$ & $\{0.0,0.0,0.0\}$\\
      $L_x $ & $1.0$ \\
      $L_y $ & $1.0$ \\
      $\mu $ & $1.0$ \\
      $\beta $ & $1.0$ \\
      $k_1$&  $1.0$ \\
      $k_2$&  $0.01$\\
      $p_1^L$&  $10.0$ \\
      $p_1^R$&  $1.0$ \\
      $p_2^L$&  $10.0$ \\
      $p_2^R$&  $1.0$ \\
      \hline 
    \end{tabular}
    \label{Tb2:3D_patch_test_data}
  \end{table}
}

The analytical solution pair for this constant flow patch test includes constant velocity along x direction and pressure linearly varying along x direction at each pore-network. Figure \ref{Fig:Pressure_VMS_Galerkin_3D} shows the numerical results for pressure profiles associated with the two pore-networks under Galerkin and the stabilized mixed formulations. It is observed that the Galerkin formulation produces spurious oscillations in micro- and macro-pressures even for equal-order interpolation. This indicates that Galerkin formulation cannot accurately predict pressure variations and that the results are not stable. These oscillations are completely eliminated by the proposed stabilized mixed formulation, thus illustrating the stability of the solution. This verifies that the proposed numerical formulation performs well and satisfies the 3D constant flow patch test.
%===================================================================;
%  Subsubsection: Numerical convergence under h- and p-refinements  ;
%===================================================================;
\subsection{Numerical convergence under $h$- and $p$-refinements}
In this subsection, the convergence behavior of the proposed stabilized mixed formulation is evaluated. For this purpose, the convergence analysis is performed in 1D and 2D settings. The convergence rates are obtained under two different approaches. The first method is called \emph{$h$-refinement} where the number of elements is increased and hence the size of elements (denoted by ``$h$'') in the domain is decreased. The convergence rates under $h$-refinement are obtained for various polynomial orders. In the second approach, the so-called \emph{$p$-refinement}, the convergence rate is calculated by changing the order of polynomial while the total number of elements in the domain is kept fixed (nx = 5).

%====================================================;
%  Subsubsection: 1D numerical convergence analysis  ;
%====================================================;
\subsubsection{1D numerical convergence analysis}
For the convergence analysis in the 1D setting, we select the previously defined one-dimensional patch test (subsection \ref{Sub:One-dimensional_patch_test}). In Figures \ref{Fig:Dual_Problem_1_h_refinement} and \ref{Fig:Dual_Problem_1_p_refinement}, the convergence rates under $h$- and $p$-refinements are shown for the $L_2$-norm of the velocity fields in the macro- and micro-pore networks (denoted by ``$L_2~u_1$'' and ``$L_2~u_2$'', respectively), the $L_2$-norm of the pressure fields in the macro- and micro-pore networks (denoted by ``$L_2~p_1$'' and ``$L_1~p_2$'', respectively), and the $H^1$-norm of the pressure fields in the macro- and micro-pore networks (denoted by ``$H^1~p_1$'' and ``$H^1~p_2$'', respectively). As can be seen in these figures, the rate of convergence for $h$-refinement is polynomial and for $p$-refinement is exponential, which are in accordance with the theory. 

%====================================================;
%  Subsubsection: 2D numerical convergence analysis  ;
%====================================================;
\subsubsection{2D numerical convergence analysis}
The convergence analysis in the 2D setting is performed on the unit square domain shown in Figure \ref{Fig:Dual_Problem_2D_domain}. The macro- and micro-pressures are prescribed on the four sides of the computational domain. Table \ref{Tb3:3D_convergence_analysis_data} provides the parameter values for the 2D convergence analysis.
%=========================================================;
%  Table: Data-set for 2D numerical convergence analysis  ;
%---------------------------------------------------------;
%{\small
%  \begin{table}[!h]
%    \caption{{\color{red}Data-set for 2D numerical convergence analysis.}}
%    \centering
%    \begin{tabular}{|c|c|} \hline
%      Parameter & Value \\
%      \hline
%      $\gamma \mathbf{b}$ & $\{0.0,0.0\}$\\
%      $L_x $ & $1.0$ \\
%      $L_y $ & $1.0$ \\
%      $\mu $ & $1.0$ \\
%      $\beta $ & $1.0$ \\
%      $k_1$&  $1.0$ \\
%      $k_2$&  $0.1$\\
%      $\eta$ & $\sqrt{11} \simeq 3.3166$\\
%      $p_1^{\mathrm{left}}$&  $ \frac{1}{\pi} \sin(\pi y) - \exp(3.3166 y)$ \\
%      $p_1^{\mathrm{right}}$&  $ \frac{1}{\pi} \exp(\pi) \sin(\pi y) - \exp(3.3166 y)$ \\
%      $p_1^{\mathrm{top}}$&  $-\exp(3.3166)$ \\
%      $p_1^{\mathrm{bottom}}$&  $ -1.0$ \\
%      $p_2^{\mathrm{left}}$&  $ \frac{1}{\pi} \sin(\pi y) + 10 \times \exp(3.3166 y)$ \\
%      $p_2^{\mathrm{right}}$&  $ \frac{1}{\pi} \exp(\pi) \sin(\pi y) + 10 \times \exp(3.3166 y)$ \\
%      $p_2^{\mathrm{top}}$&  $10\times \exp(3.3166)$ \\
%      $p_2^{\mathrm{bottom}}$&  $ 10.0$ \\
%      \hline 
%    \end{tabular}
%    \label{Tb3:3D_convergence_analysis_data}
%  \end{table}
%} 
%
{\small
	\begin{table}[!h]
		\caption{Data-set for 2D numerical convergence analysis.}
		\centering
		\begin{tabular}{|c|c|} \hline
			Parameter & Value \\
			\hline
			$\gamma \mathbf{b}$ & $\{0.0,0.0\}$\\
			$L_x $ & $1.0$ \\
			$L_y $ & $1.0$ \\
			$\mu $ & $1.0$ \\
			$\beta $ & $1.0$ \\
			$k_1$&  $1.0$ \\
			$k_2$&  $0.1$\\
			$\eta$ & $\sqrt{11} \simeq 3.3166$\\
			\hline
			$p_i^{\mathrm{left}},~i=1,2$&  Obtained by evaluating   \\
			$p_i^{\mathrm{right}},~i=1,2$&  the analytical solution \\
			$p_i^{\mathrm{top}},~i=1,2$&   (equations \eqref{Eqn:2D_Convergence_Analytical_p1} and \eqref{Eqn:2D_Convergence_Analytical_p2} ) \\
			$p_i^{\mathrm{bottom}},~i=1,2$&  on the respective boundaries. \\
			\hline 
		\end{tabular}
		\label{Tb3:3D_convergence_analysis_data}
	\end{table}
} 
For convenience, let us define 
\begin{align}
\eta := \sqrt{\beta \frac{k_1 + k_2}{k_1 k_2}}
\end{align}
Then the analytical solution for the velocity fields can be defined as 
\begin{align}
\mathbf{u}_1(x,y) &= -k_1\left(\begin{array}{c}
\exp(\pi x) \sin(\pi y) \\
\exp(\pi x) \cos(\pi y)
\end{array}\right) 
+ \left(\begin{array}{c}
0 \\
\frac{\eta}{\beta} \exp(\eta y) 
\end{array}\right) 
\\
\mathbf{u}_2(x,y) &= -k_2\left(\begin{array}{c}
\exp(\pi x) \sin(\pi y) \\
\exp(\pi x) \cos(\pi y)
\end{array}\right) 
- \left(\begin{array}{c}
0 \\
\frac{\eta}{\beta} \exp(\eta y)
\end{array}\right) 
\end{align}
The analytical solution for the pressure fields can then be obtained as follows:
\begin{align}
p_1(x,y) &= \frac{\mu}{\pi} \exp(\pi x) \sin(\pi y) - \frac{\mu}{\beta k_1} \exp(\eta y) \label{Eqn:2D_Convergence_Analytical_p1}\\
p_2(x,y) &= \frac{\mu}{\pi} \exp(\pi x) \sin(\pi y) + \frac{\mu}{\beta k_2} \exp(\eta y) 
\label{Eqn:2D_Convergence_Analytical_p2}
\end{align}
Figure \ref{Fig:Dual_Problem_2D_h_refinement} provides the convergence rates under $h$-refinement for the $L_2$-norm and the $H^1$-norm of the pressure fields in the macro- and micro-pore networks. The results under $p$-refinement  for the $L_2$-norm of the pressure fields are also provided in Figure \ref{Fig:Dual_Problem_2D_p_refinement}. The rates of convergence for $h$- and $p$-refinements are respectively polynomial and exponential, which are in accordance with the theory. As can be seen, the error under $p$-refinement flattened out around $10^{−16}$ for larger number of degrees-of-freedom. This is expected as the machine precision on a 64-bit machine is around $10^{−16}$.
The results obtained from the one-dimensional and two-dimensional problems verify that the proposed stabilized mixed formulation is convergent.

%*****************************************************;
%                                                     ;
%  NAME                                               ;
%    S6_VMS_NR.tex                                    ;
%                                                     ;
%  WRITTEN BY                                         ;
%    Kalyana Babu Nakshatrala                         ;
%    Hanie Joodat                                     ;
%                                                     ;
%*****************************************************;
\section{REPRESENTATIVE NUMERICAL RESULTS}
\label{Sec:S6_VMS_NR}

%============================================;
%  Subsection: Idealized reservoir modeling  ;
%============================================;
In the previous section, the convergence behavior
of the proposed mixed formulation has been assessed
using patch tests and numerical convergence analysis.
In this section, using representative problems with
relevance to technological applications, the flow
characteristics in the porous media exhibiting
double porosity/permeability are studied. The
performance of the Nitsche's method is illustrated
using two-dimensional candle filter problem and
three-dimensional hollow sphere problem. 

%=======================================================;
%  Subsection: Two-dimensional boundary value problem   ;
%=======================================================;
\subsection{Two-dimensional candle filter problem}
The aim of this problem is to show how the velocity boundary conditions can be enforced weakly in two-dimensional settings using Nitsche's method. This two-dimensional boundary value problem is a model of water flow in candle filters which are commonly used for purifying drinking water. The domain consists of a circular disc of inner radius of $r_i = a$ and outer radius of $r_o = 1$.    
For the  macro-pore network, the inner surface is subjected to a pressure ($p_{1}(r=r_i)= 1.0~\mathrm{atm}$), and the outer surface is exposed to the atmosphere ($p_{1}(r=r_{o})= 0$). For the micro-pore network, no discharge is allowed from the inner and outer surfaces (i.e. $ \mathbf{u}_{2} \cdot \widehat{\mathbf{n}} = 0 $).
Figure \ref{Fig:2D_problem_candle_filter_discription} shows the computational domain for this problem as well as the boundary conditions. Considering the underlying symmetry in the problem, the velocities and pressures in the two pore-networks are assumed to be functions of $r$ only. Parameter values for this test problem are provided in Table \ref{Tb4:Candle_filter_data}.
{\small
  \begin{table}[!h]
    \caption{Data-set for
      two-dimensional candle filter problem.}
    \centering
    \begin{tabular}{|c|c|} \hline
      Parameter & Value \\
      \hline
      $\gamma \mathbf{b}$ & $\{0.0,0.0\}$\\
      $r_o $ & $1.0$ \\
      $r_i $ & $0.3$ \\
      $\mu $ & $1.0$ \\
      $\beta $ & $1.0$ \\
      $k_1$&  $1.0$ \\
      $k_2$&  $0.01$\\
      $p_1(r=0.3)$ & $1.0$ \\
      $p_1(r=1.0)$ & $0.0$ \\            
      $u_{n2}(r=0.3)$ & $0.0$ \\
      $u_{n2}(r=1.0)$ & $0.0$ \\
      \hline 
    \end{tabular}
    \label{Tb4:Candle_filter_data}
  \end{table}
}

The relevant governing equations
in the polar coordinates can be
summarized as follows:
%------------------------------------;
%  Equation: Boundary value problem  ;
%------------------------------------;
\begin{subequations}
  \begin{align}
    \label{Eqn:VMS_GE_BLM}
    &\frac{\mu}{k_1} u_{1} + \frac{dp_1}{dr} = 0, \quad  
    \frac{1}{r} \frac{d (r u_{1})}{dr} + (p_1 - p_2) = 0, \quad \forall r \in (a,1) \\
    \label{Eqn:VMS_GE_BofM}
    &\frac{\mu}{k_2} u_{2} + \frac{dp_2}{dr} = 0, \quad  
    \frac{1}{r} \frac{d (r u_{2})}{dr} - (p_1 - p_2) = 0,
    \quad \forall r \in (a,1)  \\
    \label{Eqn:VMS_GE_BCs}
    &p_{1}(r = a) = 1, \quad p_{1}(r = 1) = 0, \quad  
    u_{2}(r = a) = 0, \quad u_{2}(r = 1) = 0 
  \end{align}
  \label{Eqn:Candle_filter_Governing_Eqns}
\end{subequations}
Figures \ref{Fig:2D_problem_candle_filter_p} and \ref{Fig:2D_problem_candle_filter_v} show the pressure and velocity profiles under the extended framework for weak enforcement of velocity boundary conditions.
The micro-velocity profile implies that although there is no discharge
from the micro-pore network on the boundary, there is discharge
in the micro-pore network within the domain. It can be concluded that the surface pore-structure is not the only factor that characterizes the flow throughout the domain and that the internal pore-structure plays a significant role. 

%=========================================================;
%  Subsection: Three-dimensional boundary value problem   ;
%=========================================================;
\subsection{Three-dimensional hollow sphere problem}
The hollow sphere problem is used to examine the weak 
enforcement of the velocity boundary conditions in 3D 
settings using Nitsche's method. The computational 
domain consists of a sphere of radius $r_o = 1.0$, 
at the center of which is a spherical hole of radius $r_i = a$. 
At the inner surface of the hole, the macro-pore
network is subjected to a pressure $p_{1}(r=r_{i}) = 1$,
and at the outer surface of the sphere, the macro-pore
network is subjected to a pressure $p_{1}(r=r_{o}) = 0$.
For the micro-pore network, there is no discharge from
the inner and outer surfaces (i.e., $ \mathbf{u}_{2}
\cdot \widehat{\mathbf{n}} = 0 $). Table \ref{Tb5:Hollow_sphere_data} provides the parameter values for this problem.
{\small
	\begin{table}[!h]
		\caption{Data-set for three-dimensional hollow sphere problem.}
		\centering
		\begin{tabular}{|c|c|} \hline
			Parameter & Value \\
			\hline
			$\gamma \mathbf{b}$ & $\{0.0,0.0,0.0\}$\\
			$r_o $ & $1.0$ \\
			$r_i $ & $0.3$ \\
			$\mu $ & $1.0$ \\
			$\beta $ & $1.0$ \\
			$k_1$&  $1.0$ \\
			$k_2$&  $0.01$\\
			$p_1(r=0.3)$ & $1.0$ \\
			$p_1(r=1.0)$ & $0.0$ \\            
			$u_{n2}(r=0.3)$ & $0.0$ \\
			$u_{n2}(r=1.0)$ & $0.0$ \\
			\hline 
		\end{tabular}
		\label{Tb5:Hollow_sphere_data}
	\end{table}
}

Similar to the candle filter problem, all the
variables can be considered to be functions of
$r$ only due to the symmetry. Therefore, the
governing equations can be written as follows:
%------------------------------------;
%  Equation: Boundary value problem  ;
%------------------------------------;
%\begin{subequations}
%  \begin{align}
%    \label{BoLM1}
%    &\alpha_{1} u_{1} + \frac{dp_1}{dr} = 0,\quad \forall r \in (a,1) \\
%    %
%    \label{BoLM2}
%    &\alpha_{2} u_{2} + \frac{dp_2}{dr} = 0,\quad \forall r \in (a,1) \\
%    %
%    \label{BoM1}
%    &\frac{1}{r^2} \frac{d (r^2 u_{1})}{dr} = - (p_1 - p_2),\quad \forall r \in (a,1) \\ 
%    %
%    \label{BoM2}
%    &\frac{1}{r^2} \frac{d (r^2 u_{2})}{dr} = + (p_1 - p_2),\quad \forall r \in (a,1) \\
%    %%
%    \label{BCs}
%    &p_{1}(r = a) = 1, \quad p_{1}(r = 1) = 0, \\
%    %
%    &u_{2}(r = a) = 0, \quad u_{2}(r = 1) = 0 
%  \end{align}
%\end{subequations}
% 
\begin{subequations}
  \begin{align}
    &\frac{\mu}{k_1} u_{1} + \frac{dp_1}{dr} = 0, \quad \frac{1}{r^2} \frac{d (r^2 u_{1})}{dr} = - (p_1 - p_2), \quad \forall r \in (a,1) \\
    &\frac{\mu}{k_2} u_{2} + \frac{dp_2}{dr} = 0,\quad 
    \frac{1}{r^2} \frac{d (r^2 u_{2})}{dr} = + (p_1 - p_2),\quad \forall r \in (a,1) \\ 
    &p_{1}(r = a) = 1, \quad p_{1}(r = 1) = 0, \quad
    u_{2}(r = a) = 0, \quad u_{2}(r = 1) = 0 
  \end{align}
\end{subequations}
The numerical results for the pressures and velocity fields are shown in Figures
\ref{Fig:3D_problem_candle_filter_p} and \ref{Fig:3D_problem_candle_filter_v}.
It is seen that under the extended framework for weak enforcement of velocity boundary conditions, the results are stable and although no discharge is considered
for the micro-pore network on the boundary, there is discharge in the micro-pore
network within the domain. The important role of the internal pore-structure in
such complex porous domains pitches a case for using advanced characterization
tools like X-ray micro-computed tomography (i.e., $\mu$-CT)
\citep{stock2008microcomputed}.

It should be noted that in order to provide a proper visualization of the velocity fields 
within the domain, it has been clipped and put in a perspective 
view. Since an unstructured mesh has been used for this 3D domain, the 
visualization software (i.e., ParaView) cuts through the elements 
and interpolates the values to draw the contours. This introduces 
some dependence on the angular coordinates (see Figure \ref{Fig:3D_problem_candle_filter_v}). 
But one does not find this angular dependence in the 
raw data for various angles for a given radius.

%*****************************************************;
%                                                     ;
%  NAME                                               ;
%    S7_VMS_Verification.tex                          ;
%                                                     ;
%  WRITTEN BY                                         ;
%    Kalyana Babu Nakshatrala                         ;
%    Hanie Joodat                                     ;
%                                                     ;
%*****************************************************;
\section{MECHANICS-BASED ASSESSMENT OF NUMERICAL ACCURACY}
\label{Sec:S7_VMS_Verification}
For all the problems presented in the previous
sections (including the ones under the numerical
convergence analysis), analytical solutions are
known. For such problems, the accuracy of numerical
solutions can be easily quantified by comparing 
them with the analytical solutions (either 
point-wise or in some appropriate norm). But
for practical problems, analytical solutions are
seldom known. The question then will be how to
assess the accuracy of numerical solutions for those
problems with no available analytical solution. 
The source of possible error in the numerical solutions 
could be either due to the formulation itself or in the 
computer implementation. Even if the formulation 
is known to be a converging scheme, there could 
be errors in the computer implementation or in 
setting up the problem to obtain the numerical 
solutions (e.g., wrong input data). 

Fortunately, the solutions under the double porosity/permeability
model enjoy several important mathematical properties, which can
serve as \emph{a posteriori} error measures. More importantly,
these mathematical properties have strong mechanics underpinning
and can be applied to any problem; in particular, they are effective
for those problems without known analytical solutions.
Thus, it is appropriate to refer to such an approach as
\emph{mechanics-based solution verification method}.
Such a study has been undertaken for Darcy and
Darcy-Brinkman equations by
\citep{2016_Shabouei_Nakshatrala_CiCP}. Herein,
we extend the approach to the double porosity/permeability
model and illustrate its utility and performance
to assess the accuracy of numerical solutions
under the proposed stabilized mixed formulation.
However, it needs to be emphasized that the
mechanics-based solution verification method
can be applied to any numerical formulation
(which necessarily need not be based on
the finite element method) and to any
problem. 

Recently, \citep{Nakshatrala_Joodat_Ballarini}
have shown that the exact solutions under the
double porosity/permeability satisfy minimum
dissipation theorem, Betti-type reciprocal
relations and minimum total power theorem.
A numerical solution need not satisfy these mathematical
properties, but the associated errors can be quantified,
which can serve as measures to assess the accuracy of numerical
solutions.
We now utilize the minimum dissipation theorem
and the reciprocal relation to illustrate the
approach to assess the accuracy.

%===========================================;
%  Subsection: Minimum dissipation theorem  ;
%===========================================;
\subsection{\emph{A posteriori} criterion based
  on the minimum dissipation theorem}
Under the double porosity/permeability model,
the dissipation functional takes the following
form \citep{Nakshatrala_Joodat_Ballarini}:
%------------------------------------;
%  Equation: Dissipation functional  ;
%------------------------------------;
\begin{equation}
  \label{Total_dissipation}
  \boldsymbol{\Phi}\left[\mathbf{u}_{1},\mathbf{u}_{2}\right] 
  := \sum_{i=1}^{2} \left(\int_{\Omega} \mu \mathbf{K}_{i}^{-1} \mathbf{u}_{i}
  \cdot \mathbf{u}_{i} \mathrm{d} \Omega 
  + \dfrac{1}{2}\int_{\Omega}\dfrac{\mu}{\beta}
  \mathrm{div}\left[\mathbf{u}_{i}\right]\mathrm{div}
  \left[\mathbf{u}_{i}\right] \mathrm{d} \Omega \right)
\end{equation}
Under the minimum dissipation theorem, it
is assumed that $\gamma \mathbf{b}
(\mathbf{x})$ is a conservative vector
field and the velocity boundary conditions
are prescribed on the entire boundary
for both pore-networks (i.e.,
$\partial \Omega = \Gamma^{u}_{1}
= \Gamma^{u}_{2}$).
Moreover, a pair of vector fields
$(\widetilde{\mathbf{u}}_1,
\widetilde{\mathbf{u}}_2)$ will
be referred to as kinematically
admissible if it satisfies the
prescribed velocity boundary
conditions and the following
condition:
%-------------------------------------------------;
%  Equation: Kinematically admissible conditions  ;
%-------------------------------------------------;
\begin{align}
  \label{Eqn:DDP_kinematically_admissible}
  \mathrm{div}\left[\widetilde{\mathbf{u}}_{1}\right] 
  + \mathrm{div}\left[\widetilde{\mathbf{u}}_{2}\right] 
  = 0 \quad \mathrm{in} \; \Omega 
\end{align}
Of course, the pair of velocity fields under
the exact solution is kinematically admissible. 
The minimum dissipation theorem states that
the pair of velocity fields under the exact
solution achieves the minimum dissipation
among the set of all kinematically admissible
vector fields \citep{Nakshatrala_Joodat_Ballarini}.

Before we discuss how the minimum dissipation
theorem can be utilized as \emph{a posteriori} 
criterion, it is important to highlight the
following three points regarding the relation
between numerical solutions and this theorem: 
\begin{enumerate}[(i)]
\item A numerical solution need not be
  the minimizer of dissipation
  functional. That is, in the strict
  sense, a numerical solution does
  not satisfy the minimum dissipation
  theorem. 
\item More importantly, the pair
  of velocity fields under a
  numerical solution may not
  even be kinematically
  admissible. 
\item It is not computationally attractive to
  find a numerical solution by solving the
  constrained optimization problem that
  results from the minimum dissipation
  theorem, as such a solution procedure
  will be very expensive; especially, for
  large-scale practical problems.
\end{enumerate}

A description of the proposed \emph{a posteriori}
criterion based on the minimum dissipation theorem
is as follows: Solve the given boundary value
problem under $h$- or $p$-refinements. For each
case of refinement, evaluate the total dissipation
\eqref{Total_dissipation} using
the obtained numerical solution. Plot
the values of the dissipation with
respect to characteristic mesh size
$h$ for the case of $h$-refinement
or degrees-of-freedom in the case
of $p$-refinement. The values of the total
dissipation under the obtained numerical
solutions should decrease monotonically
and reach a plateau upon refinements.
We provide numerical results towards
the end of this section which support
this trend.

A plausible reasoning for the aforesaid
trend can be constructed as follows: Although
the pair of velocity fields under a converging
numerical formulation does not \emph{strictly} 
satisfy the condition 
\eqref{Eqn:DDP_kinematically_admissible}, 
the error in meeting this
condition will be small upon adequate
$h$- or $p$-refinement.
Assuming that the velocity boundary conditions are
accurately implemented, the minimum dissipation
theorem implies that the obtained total dissipation
under the numerical solution should be higher than
the corresponding value under the exact solution.
Moreover, for a converging formulation and
under a proper  computer implementation of
the formulation, a numerical solution should
approach the exact solution upon refinement,
and hence, the values of the total dissipation
should decrease monotonically upon refinement.
But these values are bounded below by the total
dissipation under the exact solution, which
again stems from the minimum dissipation
theorem. The mentioned lower bound will be
the plateau that the values of the total
dissipation under numerical solutions reach.

The above reasoning also reveals that if
the convergence of the total dissipation
is not monotonic with refinement, then one
of the hypotheses of the minimum dissipation
theorem should have been violated. To put
it differently, if the convergence is not
monotonic or there is no convergence at all, one
should suspect that there could be significant
errors in satisfying the local mass balance
condition \eqref{Eqn:DDP_kinematically_admissible}
or in the implementation of boundary conditions.

%====================================;
%  Subsection: Reciprocal relations  ;
%====================================;
\subsection{\emph{A posteriori} criterion
  based on reciprocal relations}
Under the reciprocal relation of the double porosity/permeability model, if $(\mathbf{u}_1^{'},p_1^{'},\mathbf{u}_2^{'},p_2^{'})$ and $(\mathbf{u}_1^{*},p_1^{*},\mathbf{u}_2^{*},p_2^{*})$ are, respectively, the \emph{exact solutions} under prescribed data-sets $(\mathbf{b}^{'},u_{n1}^{'},p_{01}^{'},u_{n2}^{'},p_{02}^{'})$ and $(\mathbf{b}^{*},u_{n1}^{*},p_{01}^{*},u_{n2}^{*},p_{02}^{*})$, then the pair of exact solutions and the pair of prescribed data-sets satisfy the following relation \citep{Nakshatrala_Joodat_Ballarini}:
\begin{align}
\label{Eqn:Dual_Reciprocal_term}
\int_{\Omega} \gamma 
\mathbf{b}^{'}(\mathbf{x}) \cdot 
\mathbf{u}_{1}^{*}(\mathbf{x}) \; \mathrm{d} \Omega
- \int_{\Gamma^{p}_{1}} p_{01}^{'}(\mathbf{x}) 
\mathbf{u}_{1}^{*}(\mathbf{x}) \cdot \widehat{\mathbf{n}}(\mathbf{x}) 
\; \mathrm{d} \Gamma 
- \int_{\Gamma^{u}_{1}} p_{1}^{'}(\mathbf{x}) 
u_{n1}^{*}(\mathbf{x}) \; \mathrm{d} \Gamma \nonumber \\
+ \int_{\Omega} \gamma \mathbf{b}^{'}(\mathbf{x}) 
\cdot \mathbf{u}_{2}^{*}(\mathbf{x}) \; \mathrm{d} \Omega 
- \int_{\Gamma^{p}_2} p_{02}^{'}(\mathbf{x}) 
\mathbf{u}_{2}^{*}(\mathbf{x}) \cdot \widehat{\mathbf{n}}(\mathbf{x}) 
\; \mathrm{d} \Gamma 
- \int_{\Gamma^{u}_{2}} p_{2}^{'}(\mathbf{x}) 
u_{n2}^{*}(\mathbf{x}) \; \mathrm{d} \Gamma \nonumber \\
= \int_{\Omega} \gamma 
\mathbf{b}^{*}(\mathbf{x}) \cdot 
\mathbf{u}_{1}^{'}(\mathbf{x}) \; \mathrm{d} \Omega
- \int_{\Gamma^{p}_{1}} p_{01}^{*}(\mathbf{x}) 
\mathbf{u}_{1}^{'}(\mathbf{x}) \cdot \widehat{\mathbf{n}}(\mathbf{x}) 
\; \mathrm{d} \Gamma 
- \int_{\Gamma^{u}_{1}} p_{1}^{*}(\mathbf{x}) 
u_{n1}^{'}(\mathbf{x}) \; \mathrm{d} \Gamma \nonumber \\
+ \int_{\Omega} \gamma \mathbf{b}^{*}(\mathbf{x}) 
\cdot \mathbf{u}_{2}^{'}(\mathbf{x}) \; \mathrm{d} \Omega 
- \int_{\Gamma^{p}_2} p_{02}^{*}(\mathbf{x}) 
\mathbf{u}_{2}^{'}(\mathbf{x}) \cdot \widehat{\mathbf{n}}(\mathbf{x}) 
\; \mathrm{d} \Gamma 
- \int_{\Gamma^{u}_{2}} p_{2}^{*}(\mathbf{x}) 
u_{n2}^{'}(\mathbf{x}) \; \mathrm{d} \Gamma 
\end{align}  
Unlike the minimum dissipation theorem, the reciprocal 
relation does not require the velocity boundary conditions 
to be prescribed on the entire boundary of the two pore-networks. 
However, the domain, $\Omega$, and 
the boundaries, $\Gamma_1^{u}$, $\Gamma_1^{p}$, 
$\Gamma_2^{u}$, and $\Gamma_2^{p}$, are considered to be 
the same for prescribed data-sets. 
Also the reciprocal relation under the double porosity/permeability model 
does not require the set of solutions to be kinematically admissible.

It is important to note that numerical solutions
do not possess reciprocal relations. There will
always be an error under numerical solutions
with respect to the reciprocal relation
\eqref{Eqn:Dual_Reciprocal_term}.
However, this error can be quantified,
and a way to achieve this is by defining
the following scalar quantity, which is 
a form of relative error: 
\begin{equation}
\label{Eqn:VMS_reciprocal_error}
\varepsilon_{\mathrm{reciprocal}} := \frac{|\mathrm{l.h.s~of~Eqn.~\eqref{Eqn:Dual_Reciprocal_term}} - \mathrm{r.h.s~of~Eqn.~ \eqref{Eqn:Dual_Reciprocal_term}}|}{\mathrm{l.h.s~of~Eqn.~ \eqref{Eqn:Dual_Reciprocal_term}}}
\end{equation}
For exact solutions, we will have
$\varepsilon_{\mathrm{reciprocal}} = 0$.
For those problems in which left hand
side of equation
\eqref{Eqn:Dual_Reciprocal_term} vanishes,
one can use an absolute error measure
instead of this relative error measure.
Thus, the magnitude of
$\varepsilon_{\mathrm{reciprocal}}$ will
serve as a measure to assess the accuracy
of a numerical formulation. 

A description of the proposed \emph{a
  posteriori} criterion based on the
reciprocal relation is as follows: Solve
the given boundary value problem under
$h$- or $p$-refinements. For each case
of refinement, evaluate the relative
error $\varepsilon_{\mathrm{reciprocal}}$
using the obtained numerical solution.
Plot the values of $\varepsilon_{\mathrm{reciprocal}}$
with respect to characteristic mesh size
$h$ for the case of $h$-refinement
or degrees-of-freedom in the case
of $p$-refinement. The values of 
$\varepsilon_{\mathrm{reciprocal}}$ under 
the obtained numerical solutions should 
decrease monotonically and reach a plateau 
upon refinements. Similar to the case of
\emph{a posteriori} criterion based on the
minimum dissipation theorem, the
numerical results provided at
the end of this section support
this trend.

One can construct a plausible reasoning
for the mentioned trend in
$\varepsilon_{\mathrm{reciprocal}}$ 
similar to the reasoning
provided under the minimum dissipation
theorem. 
Since the reciprocal relation does not
require the velocity fields to be kinematically
admissible (specifically, the velocity fields
need not satisfy the local mass balance condition
\eqref{Eqn:DDP_kinematically_admissible}), it
is reasonable to conclude that 
if $\varepsilon_{\mathrm{reciprocal}}$ does not decrease
monotonically with refinement, then one
should suspect that there could be significant
errors in the implementation of boundary conditions.

%================================================;
%  Subsection: Representative numerical results  ;
%================================================;
\subsection{Representative numerical results}
To illustrate the performance and utility
of the mentioned mechanics-based \emph{a
  posteriori} criteria, we employ the
pipe bend problem, which is widely used
as a benchmark problem for flow through
porous media \citep{Challis_Guest_2009,
  Aage_Poulsen_Hansen_Sigmund_2008,
  Borrvall_Petersson_2003}.
A pictorial description of the problem is shown
in Figure \ref{Fig:pipe_bend_problem_domain}.
The computational domain is a unit square
($L=1.0$). For the velocity boundary conditions, two different cases are considered. For the macro-pore network in case 1, an inflow parabolic velocity is enforced on a portion of the left boundary (denoted as $\Gamma^{u}_{\mathrm{inflow}}$) while an outflow parabolic velocity is applied on a portion of the bottom boundary (denoted as $\Gamma^{u}_{\mathrm{outflow}}$). In case 2, an inflow constant velocity is enforced on $\Gamma^{u}_{\mathrm{inflow}}$ while an outflow constant velocity is applied on $\Gamma^{u}_{\mathrm{outflow}}$ for the macro-pore network. For both cases, the normal component of macro-velocity is prescribed to be zero on the rest of the boundary (i.e., $u_{n}(\mathbf{x}) = 0$). The normal component of micro-velocity in both data-sets ($u^{'}_{n2}$ and $u^{*}_{n2}$) is zero in the data-sets on the entire boundary. These sample data-sets are provided in Table \ref{Tbl:pipe_bend_reciprocal_error1}. 
{\small
	\begin{table}
		\caption{Data-sets for the pipe bend problem.}
		\centering
		\begin{tabular}{|c|c|} \hline
			Case 1 & Case 2 \\
			\hline
			$L^{'}_x = 1.0 $ & $L^{*}_x = 1.0$ \\
			$L^{'}_y = 1.0 $ & $L^{*}_y = 1.0$ \\
			$\mu^{'} = 1.0 $ & $\mu^{*} = 1.0$ \\
			$\beta^{'} = 1.0 $ & $\beta^{*} = 1.0$ \\
			$k^{'}_1 = 1.0$&  $k^{*}_1 = 1.0$ \\
			$k^{'}_2 = 0.01$&  $k^{*}_2 = 0.01$\\
			$\gamma \mathbf{b}^{'} = \left\{1.0,1.0\right\} $ & $\gamma \mathbf{b}^{*} = \left\{0.0,0.0\right\}$ \\
			$u^{'}_{n1} = 100 (y - 0.6) (0.8 - y)$ on $\Gamma^{u}_{\mathrm{inflow}}$&  $u^{*}_{n1} = 1.0 $ on $\Gamma^{u}_{\mathrm{inflow}}$ \\
			$u^{'}_{n1} = -100 (x - 0.6) (0.8 - x)$ on $\Gamma^{u}_{\mathrm{outflow}}$&  $u^{*}_{n1} = -1.0$ on $\Gamma^{u}_{\mathrm{outflow}}$\\
			$u^{'}_{n1} = 0$ on the other parts of $\partial \Omega$ &  $u^{*}_{n1} = 0$ on the other parts of $\partial \Omega$ \\
			$u^{'}_{n2} = 0$ on $\partial \Omega$ & $u^{*}_{n2} = 0$ on $\partial \Omega$\\
			\hline 
		\end{tabular}
		\label{Tbl:pipe_bend_reciprocal_error1}
	\end{table}
}

Figure \ref{Fig:pipe_bend_problem_Dissipation} shows how the deviation in dissipation varies with mesh refinement for the numerical solutions obtained using both data-sets. Under $h$-refinement, as the mesh size $h$ decreases (or the total number of the elements increases), the deviation in the dissipation value decreases for both cases and the convergence is monotonic. 
This deviation can be further quantified using $\varepsilon_{\mathrm{reciprocal}}$ under the double porosity/permeability model for the sample data-sets as shown in Figure \ref{Fig:Pipebend_problem_reciprocal_error}. For different orders of interpolation, the error in reciprocal relation for the two sets of numerical solutions decreases monotonically with mesh refinement for this test problem which implies that the numerical solutions converge monotonically. As can be seen, by increasing the order of interpolation for the primary variables, the value of error is decreased and the numerical solutions get closer to the exact solutions of the model.

%*****************************************************;
%                                                     ;
%  NAME                                               ;
%    S8_VMS_Transient.tex                             ;
%                                                     ;
%  WRITTEN BY                                         ;
%    Kalyana Babu Nakshatrala                         ;
%    Seyedeh Hanie Seyed Joodat                       ;
%                                                     ;
%*****************************************************;
\section{AN EXTENSION TO TRANSIENT ANALYSIS}
\label{Sec:S8_VMS_Transient}
%===================================;
%  Subsection: Governing equations  ;
%===================================;
The discussions and the results presented in the previous
sections neglected transient flow behavior within the
porous domain. However, unsteady flow characteristics
are indispensable in a wide variety of applications
such as the ones observed in aquifers and oil-bearing
strata \citep{Mongan_1985}, and composite manufacturing
applications based on resin transfer molding
\citep{Nakshatrala_Turner_Hjelmstad_Masud_CMAME_2006_v195_p4036,
  Pacquaut_Bruchon_Moulin_Drapier_2012} where two different
fibers are usually used, providing two different pathways
for the fluid.
In this section, the proposed mixed formulation is
extended to the transient case. We first document
the governing equations in a transient setting,
which will have an unsteady term in the balance
of momentum equation for each pore-network. A
stabilized mixed formulation is then derived for
the transient case. Finally, the performance of
the proposed formulation in the transient case
will be illustrated using a representative example.

%============================================;
%  Subsection: Unsteady governing equations  ;
%============================================;
\subsection{Unsteady governing equations}
Same as before, we consider a bounded domain, $\Omega \subset \mathbb{R}^{nd}$, with a piecewise smooth boundary denoted by $\partial \Omega$. The time is denoted by $t \in \left[0, T\right]$, where $T$ is the total time of interest. Darcy velocity (vector) fields in macro- and micro-pores at any spatial point $\mathbf{x}$ are denoted by $\mathbf{u}_{1}(\mathbf{x}, t)$ and $\mathbf{u}_{2}(\mathbf{x}, t)$ respectively, while macro- and micro-pressure (scalar) fields are denoted by $p_{1}(\mathbf{x}, t)$, and  $p_{2}(\mathbf{x}, t)$. The specific body force can also depend on time and is denoted by $\mathbf{b}(\mathbf{x}, t)$. Assuming that the porosities in the two pore-networks do not change with time, the transient
governing equations can be written as follows:
%---------------------------------------------;
%  Equation: Governing equations (Transient)  ;
%---------------------------------------------;
\begin{subequations}
  \begin{alignat}{2}
    \label{Eqn:Dual_GE_BLM_1_Transient}
    &\rho_{1} \frac{\partial \mathbf{u}_{1}}{\partial t} + \mu \mathbf{K}_{1}^{-1} \mathbf{u}_1 + \mathrm{grad}[p_1] = \gamma \mathbf{b}
    &&\quad \mathrm{in} \; \Omega \times \left(0,T\right) \\
    \label{Eqn:Dual_GE_BLM_2_Transient}
    &\rho_{2} \frac{\partial \mathbf{u}_{2}}{\partial t} + \mu \mathbf{K}_{2}^{-1} \mathbf{u}_2 + \mathrm{grad}[p_2] = \gamma \mathbf{b}
    &&\quad \mathrm{in} \;  \Omega \times \left(0,T\right) \\
    \label{Eqn:Dual_GE_mass_balance_1_Transient}
    &\mathrm{div}[\mathbf{u}_1] = +\chi
    &&\quad \mathrm{in} \;  \Omega \times \left(0,T\right) \\
    \label{Eqn:Dual_GE_mass_balance_2_Transient}
    &\mathrm{div}[\mathbf{u}_2] = -\chi
    &&\quad \mathrm{in} \;  \Omega \times \left(0,T\right) \\
    \label{Eqn:Dual_GE_vBC_1_Transient}
    &\mathbf{u}_1(\mathbf{x},t) \cdot
    \widehat{\mathbf{n}}(\mathbf{x})
    = u_{n1}(\mathbf{x},t)
    &&\quad \mathrm{on} \; \Gamma^{u}_{1} \times \left(0,T\right) \\
    \label{Eqn:Dual_GE_vBC_2_Transient}
    &\mathbf{u}_2(\mathbf{x},t) \cdot
    \widehat{\mathbf{n}}(\mathbf{x})
    = u_{n2}(\mathbf{x},t)
    &&\quad \mathrm{on} \; \Gamma^{u}_{2} \times \left(0,T\right) \\
    \label{Eqn:Dual_GE_Darcy_pBC_1_Transient}
    &p_1(\mathbf{x},t) = p_{01} (\mathbf{x},t)
    &&\quad \mathrm{on} \; \Gamma^{p}_{1} \times \left(0,T\right)\\
    \label{Eqn:Dual_GE_Darcy_pBC_2_Transient}
    &p_2(\mathbf{x},t) = p_{02} (\mathbf{x},t)
    &&\quad \mathrm{on} \; \Gamma^{p}_{2} \times \left(0,T\right)\\
    \label{Eqn:Dual_GE_Darcy_iBC_1}
    &\mathbf{u}_1(\mathbf{x},0) = \mathbf{u}_{01}(\mathbf{x})
    &&\quad \mathrm{in} \; \Omega \\
    \label{Eqn:Dual_GE_Darcy_iBC_2}
    &\mathbf{u}_2(\mathbf{x},0) = \mathbf{u}_{02}(\mathbf{x})
    &&\quad \mathrm{in} \; \Omega
  \end{alignat}
\end{subequations}
where $\mathbf{u}_{01}(\mathbf{x})$ and $\mathbf{u}_{02}
(\mathbf{x})$ are the prescribed initial velocities
within the domain. The definitions for the other
symbols remain the same as before. It is understood
that the quantities corresponding to these symbols
will now depend on the time, expect for the unit
outward normal, as the domain is fixed and does
not evolve with respect to the time. 
We now derive a stabilized formulation for the
mentioned transient governing equations under
the double porosity/permeability model.

%=====================================================================;
%  Subsection: A stabilized mixed formulation for the transient case  ;
%=====================================================================;
\subsection{A stabilized mixed formulation for the transient case}
We employ the method of horizontal lines (also known as the
Rothe's method) \citep{Rothe_1930}, which is different from
the semi-discrete method (also known as the method of vertical
lines) \citep{Hughes_1987finite}. Under the method of horizontal
lines, a given partial differential equation (which depends
on both space and time) is discretized temporally using a
time-stepping scheme. This gives rise to another partial
differential equation which depends only on the spatial
coordinates, and can be further discretized spatially
using the finite element method, the finite difference
method or the finite volume method.
On the other hand, under the semi-discrete method, the
given spatially and temporally dependent partial
differential equation is first spatially discretized,
say, using the finite element method, giving rise to
a system of ordinary differential equations; which
can be numerically solved by employing a convenient
time-stepping scheme.

Herein, we employ the backward Euler time stepping scheme
for the temporal discretization of the transient governing
equations under the method of horizontal lines. However,
one can employ any other time-stepping scheme with a
straightforward modification. The backward Euler is
first-order accurate and unconditionally stable when
applied to a linear system of ordinary differential
equations \citep{Hairer_Norsett_Wanner}.
The time interval of interest is discretized into $N+1$ time levels denoted as $t_n$ ($n = 0, \cdots ,N$) by assuming uniform time steps ($\Delta t = t_n−t_{n−1}$); however, one can consider non-uniform time steps by applying simple modifications. For a given quantity $z(\mathbf{x}, t)$, the time discretized version at the instant of time $t_n$ can be written as follows:
\begin{align}
  z^{(n)}(\mathbf{x}) \approx z(\mathbf{x}, t_n),
  \quad n = 0, \cdots, N
\end{align}

The resulting time discretized equations at time
level $t = t_{n+1}$ under the method of horizontal
lines using the backward Euler time-stepping
scheme take the following form: 
%-----------------------------------------------;
%  Equation: Discretized equations (Transient)  ;
%-----------------------------------------------;
\begin{subequations}
  \begin{alignat}{2}
    \label{Eqn:Dual_GE_BLM_1_Transient_Discretized}
    &\rho_{1} \frac{\mathbf{u}^{(n+1)}_{1}-\mathbf{u}^{(n)}_{1}}{\Delta t} + \mu \mathbf{K}_{1}^{-1} \mathbf{u}^{(n+1)}_1 + \mathrm{grad}[p^{(n+1)}_1] = \gamma \mathbf{b}^{(n+1)}
    &&\quad \mathrm{in} \; \Omega \\
    \label{Eqn:Dual_GE_BLM_2_Transient_Discretized}
    &\rho_{2} \frac{\mathbf{u}^{(n+1)}_{2}-\mathbf{u}^{(n)}_{2}}{\Delta t} + \mu \mathbf{K}_{2}^{-1} \mathbf{u}^{(n+1)}_2 + \mathrm{grad}[p^{(n+1)}_2] = \gamma \mathbf{b}^{(n+1)}
    &&\quad \mathrm{in} \;  \Omega \\
    \label{Eqn:Dual_GE_mass_balance_1_Transient_Discretized}
    &\mathrm{div}[\mathbf{u}^{(n+1)}_1] = +\chi^{(n+1)}
    &&\quad \mathrm{in} \;  \Omega \\
    \label{Eqn:Dual_GE_mass_balance_2_Transient_Discretized}
    &\mathrm{div}[\mathbf{u}^{(n+1)}_2] = -\chi^{(n+1)}
    &&\quad \mathrm{in} \;  \Omega \\
    \label{Eqn:Dual_GE_vBC_1_Transient_Discretized}
    &\mathbf{u}^{(n+1)}_1(\mathbf{x}) \cdot
    \widehat{\mathbf{n}}(\mathbf{x})
    = u_{n1}(\mathbf{x},t=t_{n+1})
    &&\quad \mathrm{on} \; \Gamma^{u}_{1} \\
    \label{Eqn:Dual_GE_vBC_2_Transient_Discretized}
    &\mathbf{u}^{(n+1)}_2(\mathbf{x}) \cdot
    \widehat{\mathbf{n}}(\mathbf{x})
    = u_{n2}(\mathbf{x},t=t_{n+1})
    &&\quad \mathrm{on} \; \Gamma^{u}_{2} \\
    \label{Eqn:Dual_GE_Darcy_pBC_1_Discretized}
    &p^{(n+1)}_1(\mathbf{x}) = p^{(n+1)}_{01} (\mathbf{x}) = p_{01} (\mathbf{x},t=t_{n+1}) 
    &&\quad \mathrm{on} \; \Gamma^{p}_{1}\\
    \label{Eqn:Dual_GE_Darcy_pBC_2_Discretized}
    &p^{(n+1)}_2(\mathbf{x}) = p^{(n+1)}_{02} (\mathbf{x}) = p_{02} (\mathbf{x},t=t_{n+1})
    &&\quad \mathrm{on} \; \Gamma^{p}_{2}\\
    \label{Eqn:Dual_GE_Darcy_iBC_1_Discretized}
    &\mathbf{u}^{(0)}_1(\mathbf{x}) = \mathbf{u}_{01}(\mathbf{x})
    &&\quad \mathrm{in} \; \Omega \\
    \label{Eqn:Dual_GE_Darcy_iBC_2_Discretized}
    &\mathbf{u}^{(0)}_2(\mathbf{x}) = \mathbf{u}_{02}(\mathbf{x})
    &&\quad \mathrm{in} \; \Omega
  \end{alignat}
\end{subequations}
Equations \eqref{Eqn:Dual_GE_BLM_1_Transient_Discretized} and \eqref{Eqn:Dual_GE_BLM_2_Transient_Discretized} can be rearranged as follows:
\begin{subequations}
  \begin{alignat}{2}
    &\left(\frac{\rho_{1}}{\Delta t}\mathbf{I} + \mu \mathbf{K}_{1}^{-1} \right) \mathbf{u}^{(n+1)}_{1} + \mathrm{grad}[p^{(n+1)}_1] = \gamma \left( \mathbf{b}^{(n+1)} + \frac{\phi_1}{\Delta t} \mathbf{u}^{(n)}_{1}\right)
    &&\quad \mathrm{in} \; \Omega \\
    &\left(\frac{\rho_{2}}{\Delta t} \mathbf{I}  + \mu \mathbf{K}_{2}^{-1} \right) \mathbf{u}^{(n+1)}_{2} + \mathrm{grad}[p^{(n+1)}_2] = \gamma \left( \mathbf{b}^{(n+1)} + \frac{\phi_2}{\Delta t} \mathbf{u}^{(n)}_{2}\right)
    &&\quad \mathrm{in} \; \Omega 
  \end{alignat}
\end{subequations}
where the (modified) drag coefficients and (modified) body forces can be written as follows:
\begin{subequations}
  \begin{alignat}{2}
&\widehat{\alpha}_{1} = \left(\frac{\rho_{1}}{\Delta t}\mathbf{I} + \mu \mathbf{K}_{1}^{-1} \right) \\
&\widehat{\alpha}_{2} = \left(\frac{\rho_{2}}{\Delta t} \mathbf{I}  + \mu \mathbf{K}_{2}^{-1} \right) \\
&\tilde{\mathbf{b}}^{(n+1)}_{1} = \left( \mathbf{b}^{(n+1)} + \frac{\phi_1}{\Delta t} \mathbf{u}^{(n)}_{1}\right) \\
&\tilde{\mathbf{b}}^{(n+1)}_{2} = \left( \mathbf{b}^{(n+1)} + \frac{\phi_2}{\Delta t} \mathbf{u}^{(n)}_{2}\right)
  \end{alignat}
\end{subequations}
where $\phi_1$ and $\phi_2$ are, respectively, the volume fractions associated with the two pore-networks, relating the bulk density and the true density as $\rho_{i} = \phi_{i} \gamma~(i=1,2)$. 

The stabilized mixed formulation for the unsteady condition at time level $t = t_{n+1}$ reads as:~Find $\left(\mathbf{u}^{(n+1)}_1(\mathbf{x}),
\mathbf{u}^{(n+1)}_2(\mathbf{x}) \right) \in \bar{\mathcal{U}}_{1,t=t_{n+1}}
\times \bar{\mathcal{U}}_{2,t=t_{n+1}}$, and $\left(p^{(n+1)}_1(\mathbf{x}),
p^{(n+1)}_2(\mathbf{x})\right) \in \bar{\mathcal{Q}}$ 
such that we have
%-------------------------------------;
%  Equation: VMS weak form(Transient) ;
%-------------------------------------;
\begin{align}
  \mathcal{B}_{\mathrm{stab}}(\mathbf{w}_1,\mathbf{w}_2,q_1,q_2;
  \mathbf{u}^{(n+1)}_1,\mathbf{u}^{(n+1)}_2,p^{(n+1)}_1,p^{(n+1)}_2)
  = \mathcal{L}_{\mathrm{stab}}^{\mathrm{tran}}(\mathbf{w}_1,\mathbf{w}_2,q_1,q_2)
  \nonumber \\
  \quad \forall
  \left(\mathbf{w}_1(\mathbf{x}), \mathbf{w}_2(\mathbf{x})\right) ~
  \in \bar{\mathcal{W}}_1 \times \bar{\mathcal{W}}_2,
  \left(q_1(\mathbf{x}),q_2(\mathbf{x})\right)
  \in \bar{\mathcal{Q}}
\label{Eqn:VMS_Weak_Form_transient} 
\end{align}
The linear functional under the transient condition $\mathcal{L}_{\mathrm{stab}}^{\mathrm{tran}}(\mathbf{w}_1,\mathbf{w}_2,q_1,q_2)$ is slightly different from the one under the steady-state condition. Under the steady-state condition, the body forces, denoted by $\mathbf{b}(\mathbf{x})$, are similar in both pore-networks. However, for the transient case, body forces in the macro- and micro-pore networks ($\tilde{\mathbf{b}}^{(n+1)}_{1}$ and $\tilde{\mathbf{b}}^{(n+1)}_{2}$) are different and one should substitute $\mathbf{b}(\mathbf{x})$ in the $\mathcal{L}_{\mathrm{stab}}(\mathbf{w}_1,\mathbf{w}_2,q_1,q_2)$ by the associated value in each pore-network in order to obtain $\mathcal{L}_{\mathrm{stab}}^{\mathrm{tran}}(\mathbf{w}_1,\mathbf{w}_2,q_1,q_2)$. It should also be noted that in the bilinear form and the linear functional of the proposed formulation provided in equations \eqref{Eqn:Dual_B_VMS} and \eqref{Eqn:Dual_L_VMS}, $\mu \mathbf{K}_i^{-1}$ should be replaced by $\widehat{\alpha}_{i},~(i=1,2)$. 

A systematic numerical implementation of the proposed formulation is outlined in Algorithm 1. It should be noted that, we need not evaluate all the terms in the variational form, especially the terms in $\mathcal{B}_{\mathrm{stab}}$, at each time step since most of them do not depend on the temporal variable. Therefore, it is enough to only evaluate the terms involving $\tilde{\mathbf{b}}^{(n+1)}_{i},~i=1,2$ in the $\mathcal{L}_{\mathrm{stab}}^{\mathrm{tran}}$ repeatedly.
{\small
\begin{table}[h!]
  \label{tab:table1}
  \begin{tabular}{l}
  \hline 
  \textbf{Algorithm 1} Implementation of the proposed formulation.\\
  \hline
  1:  Inputs: Initial conditions $\mathbf{u}_{01}$
  and $\mathbf{u}_{02}$, time period of integration
  $T$, maximum allowable time step $\Delta t_{\mathrm{max}}$\\
  2: Set $\mathbf{u}^{(n)}_{1} = \mathbf{u}_{01}$ and
  $\mathbf{u}^{(n)}_{2} = \mathbf{u}_{02}$\\
  3: Set $t=0$ \\
  4: \textbf{while} $t<T$ \textbf{do}\\
  5: $\quad \Delta t = \mathrm{min}[\Delta t_{\mathrm{max}}, T-t],~t = t+ \Delta t$\\
    6: $\quad$Using $\mathbf{u}^{(n)}_1$ and $\mathbf{u}^{(n)}_2$, solve equation \eqref{Eqn:VMS_Weak_Form_transient} to obtain $\mathbf{u}^{(n+1)}_1$, $\mathbf{u}^{(n+1)}_2$, $p^{(n+1)}_1$, and $p^{(n+1)}_2$\\
    7: $\quad$Set $\mathbf{u}^{(n)}_{1}= \mathbf{u}^{(n+1)}_1$ and $\mathbf{u}^{(n)}_{2}= \mathbf{u}^{(n+1)}_2$\\
    8: \textbf{end while}\\
    \hline
  \end{tabular}
\end{table}
}

The relevant function spaces for the velocity
and pressure fields and their corresponding
weighting functions under the transient case
are defined as follows:
%-----------------------------;
%  Equation: Function spaces  ;
%-----------------------------;
\begin{subequations}
\begin{align}
\label{Eqn:VMS_Function_space_U1_Transient}
\bar{\mathcal{U}}_{1,t} &:= 
\left\{\mathbf{u}_{1}(\mathbf{x}) \in 
\left(L_{2}(\Omega)\right)^{nd} 
\; \Big\vert \;
\mathrm{div}[\mathbf{u}_{1}] \in L_{2}(\Omega), 
\mathbf{u}_{1}(\mathbf{x}) \cdot \widehat{\mathbf{n}}(\mathbf{x}) 
= u_{n1}(\mathbf{x},t) \; \mathrm{on} \; \Gamma_{1}^{u}\right\} \\
\bar{\mathcal{U}}_{2,t} &:= 
\left\{\mathbf{u}_{2}(\mathbf{x}) \in 
\left(L_{2}(\Omega)\right)^{nd} 
\; \Big\vert \;
\mathrm{div}[\mathbf{u}_{2}] \in L_{2}(\Omega), 
\mathbf{u}_{2}(\mathbf{x}) \cdot \widehat{\mathbf{n}}(\mathbf{x}) 
= u_{n2}(\mathbf{x},t) \; \mathrm{on} \; \Gamma_{2}^{u}\right\} \\
\bar{\mathcal{W}}_{1} &:= 
\left\{\mathbf{w}_{1}(\mathbf{x}) \in 
\left(L_{2}(\Omega)\right)^{nd} 
\; \Big\vert \;
\mathrm{div}[\mathbf{w}_{1}] \in L_{2}(\Omega), 
\mathbf{w}_{1}(\mathbf{x}) \cdot \widehat{\mathbf{n}}(\mathbf{x}) 
= 0 \; \mathrm{on} \; \Gamma_{1}^{u} \right\} \\
\bar{\mathcal{W}}_{2} &:= 
\left\{\mathbf{w}_{2}(\mathbf{x}) \in 
\left(L_{2}(\Omega)\right)^{nd} 
\; \Big\vert \;
\mathrm{div}[\mathbf{w}_{2}] \in L_{2}(\Omega), 
\mathbf{w}_{2}(\mathbf{x}) \cdot \widehat{\mathbf{n}}(\mathbf{x}) 
= 0 \; \mathrm{on} \; \Gamma_{2}^{u}\right\} \\
%
%\bar{\mathcal{P}} &:= 
%\left\{(p_1(\mathbf{x}),p_{2}(\mathbf{x})) 
%\in L_{2}(\Omega) \times L_{2}(\Omega)
%\; \Big\vert \;
%\left(\int_{\Omega} p_{1}(\mathbf{x}) \mathrm{d} \Omega \right) 
%\left(\int_{\Omega} p_{2}(\mathbf{x}) \mathrm{d} \Omega \right) 
%= 0 \right\} \\
%%
\label{Eqn:VMS_Function_space_Q_Transient}
\bar{\mathcal{Q}} &:= 
\left\{(p_1(\mathbf{x}),p_{2}(\mathbf{x})) 
\in H^{1}(\Omega) \times H^{1}(\Omega)
\; \Big\vert \;
\left(\int_{\Omega} p_{1}(\mathbf{x}) \mathrm{d} \Omega \right) 
\left(\int_{\Omega} p_{2}(\mathbf{x}) \mathrm{d} \Omega \right) 
= 0 \right\}
\end{align}
\end{subequations}

%==================================================;
%  Subsection: A representative numerical example  ;
%==================================================;
\subsection{A representative numerical example}
We now illustrate the performance of the proposed
stabilized mixed formulation for studying transient
flow problems using a two-dimensional problem.
Moreover, some unique features of flows
in porous media exhibiting two distinct
pore-networks are illustrated.

The computational domain $\Omega$ is chosen to be the region in-between a rectangle of length 10.0 and height 1.0 and two square holes each of length 0.4. Zero-flux boundary conditions for both macro-pore and micro-pore networks are prescribed at the holes as well as top and bottom edges of the rectangular domain. At the right end, pressure is prescribed at both pore-networks. At the left end, however, zero-flux boundary condition is prescribed for the micro-pore network and pressure is prescribed for the macro-pore network. The initial velocities for both fluid constituents are assumed to be zero. A pictorial description of the domain as well as the initial and boundary conditions are illustrated in Figure \ref{Fig:Flow_Transport}. Table \ref{Tb8:2D_transient_data} provides the parameter values for the two-dimensional transient flow problem.
{\small
	\begin{table}[!h]
		\caption{Data-set for two-dimensional transient flow problem.}
		\centering
		\begin{tabular}{|c|c|} \hline
			Parameter & Value \\
			\hline
			$\gamma \mathbf{b}$ & $\{ 0.0,0.0\}$\\
			$L_x $ & $10.0$ \\
			$L_y $ & $1.0$ \\
			$L_{\mathrm{hole}}$ & $0.4$\\
			$\mu $ & $1.0$ \\
			$\beta $ & $1.0$ \\
			$k_1$&  $10000$ \\
			$k_2$&  $1.0$\\
			$p_1^L$ & $10~\mathrm{sin}(0.4 ( y + 2.0t))$\\
			$p_1^R$ & $10.0$ \\  
			$p_2^R$ & $10.0$ \\           
			$u_{n2}^L$ & $0.0$ \\
			$u_{n2}^R$ & $0.0$ \\
			$u_{01}$ & $0.0$ \\
			$u_{02}$ & $0.0$ \\
			$\Delta t$ & $5\mathrm{e}-11$\\
			$T$ & $6\mathrm{e}-8$\\
			\hline 
		\end{tabular}
		\label{Tb8:2D_transient_data}
	\end{table}
}

Figure \ref{Fig:V1_vs_V2_Transport} shows
a comparison between macro-velocity
($\mathbf{u}_1$) and micro-velocity
($\mathbf{u}_2$) at selected time steps.
\emph{As can be seen in this figure, the
rate of decay of the solution in the
macro-pore network (which has a higher
permeability than the micro-pore network)
is slower than that of the micro-pore
network, and hence, the micro-velocity
reaches the steady-state faster than
the macro-velocity.}
  This is not counterintuitive if one realizes
  that the rate of dissipation in a pore-network
  is inversely proportional to the permeability
  of the pore-network. Specifically, the
  rates of dissipation in macro- and micro-pore
  networks under the double porosity/permeability
  model are, respectively, defined as follows
  \citep{Nakshatrala_Joodat_Ballarini}:
  \[
  \mu \mathbf{K}_{1}^{-1} \mathbf{u}_1 \cdot
  \mathbf{u}_1
  \quad \mathrm{and} \quad
  \mu \mathbf{K}_{2}^{-1} \mathbf{u}_2 \cdot
  \mathbf{u}_2
  \]
  It is also noteworthy to recall the definition
  of permeability of a porous medium, which is
  a measure of the ability of the porous medium
  to transmit fluids through it.
  To put it differently, the greater is the
  permeability the lesser will be resistance
  offered by the pore-network, and hence the
  greater will be the ease with which a fluid
  flows through the pore-network.

%*****************************************************;
%                                                     ;
%  NAME                                               ;
%    S9_VMS_Coupled.tex                               ;
%                                                     ;
%  WRITTEN BY                                         ;
%    Kalyana Babu Nakshatrala                         ;
%    Seyedeh Hanie Seyed Joodat                       ;
%                                                     ;
%*****************************************************;
\section{COUPLED PROBLEMS}
\label{Sec:S8_2_VMS_Coupled}

Experimental studies on Darcy flow coupled with transport problem have revealed the possibility of occurrence  of certain physical instabilities called Saffman-Taylor
instability \citep{Saffman_1958,Chuoke_VanMeurs_vanderPoel_1959}. In the miscible displacement of fluids in porous media with a single pore-network, a more viscous fluid is displaced by a less viscous fluid within the domain \citep{Stalkup_1983, Homsy_1987}. Imposing any disturbance or perturbation on the interface of the two fluids leads to appearance of finger-like patterns at the interface of the two fluids due to the penetration of the less viscous fluid into the more viscous one. 
This type of physical instability, which is commonly observed in a wide variety of industrial and environmental applications such as carbon-dioxide sequestration and secondary and tertiary oil recovery, is also referred to as viscous fingering (VF) instability \citep{Homsy_1987,Chen_Meiburg_1998a, Chen_Meiburg_1998b}. 

All the existing theoretical and numerical studies in the literature are available for the \emph{classical} 
Saffman-Taylor instability. The questions remaining are whether similar physical instabilities can be captured under the double porosity/permeability model and if so, how the flow model can affect the mechanism of the instabilities and their characteristics (i.e., number of fingers, their characteristic length, growth rate, scaling laws, etc.). 
Herein, we cannot provide an exhaustive study on such well-known instabilities in fluid mechanics and many important areas of research associated with viscous fingering are
not included in our discussion. Therefore, we only address the former question by studying the possibility of occurrence of Saffman-Taylor-type instabilities under the double porosity/permeability model. 
The proposed stabilized formulation will be employed for modeling double porosity/permeability model coupled with transport problem to illustrate the capability of the proposed computational framework for capturing Saffman-Taylor-type instabilities within a porous domain exhibiting double pore-networks. However, studying the effects of the flow model (double porosity/permeability model versus Darcy model) on the mode and patterns of the instabilities is beyond the scope of this paper and will be addressed in a separate paper.

%=======================================================================;
%  Subsection: Governing equations: Coupled flow and transport problem  ;
%=======================================================================;
\subsection{Governing equations: Coupled flow and transport problem}
Viscous fingering can be considered as a two-way coupled flow and transport problem and is studied in the Hele-Shaw cell. 
The governing equations can be written as follows:
\begin{subequations}
  \begin{align}
    \label{Eqn:BLM_1_Transient}
    &\mu \mathbf{K}_{1}^{-1} \mathbf{u}_1 (\mathbf{x},t) + \mathrm{grad}[p_1 (\mathbf{x},t)] = \gamma \mathbf{b}(\mathbf{x},t)
    &&\quad \mathrm{in} \; \Omega \times \left(0,T\right) \\
    \label{Eqn:BLM_2_Transient}
    &\mu \mathbf{K}_{2}^{-1} \mathbf{u}_2 (\mathbf{x},t) + \mathrm{grad}[p_2 (\mathbf{x},t)] = \gamma \mathbf{b}(\mathbf{x},t)
    &&\quad \mathrm{in} \;  \Omega \times \left(0,T\right) \\
    \label{Eqn:mass_balance_1_Transient}
    &\mathrm{div}[\mathbf{u}_1 (\mathbf{x},t)] = +\chi(\mathbf{x},t)
    &&\quad \mathrm{in} \;  \Omega \times \left(0,T\right) \\
    \label{Eqn:mass_balance_2_Transient}
    &\mathrm{div}[\mathbf{u}_2 (\mathbf{x},t)] = -\chi(\mathbf{x},t)
    &&\quad \mathrm{in} \;  \Omega \times \left(0,T\right) \\
    \label{Eqn:vBC_1_Transient}
    &\mathbf{u}_1(\mathbf{x},t) \cdot
    \widehat{\mathbf{n}}(\mathbf{x})
    = u_{n1}(\mathbf{x},t)
    &&\quad \mathrm{on} \; \Gamma^{u}_{1} \times \left(0,T\right) \\
    \label{Eqn:vBC_2_Transient}
    &\mathbf{u}_2(\mathbf{x},t) \cdot
    \widehat{\mathbf{n}}(\mathbf{x})
    = u_{n2}(\mathbf{x},t)
    &&\quad \mathrm{on} \; \Gamma^{u}_{2} \times \left(0,T\right) \\
    \label{Eqn:pBC_1_Transient}
    &p_1(\mathbf{x},t) = p_{01} (\mathbf{x},t)
    &&\quad \mathrm{on} \; \Gamma^{p}_{1} \times \left(0,T\right)\\
    \label{Eqn:pBC_2_Transient}
    &p_2(\mathbf{x},t) = p_{02} (\mathbf{x},t)
    &&\quad \mathrm{on} \; \Gamma^{p}_{2} \times \left(0,T\right)\\
    \label{Eqn:Transport_GE}
    &\frac{\partial c(\mathbf{x},t)}{\partial t} + \mathrm{div}\left[\mathbf{u}(\mathbf{x},t) c(\mathbf{x},t)- \mathbf{D}(\mathbf{x},t) \mathrm{grad}[c(\mathbf{x},t)]\right] = f(\mathbf{x},t)
    &&\quad \mathrm{in} \; \Omega \times \left(0,T\right) \\
    \label{Eqn:Transport_Dirichlet_BC}
    &c(\mathbf{x},t) = c^{p}(\mathbf{x},t)
    &&\quad \mathrm{on} \; \Gamma^{D} \times \left(0,T\right) \\
    \label{Eqn:Transport_Neumann_BC}
    &\widehat{\mathbf{n}} (\mathbf{x}) \cdot \left(\mathbf{u}(\mathbf{x},t) c(\mathbf{x},t) - \mathbf{D}(\mathbf{x},t) \mathrm{grad}[c(\mathbf{x},t)]\right) = q^{p} (\mathbf{x},t)
    &&\quad \mathrm{on} \; \Gamma^{N} \times \left(0,T\right) \\
    \label{Eqn:Transport_IC}
    &c(\mathbf{x},t=0) = c_{0}(\mathbf{x})
    &&\quad \mathrm{in} \; \Omega
  \end{align}
\end{subequations}
where equations \eqref{Eqn:BLM_1_Transient} -- \eqref{Eqn:pBC_2_Transient} represent the flow equations under the double porosity/permeability model, and equations \eqref{Eqn:Transport_GE} --
\eqref{Eqn:Transport_IC} represent the transient advection-diffusion problem. Herein, $c(\mathbf{x},t)$ denotes the concentration and $\mathbf{D}(\mathbf{x},t)$ is the diffusivity tensor. 

In order to assure the proper coupling between flow problem and the transient advection-diffusion problem, the viscosity is assumed to exponentially depend on the concentration as follows:
\begin{align}
\mu(c(\mathbf{x},t)) = \mu_{0}\mathrm{exp}\left[R_c (1-c(\mathbf{x},t))\right]
\end{align}
where $\mu_{0}$ is the base viscosity and $R_c$ denotes the log-mobility ratio in an isothermal miscible displacement.
Figure \ref{Fig:Hele_shaw_Schm} represents the computational domain as well as the assigned initial and boundary conditions for this boundary value problem. Parameter values for this coupled flow and transport problem are provided in Table \ref{Tb9:Coupled_flow_transport_data}.
The perturbation on the interface of the two fluids is imposed by considering heterogeneous material properties for the porous domain, such as heterogeneous permeabilities. Moreover, the initial condition for the transport problem is defined using a random function throughout the domain.
{\small
	\begin{table}[!h]
		\caption{Data-set for coupled flow and transport problem.}
		\centering
		\begin{tabular}{|c|c|} \hline
			Parameter & Value \\
			\hline
			$\gamma \mathbf{b}$ & $\{0.0,0.0\}$\\
			$f$ & $0.0$\\
			$L_x $ & $1.0$ \\
			$L_y $ & $0.4$ \\
			$\mu_0 $ & $0.001$ \\
			$R_c$ & $3.0$ \\
			$D$ & $2\mathrm{e}-6$\\
			$\beta $ & $1.0$ \\
			$\mathbf{K}_1$ & $\left[1.0,0.0;0.0,0.5\right]$\\
			$\mathbf{K}_2$ &  $\left[0.05,0.0;0.0,0.01\right]$\\
			$c_0$ & $0.0$ \\
			$c_{\mathrm{inj}}$ & $1.0$ \\
			$p_{\mathrm{atm}}$ & $1.0$ \\            
			$u_{\mathrm{inj}}$ & $0.004$ \\
			$\Delta t$ & $0.5$\\
			$T$ & $150$\\
			\hline 
		\end{tabular}
		\label{Tb9:Coupled_flow_transport_data}
	\end{table}
}

Figure \ref{Fig:Hele_Shaw_Concentration_profiles} shows the
concentration profile under the double porosity/permeability
model. Two main inferences can be drawn from this figure. 
\emph{First}, Saffman-Taylor-type physical instability can
\emph{also} occur under the double porosity/permeability
model. As discussed earlier, the classical Saffman-Taylor
instability has been shown to occur under the Darcy model.
However, a further systematic study needs to be conducted
to find out the similarities and differences between the
classical Saffman-Taylor instability and the one under
the double porosity/permeability model.
\emph{Second}, the proposed stabilized formulation is capable of
eliminating the spurious numerical instabilities without
suppressing the underlying physical instability. Achieving
this important attribute under the proposed stabilized
formulation is one of the main contributions of this paper,
as it has been shown recently that some stabilized formulations
(for example, the Streamline/Upwind Petrov Galerkin (SUPG), 
and Galerkin Least-Squares (GLS) formulations) which are
commonly used to suppress spurious numerical instabilities,
may also suppress physical instabilities in some cases
\citep{Shabouei_Nakshatrala_VF,Shabouei_PhDThesis_UH_2016}.

%*****************************************************;
%                                                     ;
%  NAME                                               ;
%    S10_VMS_CR.tex                                   ;
%                                                     ;
%  WRITTEN BY                                         ;
%    Kalyana Babu Nakshatrala                         ;
%    Seyedeh Hanie Seyed Joodat                       ;
%                                                     ;
%*****************************************************;
\section{CONCLUDING REMARKS}
\label{Sec:S10_VMS_CR}
This paper has made several contributions to the
modeling of fluid flow in porous media with dual
pore-networks and possible mass transfer across
the pore-networks. 
\emph{First}, a stabilized mixed finite element
formulation has been presented for the double
porosity/permeability mathematical model. 
\emph{Second}, a systematic error analysis has
been performed on the proposed stabilized weak
formulation. Numerical convergence analysis and
patch tests have been used to illustrate the
convergence behavior and accuracy of the proposed
mixed formulation in the discrete setting.
\emph{Third}, the mathematical properties that the solutions
of the double porosity/permeability model enjoy have been
utilized to construct mechanics-based \textit{a posteriori} error
measures to assess the accuracy of the numerical solutions.
\emph{Last but not least}, the performance of the
proposed stabilized mixed formulation for modeling
the transient flow as well as coupled problems has
been illustrated using representative numerical
examples.
Some of the significant findings of this
paper can be summarized as follows:
\begin{enumerate}[(C1)]
\item[(C1)] Equal-order interpolation for all the field
  variables (pressure and velocity vector fields), which
  is computationally the most convenient, is stable under
  the proposed stabilized mixed formulation.
\item[(C2)] Patch tests revealed that the classical mixed
  formulation produces spurious node-to-node oscillations
  in the pressure fields under equal-order interpolation
  for all the field variables. The proposed stabilized
  mixed formulation was able to eliminate such unphysical
  oscillations in the pressure fields, and passed the
  patch tests up to the machine precision. 
 \item[(C3)] The numerical convergence rates
   obtained using the proposed stabilized
   formulation were in accordance with the
   theory for both $h$- and $p$-refinements.
 \item[(C4)] The accuracy of numerical solutions was assessed using the mechanics-based \textit{a posteriori} error measures for the pipe bend problem. The errors decreased \emph{monotonically} with mesh refinement for different orders of interpolation. This implies that the stabilized formulation is convergent and the computer implementation is correct. It should be noted that the mechanics-based solution verification method can be applied to any problem with any boundary condition.
 \item[(C5)] An extension of the proposed formulation
   to the transient case has performed well, as it was
   able to predict accurately that the rate of decay
   of the response (e.g., the velocity front) in the
   macro-pore network is slower than that of the
   micro-pore network. Physically, this
     phenomenon of slower decay can be attributed to
     the higher permeability (which implies lower
     dissipation, as dissipation is inversely
     proportional to the permeability) in the
     macro-pore network.
 \item[(C6)] The proposed stabilized mixed formulation
   suppressed the unphysical numerical instabilities
   but yet captured the underlying physical instability
   when applied to a coupled flow and transport problem
   in porous media with dual pore-networks. The captured  
   physical instability, is similar to the classical 
   Saffman-Taylor instability that has been shown to exist 
   for coupled Darcy and transport equations. The proposed 
   formulation will be particularly attractive for studying
   physical instabilities, as it has been shown recently
   that some well-known stabilized formulations which are
   designed to suppress numerical instabilities also
   suppressed physical instabilities.
\end{enumerate}

The research presented herein can be extended
on three fronts.
\begin{enumerate}[(R1)]
\item One can develop a hierarchy of mathematical
  models by incorporating other processes into
  the double porosity/permeability model. For example,
  the flow of multi-phase fluids in porous media
  exhibiting double porosity/permeability, and
  the incorporation of deformation of
  porous solid.
\item One can perform a theoretical study on
  the Saffman-Taylor-type instabilities under
  the double porosity/permeability model. In
  particular, one can address whether there
  are additional instability modes under the double
  porosity/permeability model when compared
  with the classical Saffman-Taylor instability 
  (which is based on the Darcy model).
  One can also obtain scaling laws. 
\item Heterogeneity of material properties
  and discontinuous distribution of permeability are
  very common in subsurface formations. Studies for
  the case of Darcy equations have shown that continuous
  formulations cannot properly handle abrupt changes
  in material properties, as the numerical solutions
  suffer from Gibbs phenomenon (which manifests as
  spurious oscillations in the numerical solution
  fields) \citep{Hughes_Masud_Wan_2006}.
  Thus one can develop a stabilized mixed discontinuous Galerkin
  formulation for the double porosity/permeability model that
  does not suffer from the Gibbs phenomenon in the solution fields
  when applied to problems with disparate medium properties.
\end{enumerate}

\appendix
%****************************************;
%                                        ;
%  NAME                                  ;
%    S3_VMS_Mixed_Weak.tex               ;
%                                        ;
%  WRITTEN BY                            ;
%    Kalyana Babu Nakshatrala            ;
%    Hanie Joodat                        ;
%                                        ;
%****************************************;
\section{Derivation of the proposed stabilized formulation}
\label{Sec:VMS_Appendix_A}
We provide a formal mathematical derivation of the
proposed stabilized mixed weak formulation. We
employ the variational multiscale paradigm 
\citep{Hughes_1995}, and obtain the stabilization
terms and the stabilization parameter in a consistent
manner. 
Such an approach has been successfully employed to 
develop stabilized formulations for porous media 
models with single pore-network; for example, see 
\citep{Masud_Hughes_2002,Hughes_Masud_Wan_2006,Nakshatrala_Turner_Hjelmstad_Masud_CMAME_2006_v195_p4036}. 
The basic idea is to decompose the solution
into resolved and unresolved components, estimate
the unresolved component, and substitute the estimated
component into the weak form to obtain the proposed
stabilized mixed formulation. 
By a resolved component, we refer to that part of the solution
that is captured by the underlying formulation (which, in our 
case, is the classical mixed formulation). The unresolved 
component can be interpreted as the difference between 
the exact solution and the resolved component. To improve 
the accuracy of the numerical solution, the unresolved
components need to be estimated accurately, which 
can be achieved using the variational multiscale paradigm.

We start our derivation by decomposing the macro-scale 
and micro-scale velocities into resolved and unresolved 
components. Mathematically,
%--------------------------------------------------;
%  Equation: Multiscale decomposition of solution  ;
%--------------------------------------------------;
\begin{align}
  \mathbf{u}_{1}(\mathbf{x}) = \underbrace{\overline{\mathbf{u}}_{1}
    (\mathbf{x})}_{\mathrm{resolved}} + \underbrace{\mathbf{u}_{1}^{'}(\mathbf{x})}_{\mathrm{unresolved}}
  \quad \mathrm{and} \quad 
  \mathbf{u}_{2}(\mathbf{x}) = \underbrace{\overline{\mathbf{u}}_{2}
      (\mathbf{x})}_{\mathrm{resolved}} + \underbrace{\mathbf{u}_{2}^{'}(\mathbf{x})}_{\mathrm{unresolved}}
    \label{Eqn:VMS_Resolved_Unresolved_u}
\end{align}
where the resolved components are denoted by
over-lines, and the primed quantities represent
the unresolved components. Similarly, the weighting
functions corresponding to these velocities are
decomposed as follows:
%-------------------------------------------------;
%  Equation: Multiscale decomposition of weights  ;
%-------------------------------------------------;
\begin{align}
  \label{Eqn:VMS_Resolved_Unresolved_w}
    \mathbf{w}_{1}(\mathbf{x}) = \overline{\mathbf{w}}_{1}
    (\mathbf{x}) + \mathbf{w}_{1}^{'}(\mathbf{x})
    \quad \mathrm{and} \quad
    \mathbf{w}_{2}(\mathbf{x}) = \overline{\mathbf{w}}_{2}
    (\mathbf{x}) + \mathbf{w}_{2}^{'}(\mathbf{x})
\end{align}
In principle, one could perform a similar decomposition to 
the macro- and micro-pressure fields. Herein, we assume 
that the pressure fields will be adequately resolved. Therefore,  
we do not decompose the pressure fields (i.e., $p_{1}(\mathbf{x})$ 
and $p_{2}(\mathbf{x})$) and the corresponding 
weighting functions (i.e., $q_{1}(\mathbf{x})$ 
and $q_{2}(\mathbf{x})$). 
In Sections \ref{Sec:S4_VMS_Theoretical} and \ref{Sec:S5_VMS_Canonical}, 
we have illustrated, through stability analysis and numerical simulations, that 
such an assumption is still able to provide a stable and accurate formulation. 
To localize the unresolved components, we enforce the
closure conditions that the unresolved components of
the velocities and their weighting functions vanish on 
the element boundaries. That is,
%--------------------------------;
%  Equation: Closure conditions  ;
%--------------------------------;
\begin{align}
  \label{Eqn:VMS_unresolved_BC}
  \mathbf{u}_{1}^{'}(\mathbf{x}) = \mathbf{0}, \quad 
  \mathbf{u}_{2}^{'}(\mathbf{x}) = \mathbf{0}, \quad
  \mathbf{w}_{1}^{'}(\mathbf{x}) = \mathbf{0}
  \quad \mathrm{and} \quad 
  \mathbf{w}_{2}^{'}(\mathbf{x}) = \mathbf{0}
  \quad \mathrm{on} \; \partial \Omega^{e};
  \; e = 1, \cdots, Nele
\end{align}

By substituting the multiscale decompositions given
by equations \eqref{Eqn:VMS_Resolved_Unresolved_u}
an \eqref{Eqn:VMS_Resolved_Unresolved_w} into the
classical mixed formulation given in equation
\eqref{Eqn:VMS_classical_mixed_formulation}, 
invoking the arbitrariness of the weighting functions ( 
$\overline{\mathbf{w}}(\mathbf{x})$ and $\mathbf{w}^{'}(\mathbf{x})$), 
and enforcing the closure conditions given by equation
\eqref{Eqn:VMS_unresolved_BC}, we obtain two subproblems for 
each pore-network. The two subproblems corresponding to the
macro-pore network can be written as follows:   
%------------------------------------------------------;
%  Equation: Two sub-problems for macro-pore network   ;
%------------------------------------------------------;
\begin{subequations}
\begin{alignat}{2}
\label{Eqn:VMS_coarse_scale_macro}
&(\overline{\mathbf{w}}_{1}
  ;\mu \mathbf{K}_{1}^{-1} \overline{\mathbf{u}}_{1}
  + \mu \mathbf{K}_{1}^{-1} \mathbf{u}_{1}^{'})
  - (\mathrm{div}[\overline{\mathbf{w}}_{1}];p_1)
  + (q_1;\mathrm{div}[\overline{\mathbf{u}}_{1}] + \mathrm{div}[\mathbf{u}_{1}^{'}])
  + (q_1;\beta/\mu(p_1 - p_2)) = \nonumber \\
  & (\overline{\mathbf{w}}_{1};\gamma \mathbf{b})
  - \langle \overline{\mathbf{w}}_{1} \cdot \widehat{\mathbf{n}} ;p_{01}
  \rangle_{\Gamma^{\mathrm{p}}_{1}}  \\
  \label{Eqn:VMS_fine_scale_macro}
&(\mathbf{w}_{1}^{'};\mu \mathbf{K}_{1}^{-1} \overline{\mathbf{u}}_{1}
  + \mu \mathbf{K}_{1}^{-1} \mathbf{u}_{1}^{'})_{\Omega^{e}}
  - (\mathrm{div}[\mathbf{w}_{1}^{'}];p_1)_{\Omega^{e}} = 
   (\mathbf{w}_{1}^{'};\gamma \mathbf{b})_{\Omega^{e}}   
   \quad \forall e = 1, \cdots, Nele
\end{alignat}
\end{subequations}
The two subproblems corresponding to the
micro-pore network can be written as follows:   
%-----------------------------------------------------;
%  Equation: Two sub-problems for micro-pore network  ;
%-----------------------------------------------------;
\begin{subequations}
\begin{alignat}{2}
  \label{Eqn:VMS_coarse_scale_micro}
  & (\overline{\mathbf{w}}_{2};\mu \mathbf{K}_{2}^{-1} \overline{\mathbf{u}}_{2}
  + \mu \mathbf{K}_{2}^{-1} \mathbf{u}_{2}^{'})
  - (\mathrm{div}[\overline{\mathbf{w}}_{2}] ;p_2)
  + (q_2;\mathrm{div}[\overline{\mathbf{u}}_{2}]+ \mathrm{div}[\mathbf{u}_{2}^{'}])
  - (q_2;\beta/\mu(p_1 - p_2)) = \nonumber \\
  & (\overline{\mathbf{w}}_{2};\gamma \mathbf{b})
  - \langle\overline{\mathbf{w}}_{2} \cdot \widehat{\mathbf{n}} ;p_{02}
  \rangle_{\Gamma^{\mathrm{p}}_{2}} \\
  \label{Eqn:VMS_fine_scale_micro}
  & (\mathbf{w}_{2}^{'};\mu \mathbf{K}_{2}^{-1}
  \overline{\mathbf{u}}_{2} + \mu \mathbf{K}_{2}^{-1}
  \mathbf{u}_{2}^{'})_{\Omega^{e}}
  - (\mathrm{div}[\mathbf{w}_{2}^{'}];p_2)_{\Omega^{e}}
  =  (\mathbf{w}_{2}^{'};\gamma \mathbf{b})_{\Omega^{e}} 
  \quad \forall e = 1, \cdots, Nele
\end{alignat}
\end{subequations}

We enforce the closure conditions using bubble
functions, which vanish on the boundary of the
domain on which they are defined
\citep{Baiocchi_Brezzi_France_CMAME_1993_v105_p125}.
We, therefore, mathematically write the unresolved
quantities as follows: 
%--------------------------------------------------;
%  Equation: Approximation using bubble functions  ;
%--------------------------------------------------;
\begin{align}
  \label{Eqn:VMS_bubble_functions_macro}
  \mathbf{u}_{1}^{'} (\mathbf{x}) = b^{e}(\mathbf{x})
  \boldsymbol{\xi}_{1}, \; 
  \mathbf{w}_{1}^{'} (\mathbf{x}) = b^{e}(\mathbf{x})
  \boldsymbol{\zeta}_{1}, \; 
  \mathbf{u}_{2}^{'} (\mathbf{x}) = b^{e}(\mathbf{x})
  \boldsymbol{\xi}_{2}
  \quad \mathrm{and} \quad
  \mathbf{w}_{2}^{'} (\mathbf{x}) = b^{e}(\mathbf{x})
  \boldsymbol{\zeta}_{2}
  \quad \forall \mathbf{x} \in \Omega^{e}
\end{align}
where $\boldsymbol{\xi}_{1}$, $\boldsymbol{\xi}_{2}$,
$\boldsymbol{\zeta}_{1}$ and $\boldsymbol{\zeta}_{2}$
are constant vectors independent of $\mathbf{x}$, and
$b^{e}(\mathbf{x})$ is a bubble function defined on
the element $\Omega^{e}$.
By substituting equation \eqref{Eqn:VMS_bubble_functions_macro}
into the subproblems given by equations
\eqref{Eqn:VMS_fine_scale_macro} and
\eqref{Eqn:VMS_fine_scale_micro}, and
noting that $\boldsymbol{\zeta}_{1}$ and
$\boldsymbol{\zeta}_{2}$ are arbitrary
vectors; we estimate the unresolved
velocities as follows: 
\begin{subequations}
  \begin{alignat}{2}
    \label{Eqn:VMS_bubble_integral_form_macro}
    &\mathbf{u}_{1}^{'} (\mathbf{x}) = - b^{e} (\mathbf{x}) \left( \int_{\Omega^{e}} \left(b^{e} (\mathbf{x})\right)^{2} d\Omega\right)^{-1}  \int_{\Omega^{e}} b^{e} (\mathbf{y}) \overline{\mathbf{r}}_{1}(\mathbf{y})  d\Omega_{\mathbf{y}}   \\
    \label{Eqn:VMS_bubble_integral_form_micro}
    &\mathbf{u}_{2}^{'} (\mathbf{x}) = - b^{e} (\mathbf{x}) \left( \int_{\Omega^{e}} \left(b^{e} (\mathbf{x})\right)^{2} d\Omega\right)^{-1}  \int_{\Omega^{e}} b^{e} (\mathbf{y}) \overline{\mathbf{r}}_{2}(\mathbf{y})  d \Omega_{\mathbf{y}}    
  \end{alignat}
\end{subequations}
where the residuals of the resolved quantities for
the macro and micro pore-networks are, respectively,
defined as follows:
%--------------------------------;
%  Equation: Resolved residuals  ;
%--------------------------------;
\begin{align}
  \label{Eqn:VMS_r1}
  \overline{\mathbf{r}}_{1}(\mathbf{x}) &=
  \overline{\mathbf{u}}_{1}(\mathbf{x}) +
  \frac{1}{\mu}\mathbf{K}_{1}
  \left(\mathrm{grad}[p_{1}]
  - \gamma \mathbf{b}(\mathbf{x}) \right) \\
  \label{Eqn:VMS_r2}
  \overline{\mathbf{r}}_{2} (\mathbf{x}) &=
  \overline{\mathbf{u}}_{2}(\mathbf{x}) +
  \frac{1}{\mu}\mathbf{K}_{2} \left(\mathrm{grad}[p_{2}]
  - \gamma \mathbf{b}(\mathbf{x}) \right)
\end{align}
Since in a finite element setting, the residuals
($\overline{\mathbf{r}}_{1}(\mathbf{x})$ and
$\overline{\mathbf{r}}_{2}(\mathbf{x})$) are
essentially constant over an element in the
limit of an adequately refined mesh, the
velocities in equations
\eqref{Eqn:VMS_bubble_integral_form_macro}
and \eqref{Eqn:VMS_bubble_integral_form_micro}
can be written as follows:
%----------------------------------------;
%  Equation: Velocities in terms of tau  ;
%----------------------------------------;
\begin{align}
  \label{Eqn:VMS_bubble_integral_form_macro2}
  \mathbf{u}_{1}^{'} (\mathbf{x}) = -\tau(\mathbf{x})
  \overline{\mathbf{r}}_{1}(\mathbf{x}) 
  \quad \mathrm{and} \quad 
  \mathbf{u}_{2}^{'} (\mathbf{x}) = -\tau(\mathbf{x})
  \overline{\mathbf{r}}_{2}(\mathbf{x})    
\end{align}
where the stabilization parameter $\tau(\mathbf{x})$
takes the following form: 
\begin{align}
  \tau (\mathbf{x}) = b^{e} (\mathbf{x}) \left( \int_{\Omega^{e}} \left(b^{e} (\mathbf{x})\right)^{2} d\Omega\right)^{-1} \left( \int_{\Omega^{e}} b^{e} (\mathbf{x}) d\Omega \right)
\end{align}
One can employ the above stabilization parameter for
obtaining a stabilized formulation. However, for the
double porosity/permeability model it is adequate
to employ a representative value for the stabilization
parameter, which is justified by the convergence analysis
we presented in this paper. To obtain a representative
value for the stabilization parameter, we consider the
average of $\tau(\mathbf{x})$, which can be written
as follows:
\begin{align}
  \tau_{\mathrm{avg}}
  = \frac{1}{\mathrm{meas}(\Omega^{e})} \int_{\Omega^{e}}
  \tau(\mathbf{x}) d \Omega
  = \left( \int_{\Omega^{e}} \left(b^{e} (\mathbf{x})\right)^{2}
  d\Omega\right)^{-1} \left( \int_{\Omega^{e}} b^{e}(\mathbf{x})
  d\Omega \right)^{2}
\end{align}
where $\mathrm{meas}(\Omega^{e})$ denotes
the measure of $\Omega^{e}$ (By measure
we mean length in 1D, area in 2D and
volume in 3D.) 
It has been shown in
\citep{Nakshatrala_Turner_Hjelmstad_Masud_CMAME_2006_v195_p4036}
that it is possible to construct a bubble
function that gives a value of one-half for
$\tau_{\mathrm{avg}}$ for a given $\Omega^{e}$.
We thus take one-half to be the representative
value for the stabilization parameter. We then
approximate the unresolved components of the
velocities as follows:
\begin{align}
  \mathbf{u}_{1}^{'}(\mathbf{x}) \approx -\frac{1}{2}
  \overline{\mathbf{r}}_{1}(\mathbf{x})
  \quad \mathrm{and} \quad 
  \mathbf{u}_{2}^{'}(\mathbf{x}) \approx -\frac{1}{2}
  \overline{\mathbf{r}}_{2}(\mathbf{x})
\end{align}
By substituting the above expressions into
the subproblems given by equations
\eqref{Eqn:VMS_coarse_scale_macro} and
\eqref{Eqn:VMS_coarse_scale_micro}, and
noting the definitions for 
$\overline{\mathbf{r}}_{1}(\mathbf{x})$ and
$\overline{\mathbf{r}}_{2}(\mathbf{x})$, we
obtain a stabilized formulation of the
following form:
%------------------------------------------------;
%  Equation: Rudimentary stabilized formulation  ;
%------------------------------------------------;
\begin{align}
  \label{Eqn:VMS_Gal_relation2}
  \mathcal{B}_{\mathrm{Gal}}(\overline{\mathbf{w}}_1,
  \overline{\mathbf{w}}_2,q_1,q_2;
  \overline{\mathbf{u}}_1,\overline{\mathbf{u}}_2,p_1,p_2)
  - \frac{1}{2} \left(\mu \mathbf{K}_{1}^{-1} \overline{\mathbf{w}}_1
  - \mathrm{grad}[q_1];\overline{\mathbf{u}}_{1}+ \frac{1}{\mu} \mathbf{K}_{1} \mathrm{grad}[p_{1}]\right) \nonumber \\
  - \frac{1}{2} \left(\mu \mathbf{K}_{2}^{-1}
  \overline{\mathbf{w}}_2 - \mathrm{grad}[q_2];
  \overline{\mathbf{u}}_{2}+ \frac{1}{\mu} \mathbf{K}_{2}
  \mathrm{grad}[p_{2}] \right) 
  = \mathcal{L}_{\mathrm{Gal}}(\overline{\mathbf{w}}_1,
  \overline{\mathbf{w}}_2,q_1,q_2)\nonumber \\
  -\frac{1}{2} \left(\mu \mathbf{K}_{1}^{-1} \overline{\mathbf{w}}_1
  - \mathrm{grad}[q_1];\frac{1}{\mu} \mathbf{K}_{1} \gamma \mathbf{b} \right) 
  - \frac{1}{2}\left(\mu \mathbf{K}_{2}^{-1} \overline{\mathbf{w}}_2 - \mathrm{grad}[q_2];
  \frac{1}{\mu} \mathbf{K}_{2} \gamma \mathbf{b} \right)
\end{align}
where $\mathcal{B}_{\mathrm{Gal}}$ and $\mathcal{L}_{\mathrm{Gal}}$ are defined in equations \eqref{Eqn:VMS_bilinear_form} and \eqref{Eqn:VMS_linear_form}, respectively.
It should be noted that all the quantities in
the above equation are resolved components.
We therefore drop the over-lines for convenience,
and write the above stabilized mixed formulation
in the following compact form: 
%---------------------------;
%  Equation: VMS weak form  ;
%---------------------------;
\begin{align}
  \mathcal{B}_{\mathrm{stab}}(\mathbf{w}_1,\mathbf{w}_2,q_1,q_2;
  \mathbf{u}_1,\mathbf{u}_2,p_1,p_2)
  = \mathcal{L}_{\mathrm{stab}}(\mathbf{w}_1,\mathbf{w}_2,q_1,q_2) \nonumber \\
   \quad \forall
  \left(\mathbf{w}_1(\mathbf{x}), \mathbf{w}_2(\mathbf{x})\right) ~
  \in \mathcal{W}_1 \times \mathcal{W}_2,
  \left(q_1(\mathbf{x}),q_2(\mathbf{x})\right)
  \in \mathcal{Q} 
\end{align}
where the bilinear form and the linear functional
are, respectively, defined as follows:
%---------------------------;
%  Equation: Bilinear form  ;
%---------------------------;
\begin{align}
  \mathcal{B}_{\mathrm{stab}}(\mathbf{w}_1,\mathbf{w}_2,q_1,q_2;
  \mathbf{u}_1,\mathbf{u}_2,p_1,p_2) := \mathcal{B}_{\mathrm{Gal}}(\mathbf{w}_1,\mathbf{w}_2,q_1,q_2;
  \mathbf{u}_1,\mathbf{u}_2,p_1,p_2) \nonumber \\
  -\frac{1}{2} \left(\mu \mathbf{K}_{1}^{-1} \mathbf{w}_1 - \mathrm{grad}[q_1];
  \frac{1}{\mu} \mathbf{K}_{1} (\mu \mathbf{K}_{1}^{-1} \mathbf{u}_1 + \mathrm{grad}[p_1])\right)
  \nonumber \\
  -\frac{1}{2} \left(\mu \mathbf{K}_{2}^{-1} \mathbf{w}_2 - \mathrm{grad}[q_2];
  \frac{1}{\mu} \mathbf{K}_{2} (\mu \mathbf{K}_{2}^{-1} \mathbf{u}_2 + \mathrm{grad}[p_2])\right) 
\end{align}  
%-------------------------------;
%  Equation: Linear functional  ;
%-------------------------------;
\begin{align}
  \mathcal{L}_{\mathrm{stab}}(\mathbf{w}_1,\mathbf{w}_2,q_1,q_2) &:= \mathcal{L}_{\mathrm{Gal}}(\mathbf{w}_1,\mathbf{w}_2,q_1,q_2)
  -\frac{1}{2} \left(\mu \mathbf{K}_{1}^{-1} \mathbf{w}_1 - \mathrm{grad}[q_1];
  \frac{1}{\mu} \mathbf{K}_{1} \gamma \mathbf{b}\right)\nonumber \\
  &-\frac{1}{2} \left(\mu \mathbf{K}_{2}^{-1} \mathbf{w}_2 - \mathrm{grad}[q_2];
  \frac{1}{\mu} \mathbf{K}_{2} \gamma \mathbf{b}\right)
\end{align}

It is important to note that the stabilization terms
are residual-based. Moreover, the stabilization terms
are of adjoint-type and are not of least-squares-type.

%================;
%  Bibliography  ;
%================;
\bibliographystyle{plainnat}
\bibliography{Master_References/Books,Master_References/Master_References}
\clearpage
\newpage
%*****************************************************;
%                                                     ;
%  NAME                                               ;
%    VMS_Figures.tex                                  ;
%                                                     ;
%  WRITTEN BY                                         ;
%    Kalyana Babu Nakshatrala                         ;
%    Hanie Joodat                                     ;
%                                                     ;
%*****************************************************;
%
%****************************************;
%  Figures: 1D constant flow patch test  ;
%****************************************;
%
%--------------------------------------------;
% Figure 1: Domain in 1D and 3D patch tests  ;
%--------------------------------------------;
\begin{figure}
\subfigure[\label{Fig:1D_patch_test}]{
\includegraphics[clip,scale=0.6]{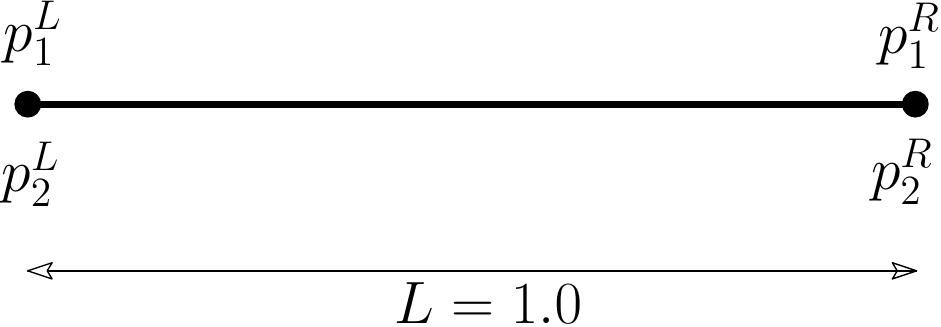}}
\subfigure[\label{Fig:3D_patch_test}]{
\includegraphics[clip,scale=0.6]{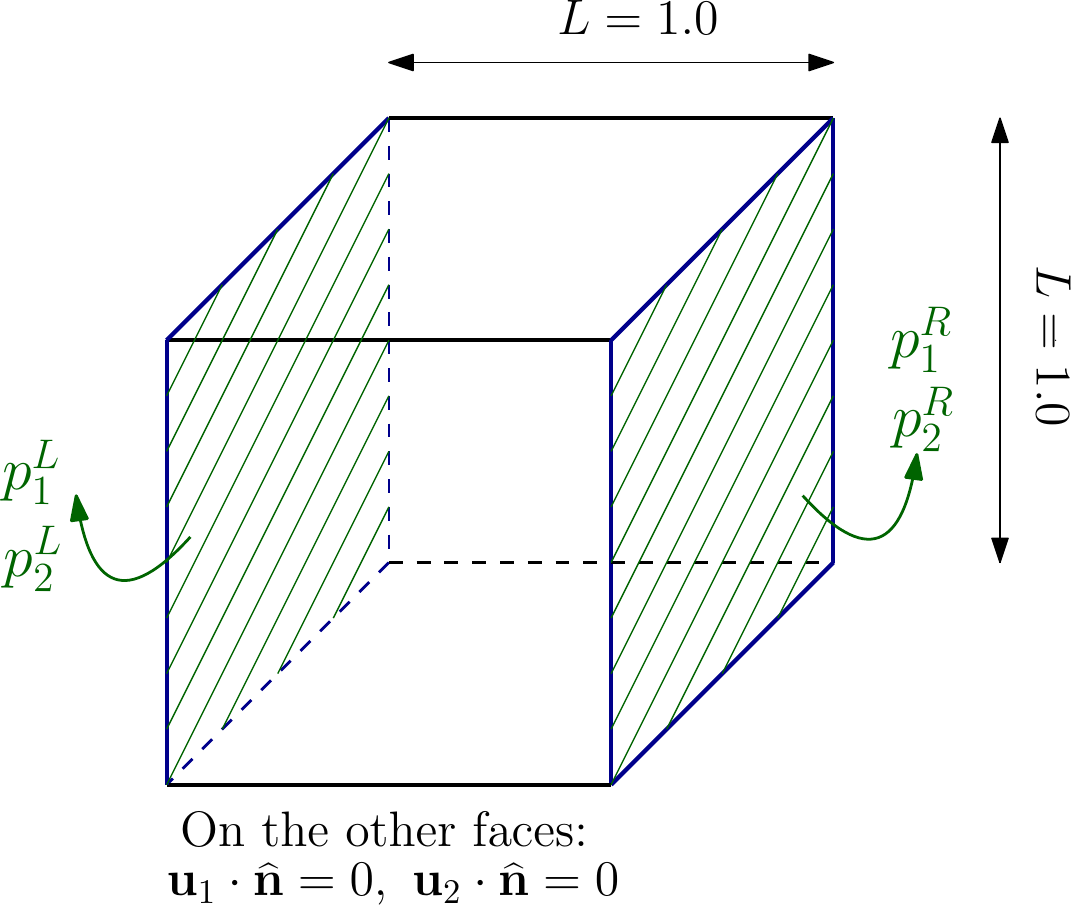}}
\caption{\textsf{Patch tests:}~The left figure provides a pictorial description of the 1D patch test and the right one shows the domain and the boundary conditions in 3D patch test. In 3D test, pressures are prescribed on the left and right faces and on the other faces, the normal component of velocity is zero in both pore-networks.}
\label{Fig:VMS_Patch_tests}
\end{figure}
%
%---------------------------------------------------------;
%  Figure 2: Pressures and Velocities (VMS vs. Galerkin)  ;
%---------------------------------------------------------;
\begin{figure}[!h]
\centering
  \subfigure[]{
	\includegraphics[clip,scale=0.24]{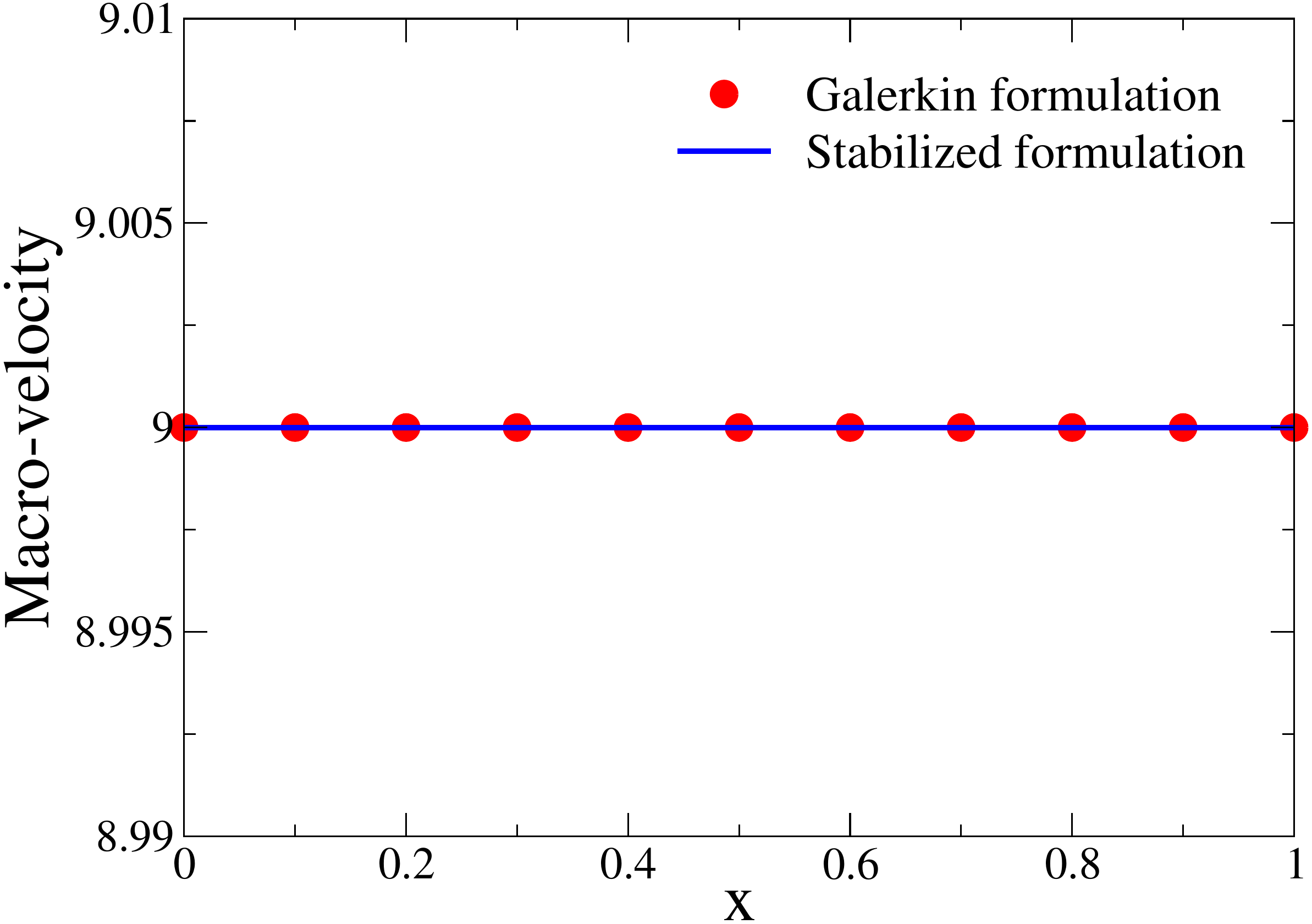}}
  \hspace{15mm}
  \subfigure[]{
	\includegraphics[clip,scale=0.24]{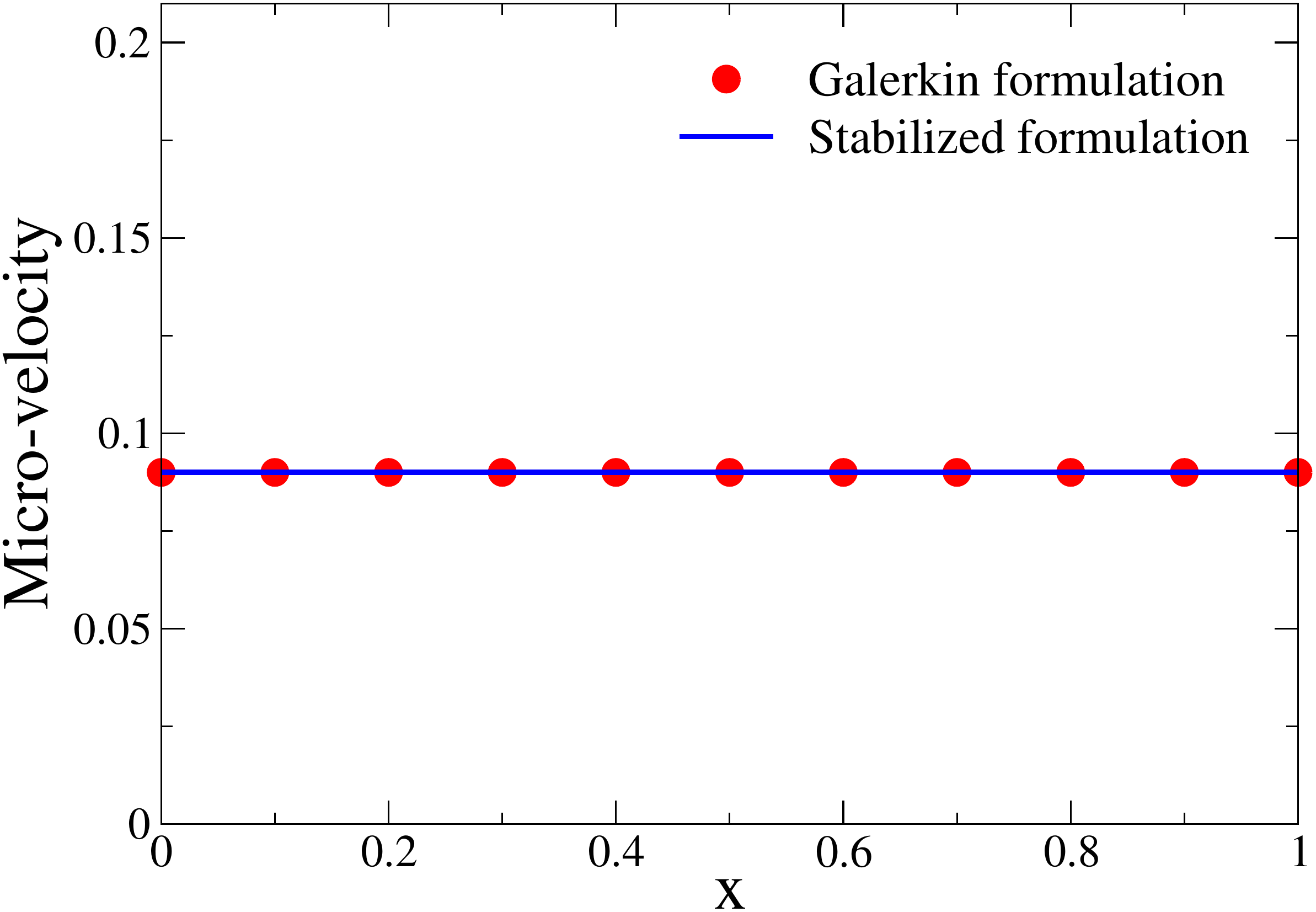}}
  \subfigure[\label{Fig:Macro_pressure_patch_test_VG}]{
  	\includegraphics[clip,scale=0.24]{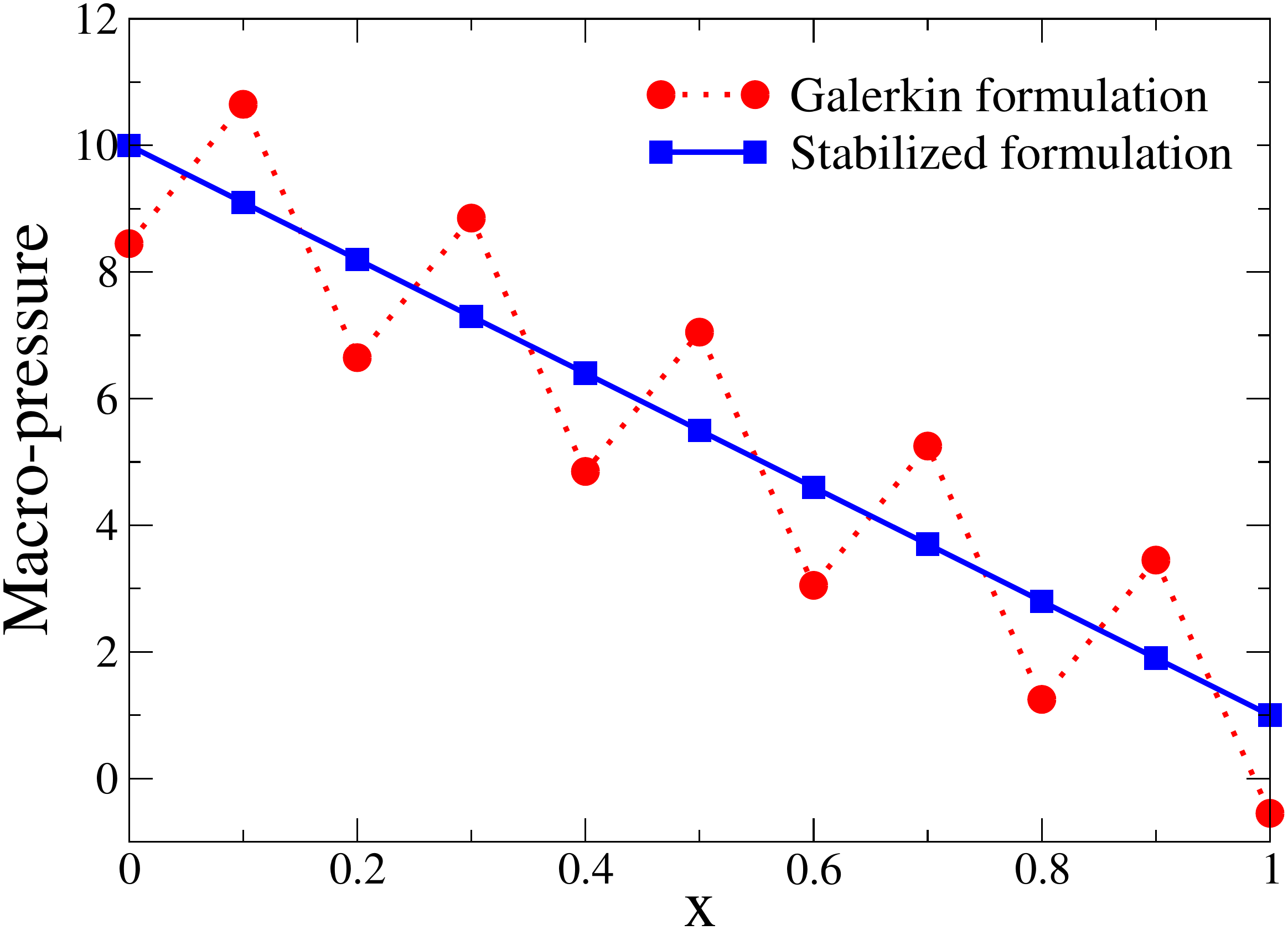}}
  \hspace{15mm}
  \subfigure[\label{Fig:Micro_pressure_patch_test_VG}]{
  	\includegraphics[clip,scale=0.24]{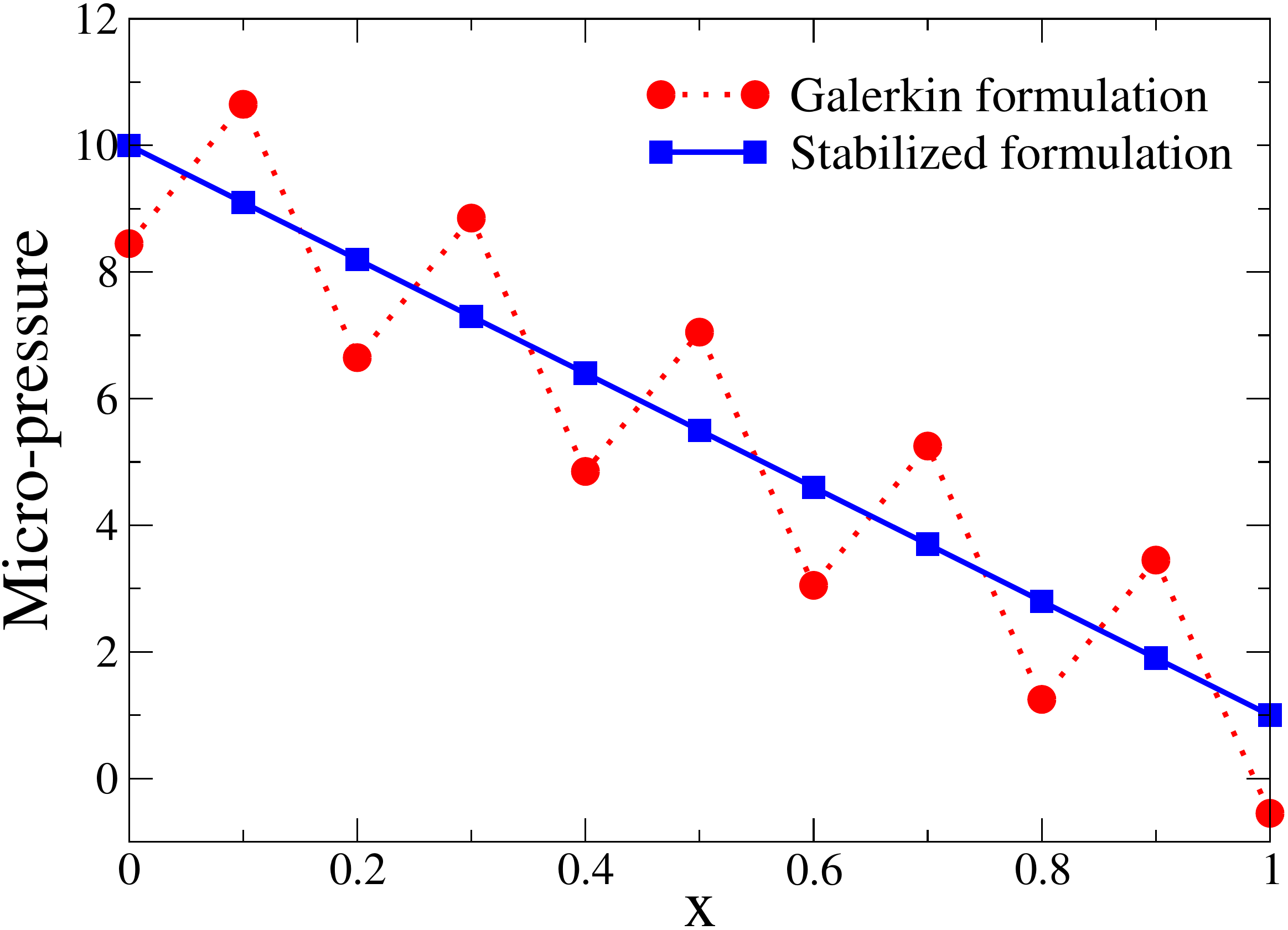}}
	\caption{\textsf{1D patch test:}~Pressure and velocity in pore-networks under Galerkin and proposed  formulations. For velocity fields, the values match with the analytical solution up-to machine precision under both formulations. For pressure fields, spurious oscillations are observed under Galerkin formulation, even for equal-order interpolation. Under stabilized mixed formulation, such oscillations are eliminated.}
    \label{Fig:Pressure_velocity_VMS_Galerkin_1D}
\end{figure}

%=================================;
%  Three-dimensional patch-tests  ;
%=================================;
%----------------------------------------;
%  Figure 3: Pressures in 3D patch test  ;
%----------------------------------------;
\begin{figure}
\includegraphics[clip,scale=1.0]{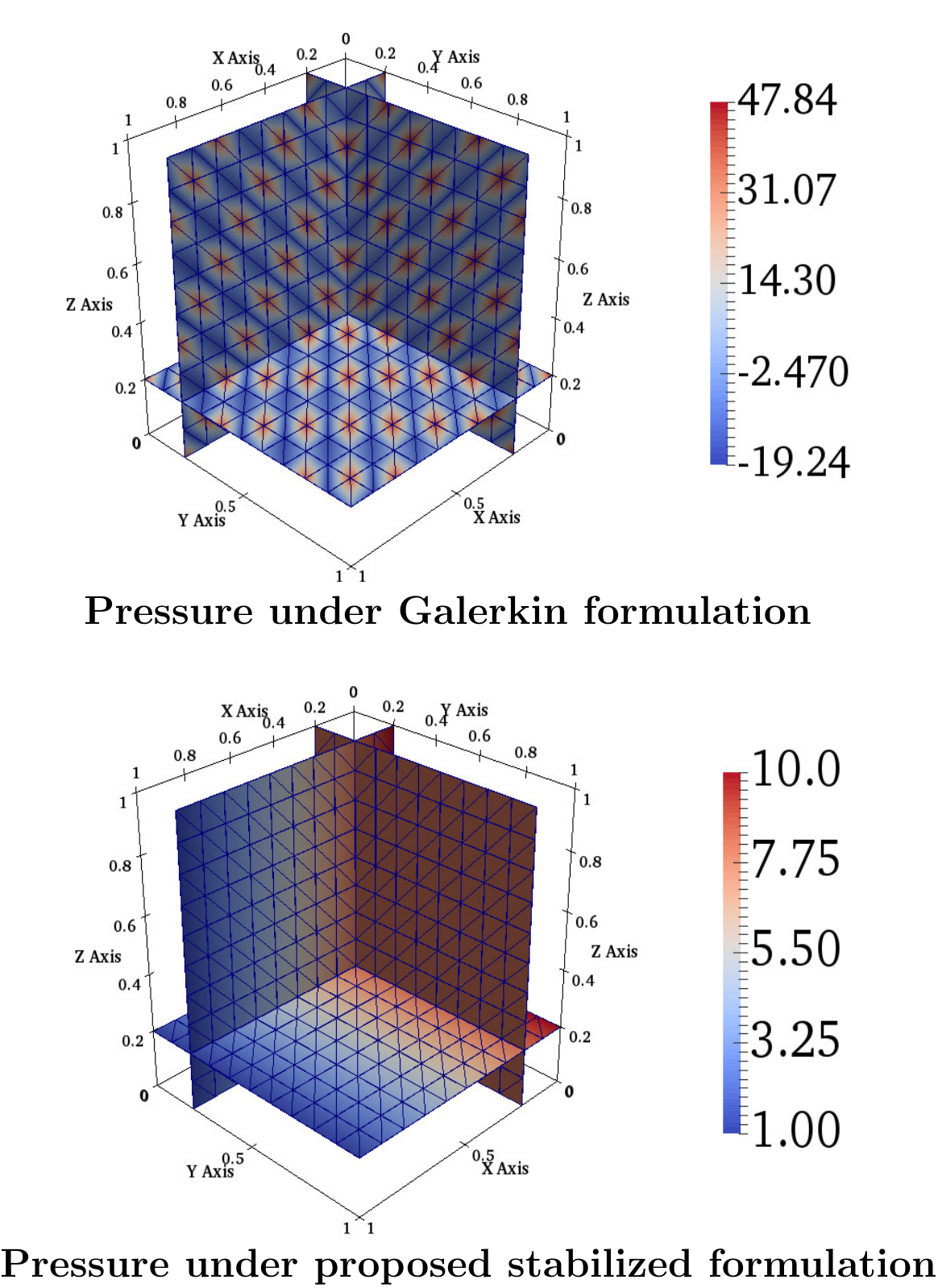}
\caption{\textsf{3D patch test:}~Pressure profiles in the micro-pores and macro-pores in 3D constant flow patch test under Galerkin and the proposed stabilized mixed formulations. Under Galerkin formulation, spurious oscillations are observed in pressure fields, even for equal-order interpolation, which implies that the results are unstable. Such oscillations are eliminated from the pressure profiles under the proposed mixed formulation.}
\label{Fig:Pressure_VMS_Galerkin_3D}
\end{figure}
%
%---------------------------------------------------;
%  Figure 4: Problem #1: convergence: h-refinement  ;
%---------------------------------------------------;
\begin{figure}
  \psfrag{L2p1}{$L_2 \; p_1$}
  \psfrag{L2p2}{$L_2 \; p_2$}
  \psfrag{L2v1}{$L_2 \; v_1$}
  \psfrag{L2v2}{$L_2 \; v_2$}
  \psfrag{H1p1}{$H^1 \; p_1$}
  \psfrag{H1p2}{$H^1 \; p_2$}
  \subfigure{
    \includegraphics[clip,scale=0.38]{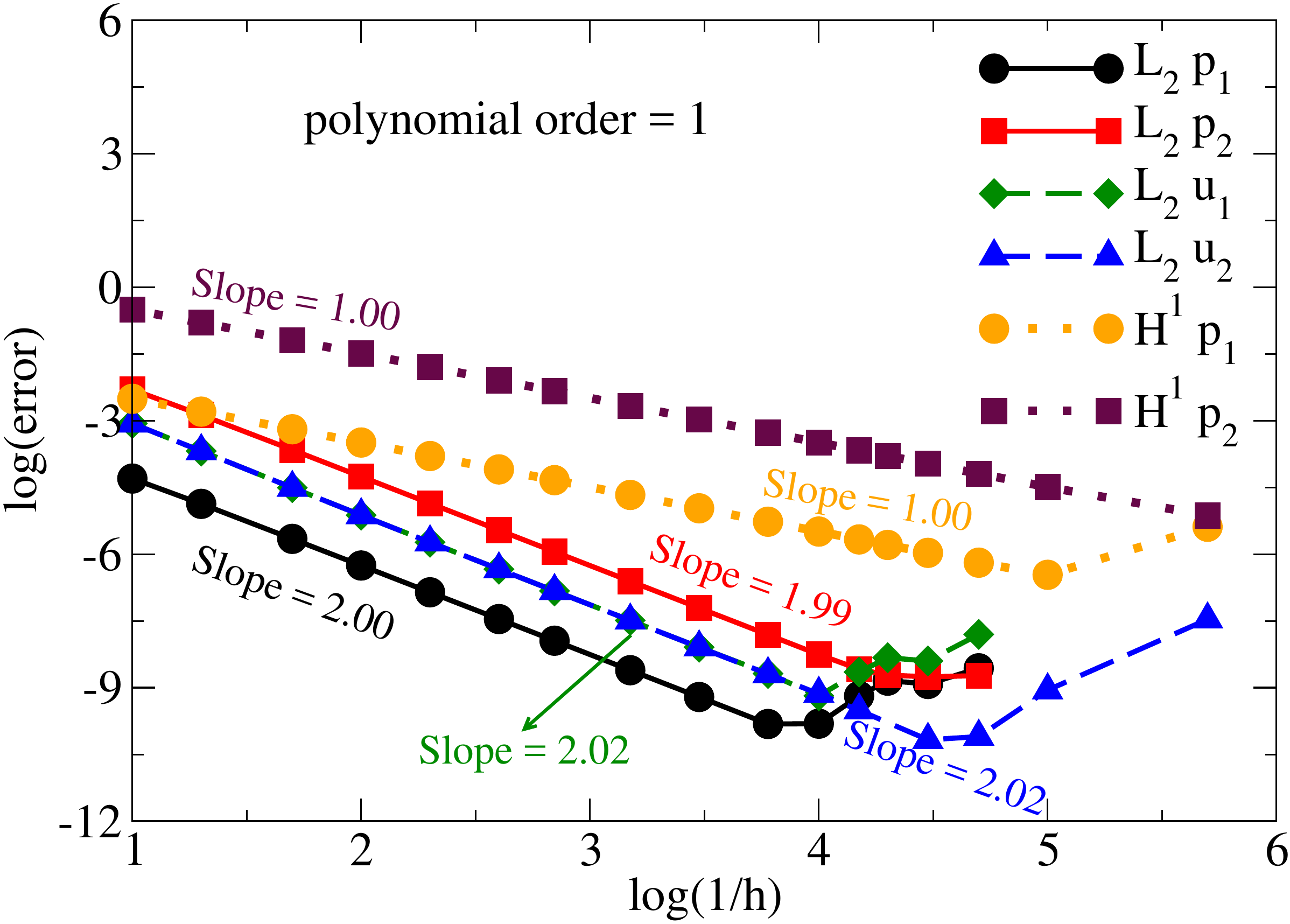}}
    \subfigure{
    \includegraphics[clip,scale=0.38]{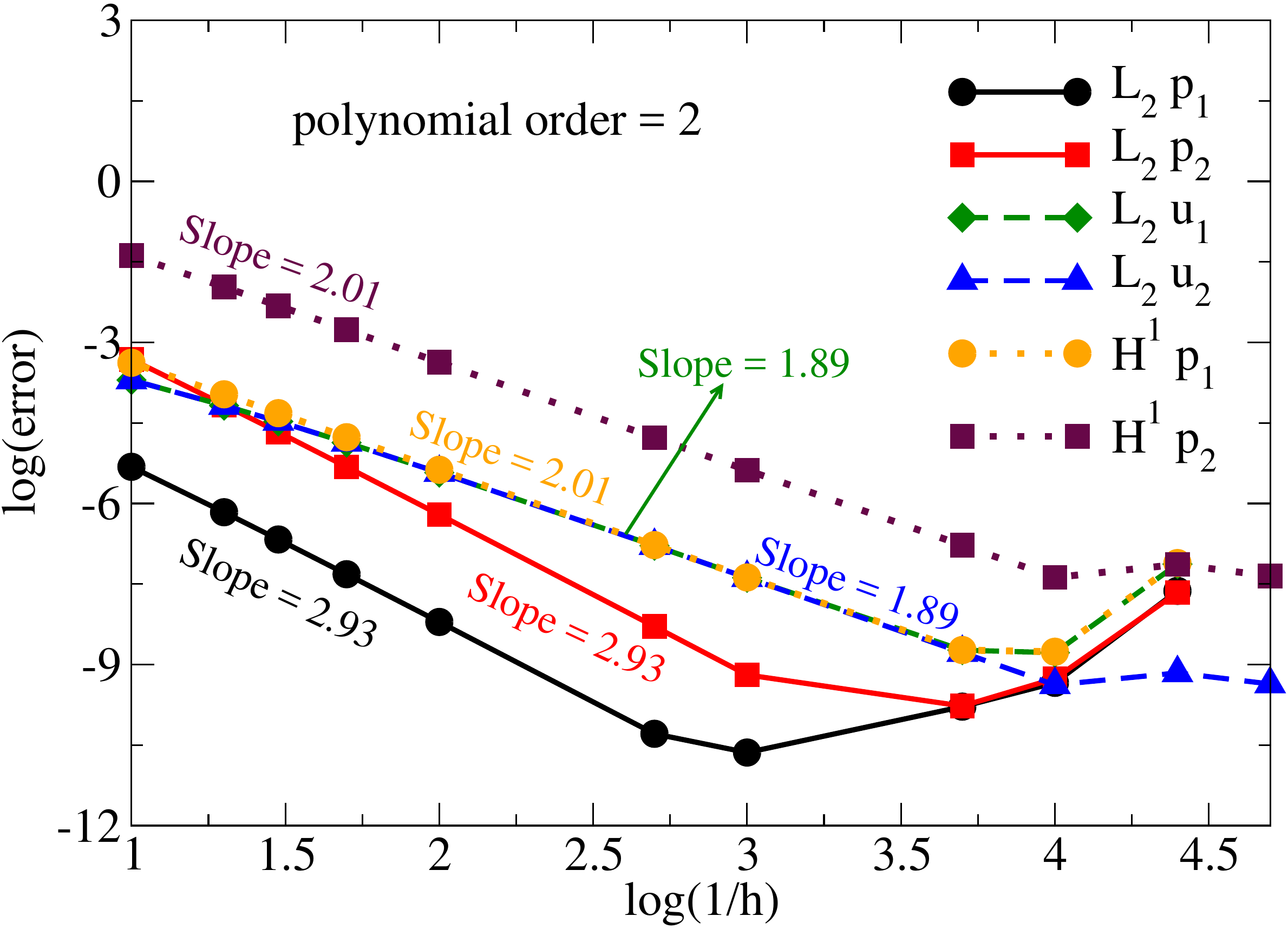}}
  \subfigure{
    \includegraphics[clip,scale=0.38]{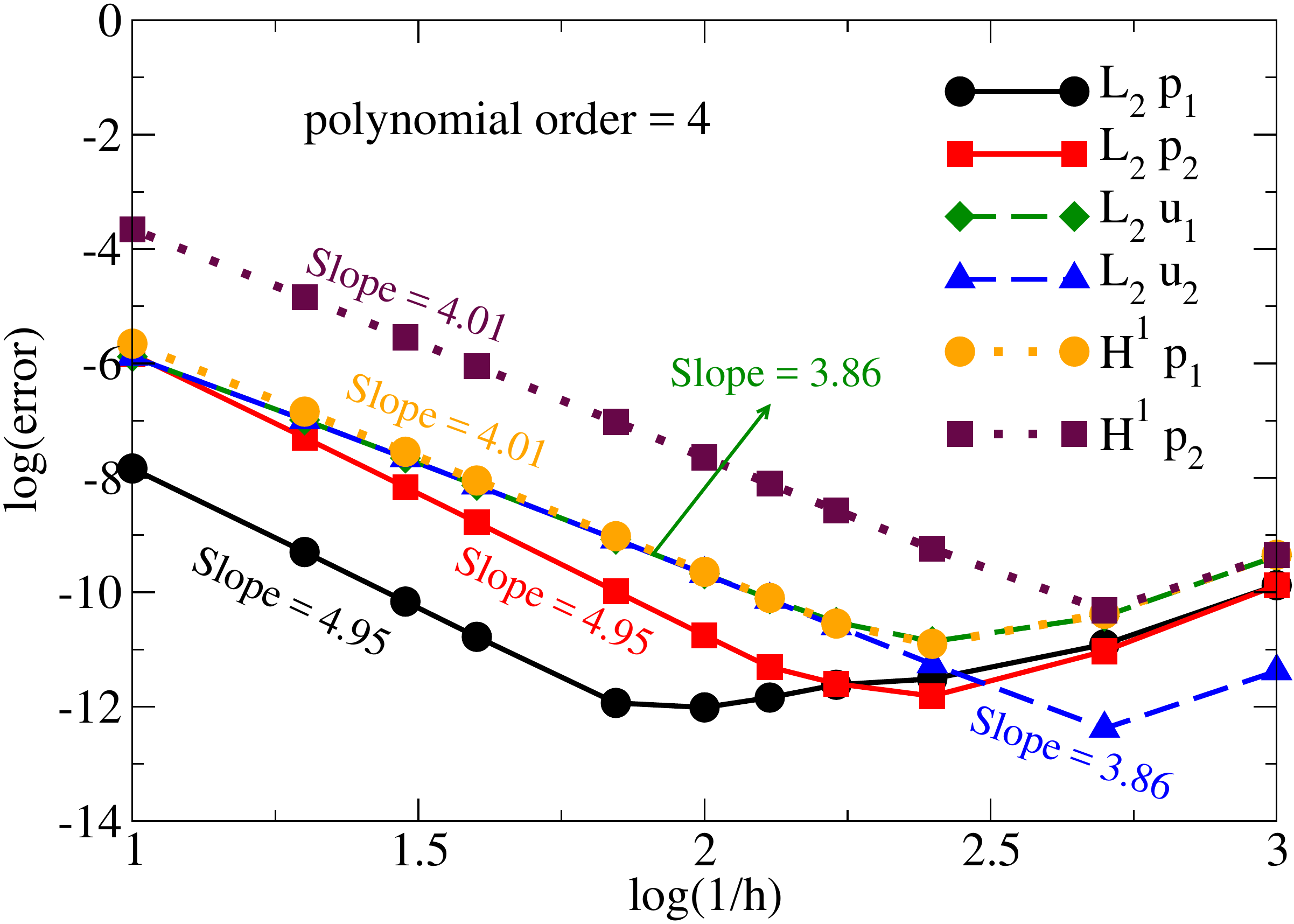}}
  \caption{\textsf{1D numerical convergence
      analysis:}~This figure illustrates the
    numerical convergence of the proposed
    stabilized mixed formulation under
    $h$-refinement for various polynomial
    orders. The rate of convergence is
    polynomial, which is in accordance with the theory.
    \label{Fig:Dual_Problem_1_h_refinement}}
\end{figure}
%
%---------------------------------------------------;
%  Figure 5: Problem #1: convergence: p-refinement  ;
%---------------------------------------------------;
\begin{figure}
  \includegraphics[clip,scale=0.4]{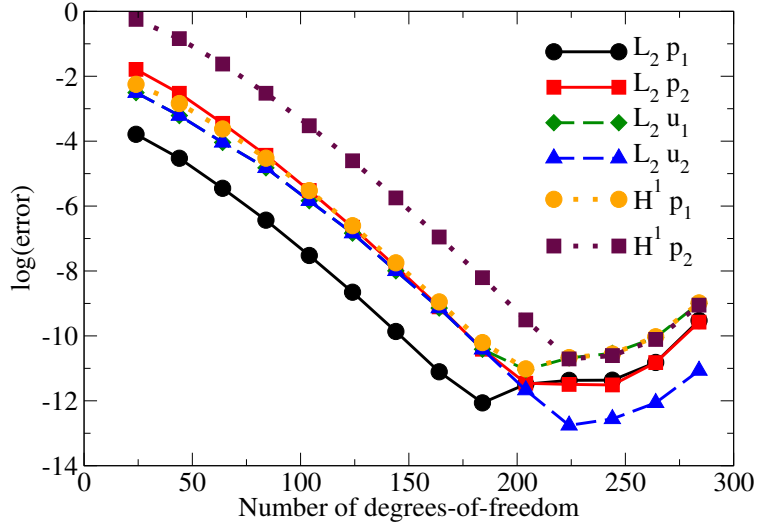}
  \caption{\textsf{1D numerical convergence
      analysis:}~This figure illustrates the
    numerical convergence of the proposed
    stabilized mixed formulation under
    $p$-refinement for a fixed mesh size
    ($h = 0.2$). The number of degrees-of-freedom
    corresponds to $p = 1$ to $14$. The rate of
    convergence is exponential, which is in
    accordance with the theory.
    \label{Fig:Dual_Problem_1_p_refinement}}
\end{figure}

%----------------------------------------;
%  Figure 6: 2D setting for convergence  ;
%----------------------------------------;
\begin{figure}
  \includegraphics[clip,scale=0.8]{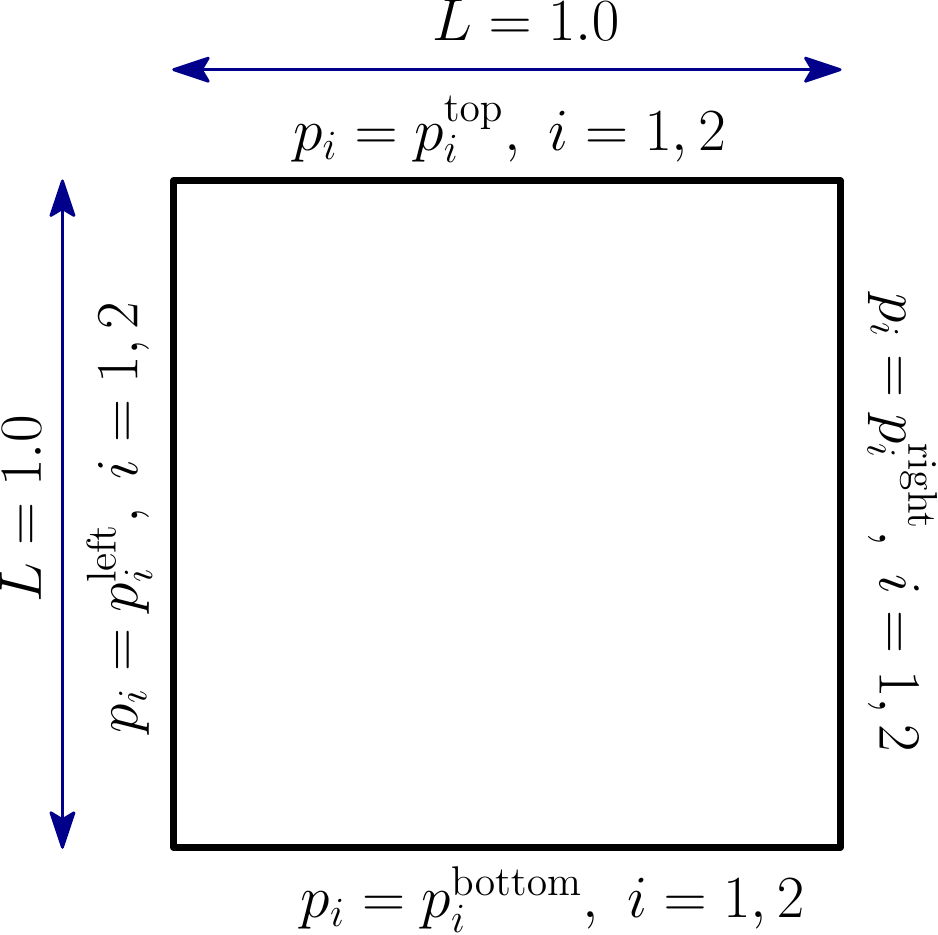}
  \caption{\textsf{2D numerical convergence analysis:}~This
    figure provides a pictorial description of the boundary
    value problem employed in the 2D numerical convergence
    analysis. \label{Fig:Dual_Problem_2D_domain}}
\end{figure}
%
%---------------------------------------------------;
%  Figure 7: 2D Problem: convergence: h-refinement  ;
%---------------------------------------------------;
\begin{figure}
  \subfigure{
    \includegraphics[clip,scale=0.38]{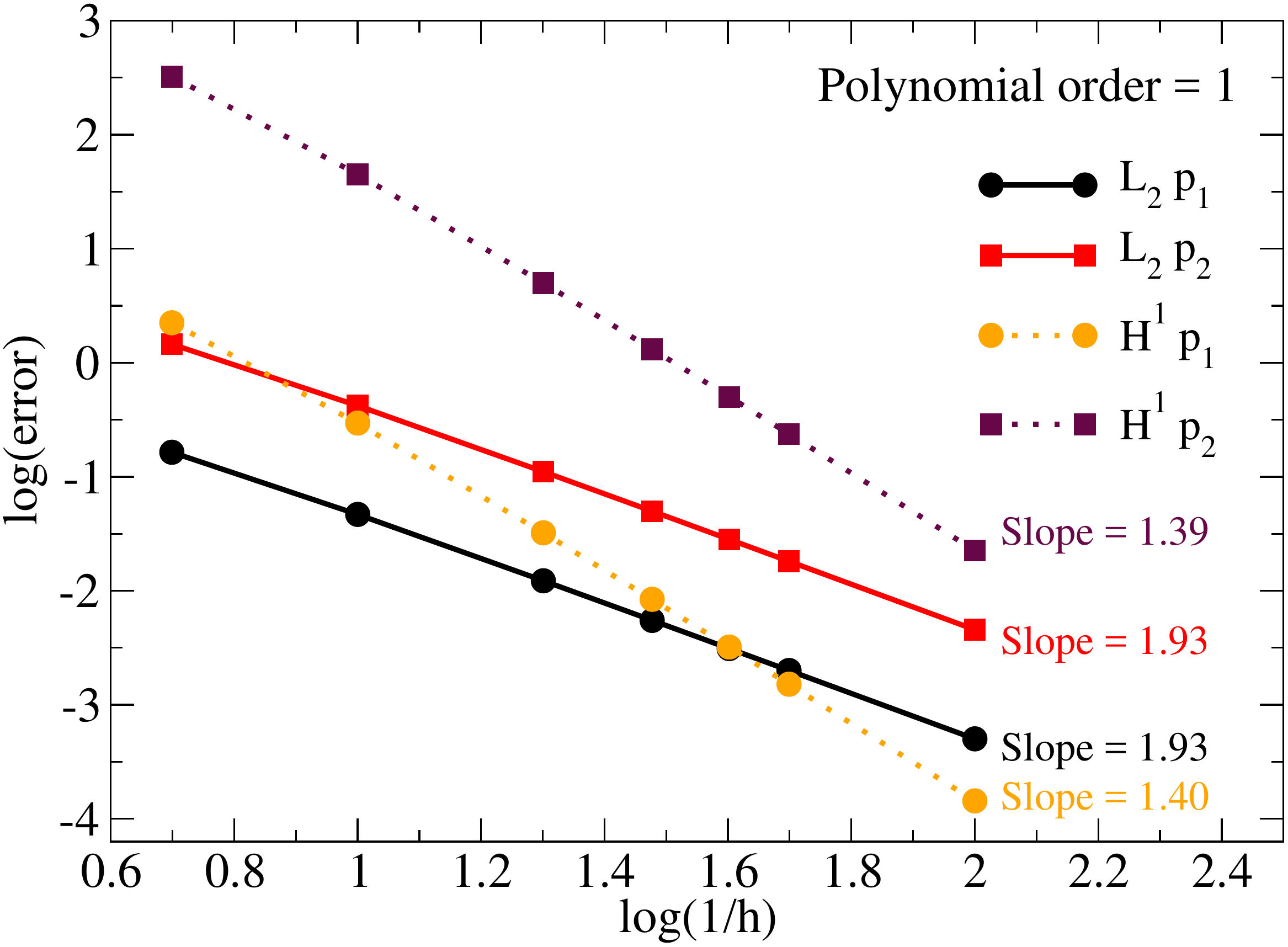}}
    \subfigure{
    \includegraphics[clip,scale=0.38]{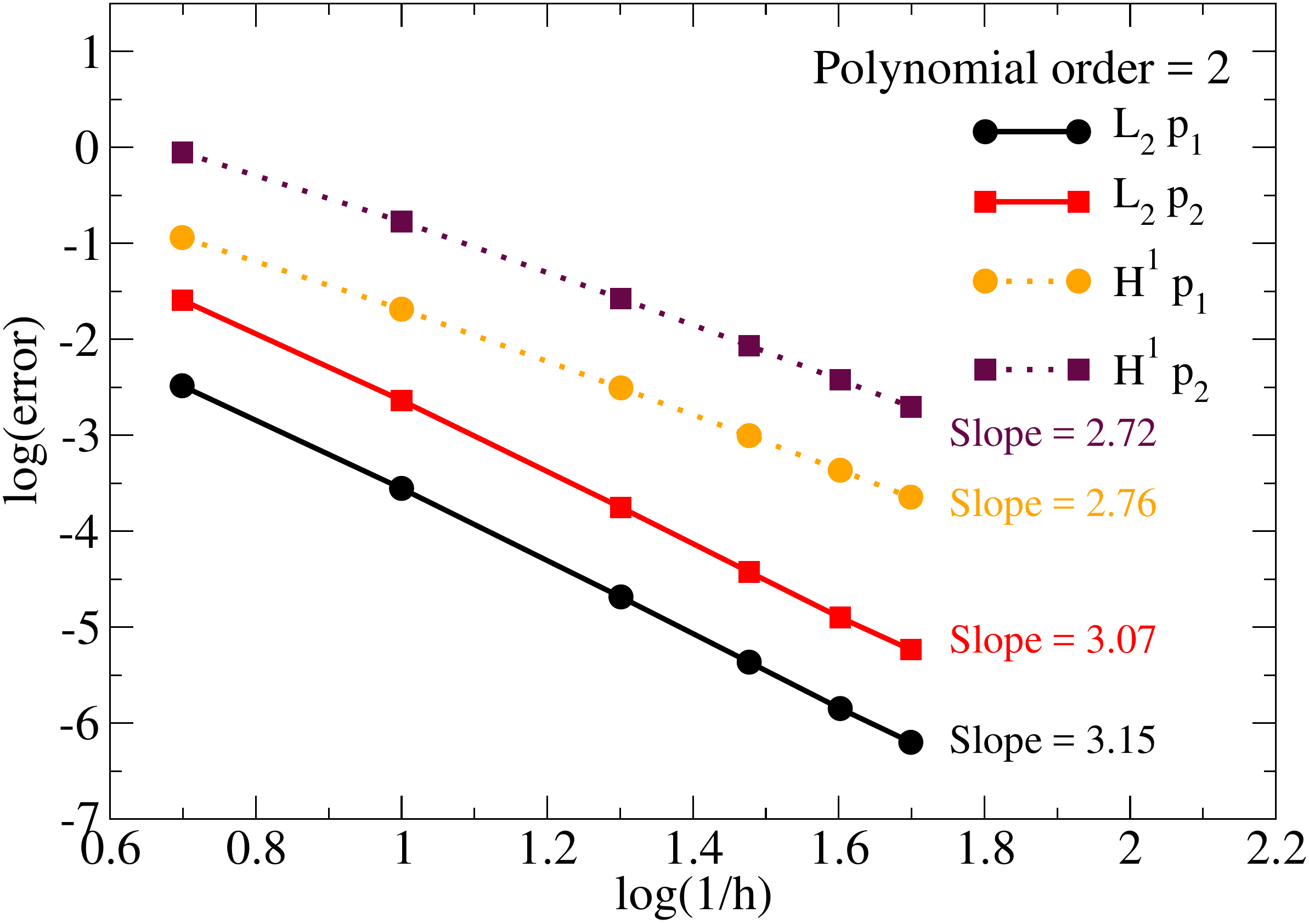}}
  \subfigure{
    \includegraphics[clip,scale=0.38]{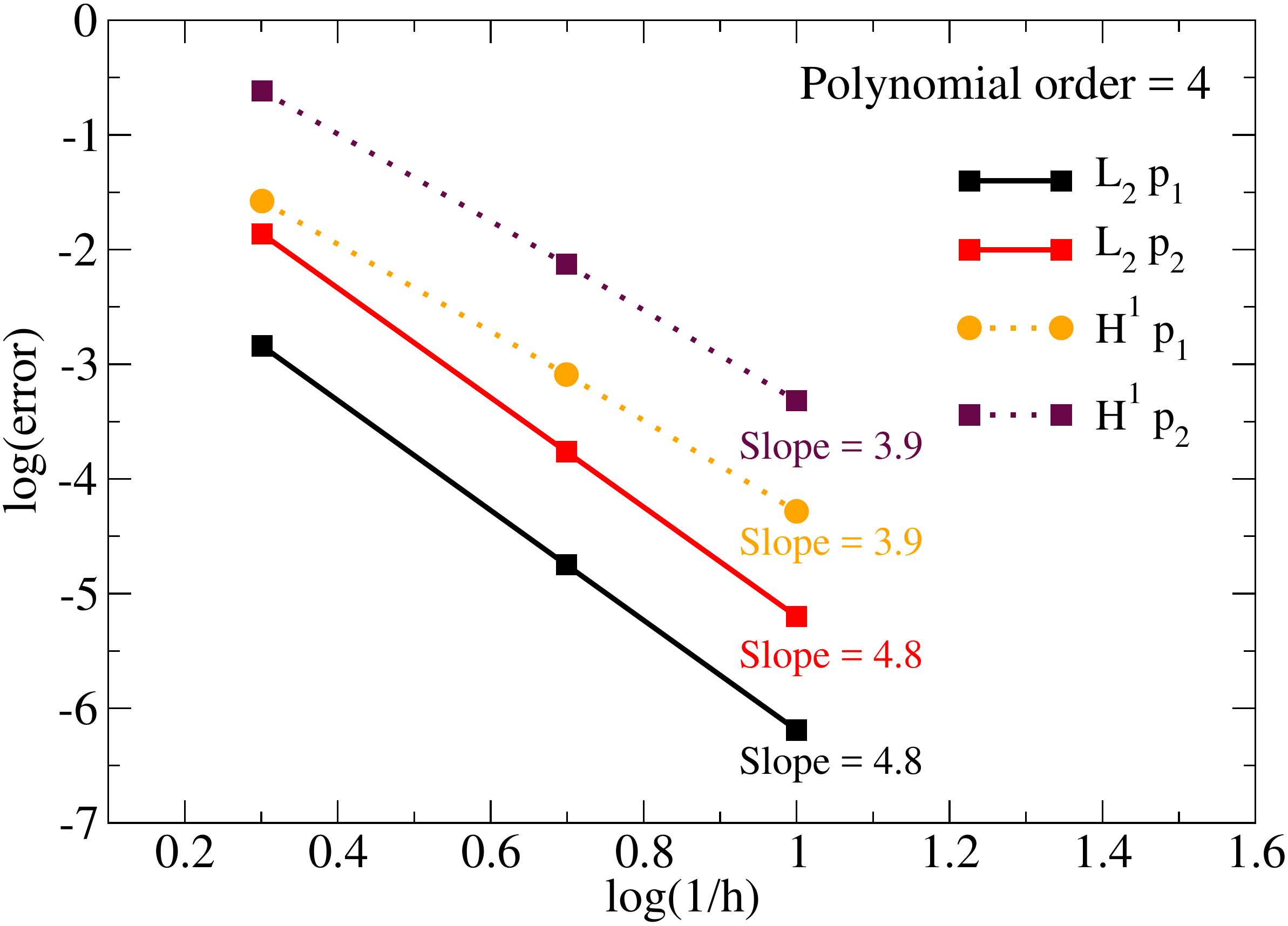}}
  \caption{\textsf{2D numerical convergence analysis:}~This figure shows the numerical convergence under $h$-refinement for various polynomial orders. The rate of convergence is
    polynomial, which is in accordance with the theory.}
    \label{Fig:Dual_Problem_2D_h_refinement}
    \end{figure}
%
%---------------------------------------------------;
%  Figure 8: 2D Problem: convergence: p-refinement  ;
%---------------------------------------------------;
\begin{figure}
  \includegraphics[clip,scale=0.37]{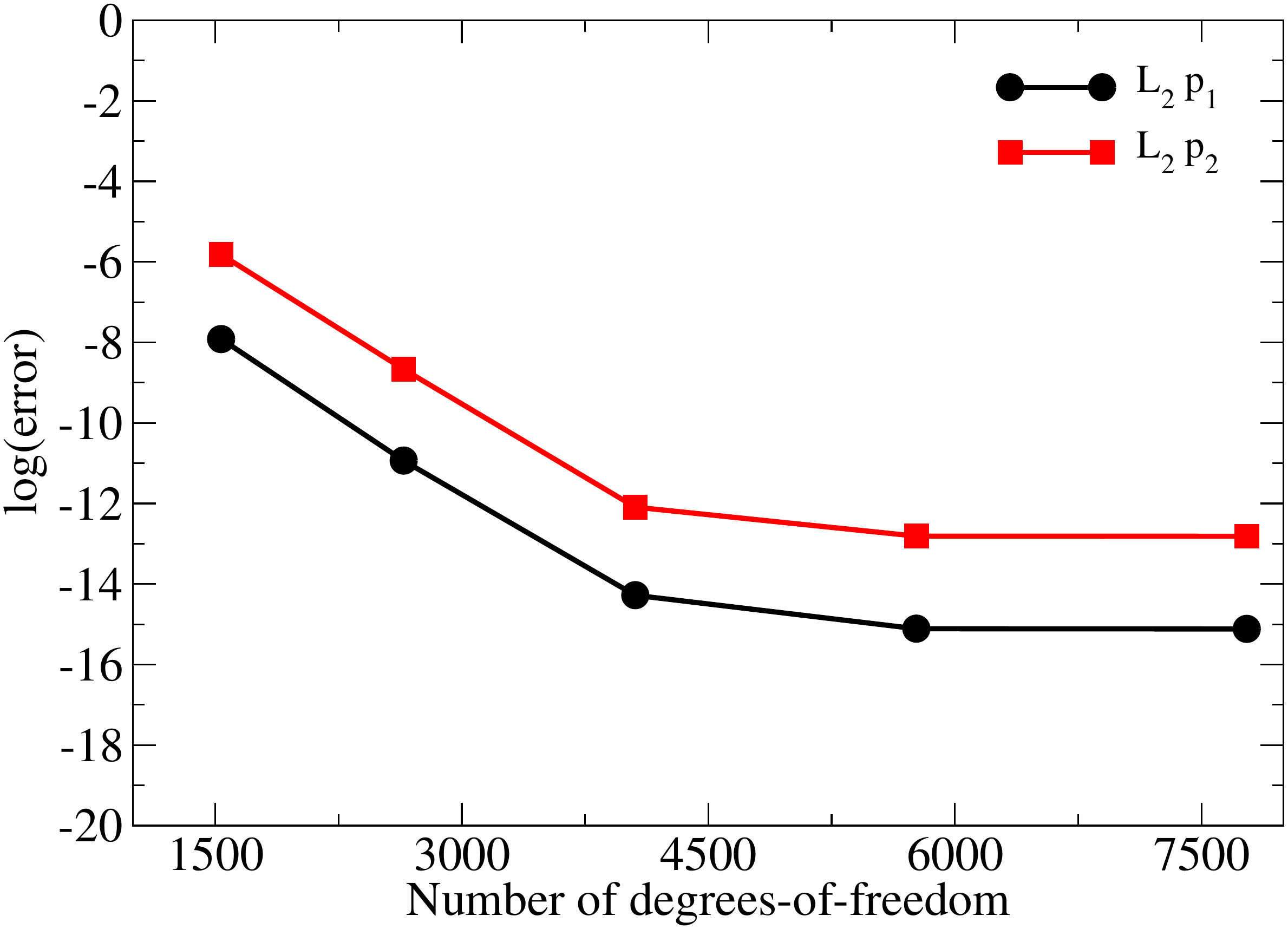}
    \caption{\textsf{2D numerical convergence analysis:}~This figure shows the numerical convergence
    under $p$-refinement for a fixed mesh size ($h = 0.2$).
    The number of degrees-of-freedom corresponds to $p
    = 3$ to $7$. The rate of convergence is exponential,
    which is in accordance with the theory. Note that
    the error flattened out around $10^{-16}$ for larger
    number of degrees-of-freedom. This is expected as
    the machine precision on a 64-bit machine is around
    $10^{-16}$.}
    \label{Fig:Dual_Problem_2D_p_refinement}
\end{figure}

%-----------------------------------------------;
%  Figure 9: 2D candle filter problem (domain)  ;
%-----------------------------------------------;
\begin{figure}

\vspace{10mm}
\includegraphics[clip,scale=0.6]{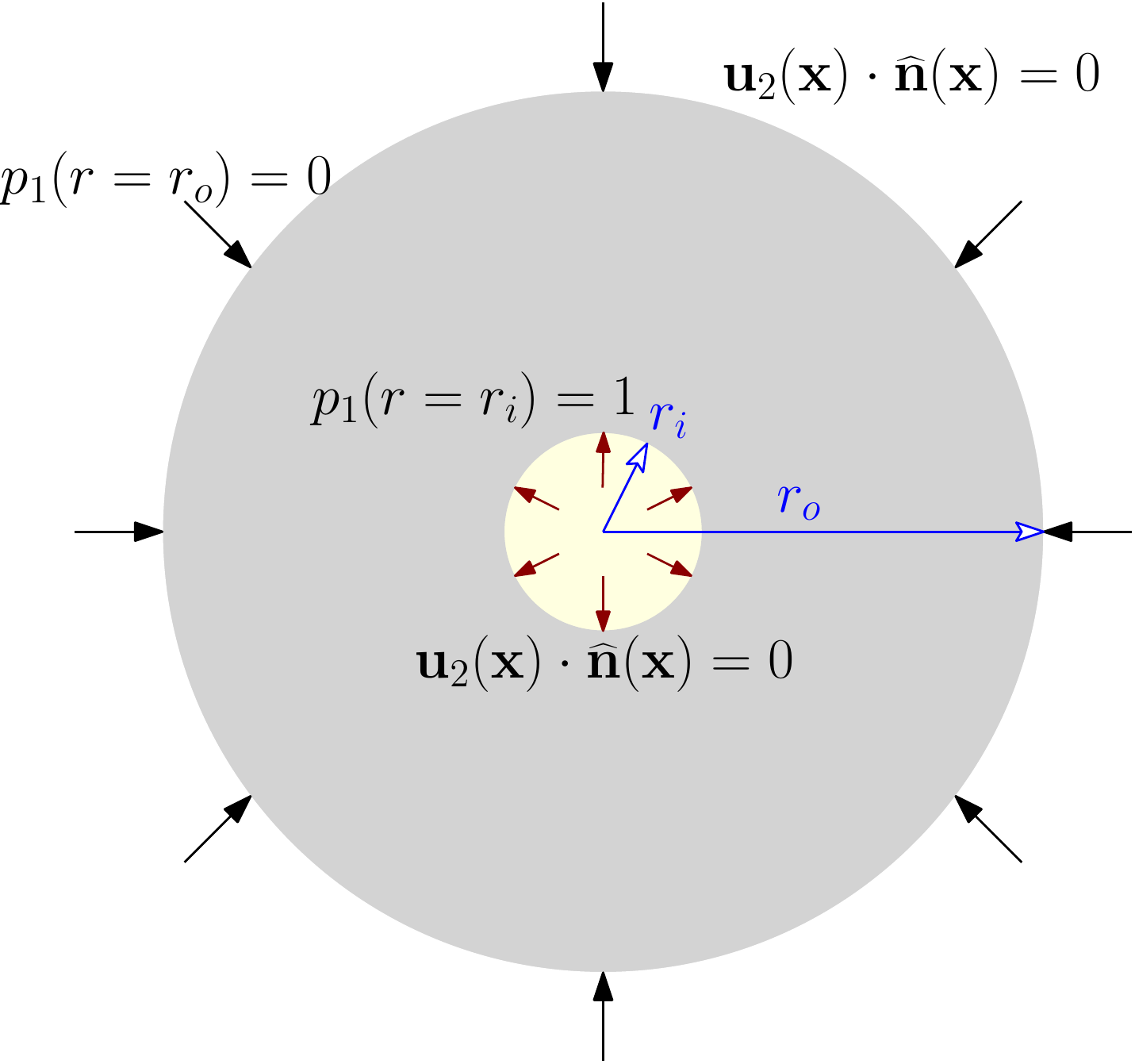}
\caption{\textsf{Two-dimensional candle filter problem:}~This figure provides a pictorial description of the candle filter problem which is used to study weak enforcement of velocity boundary conditions. There is no discharge on the inner and outer surfaces of the micro-pore network. For the macro-pore network, the inner surface is subjected to a pressure of unity, and the outer surface is subjected to zero pressure.}
\label{Fig:2D_problem_candle_filter_discription}
\end{figure}
%---------------------------------------------------------;
%  Figure 10: 2D candle filter Problem (V and P profiles) ;
%---------------------------------------------------------;
%
\begin{figure}
  \subfigure[Pressure fields \label{Fig:2D_problem_candle_filter_p}]{
    \includegraphics[scale=0.6,angle=270]{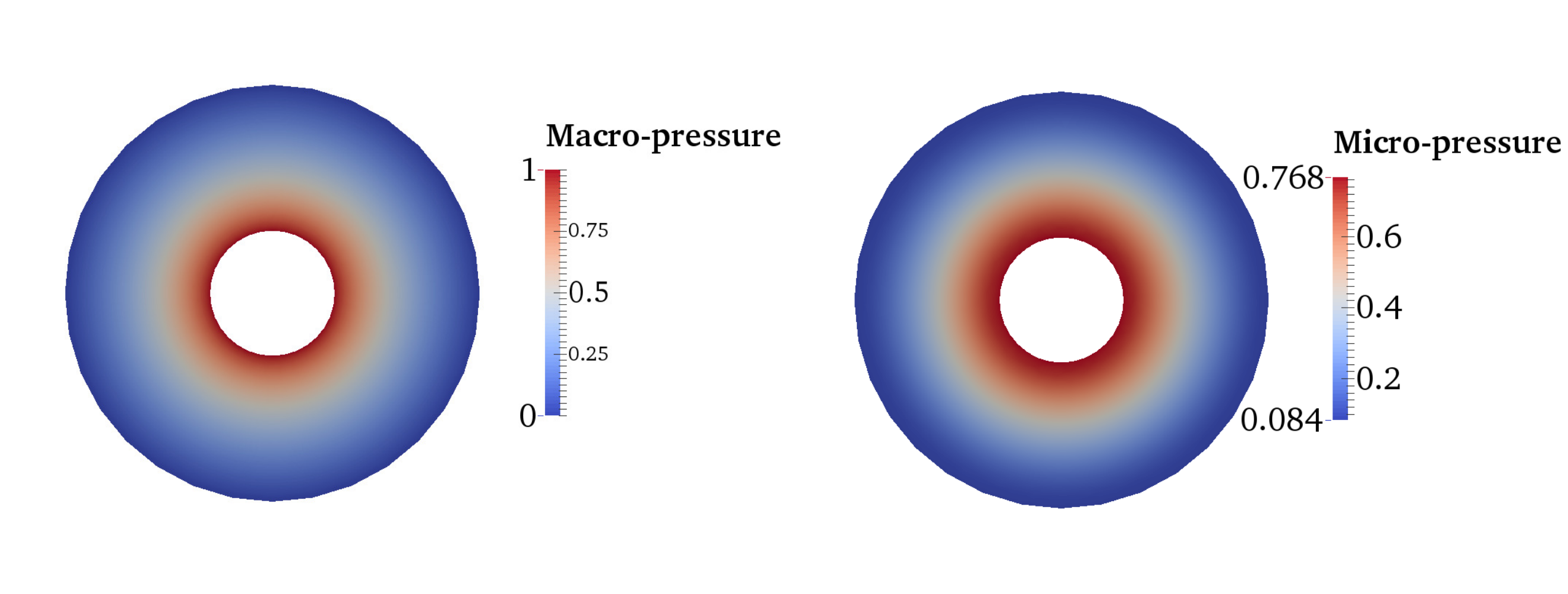}}
  \subfigure[Velocity vector fields \label{Fig:2D_problem_candle_filter_v}]{
    \includegraphics[scale=0.6,angle=270]{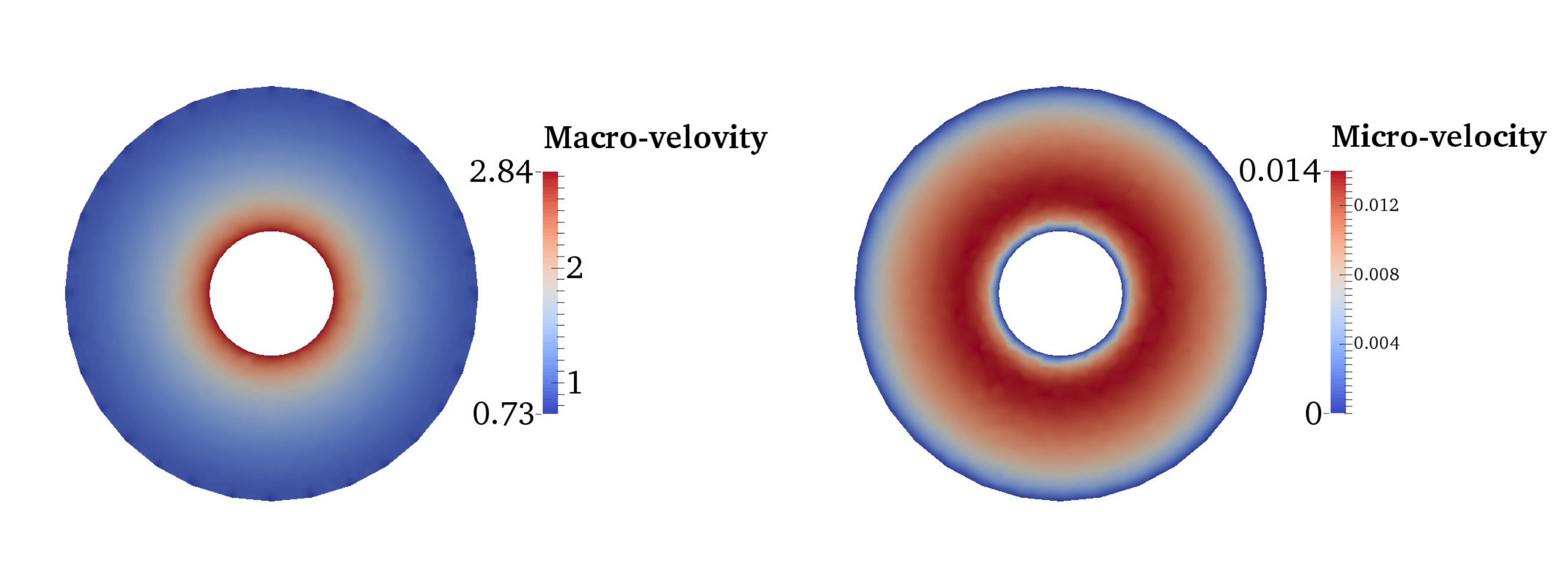}}
\caption{\textsf{Two-dimensional candle filter problem:}~This figure shows the contours of pressures and velocities in macro- and micro-pore networks under the extended framework for weak enforcement of velocity boundary conditions. Although there is no discharge from the micro-pore network on the boundary, there is discharge in the micro-pore network within the domain.}
    \label{Fig:2D_problem_candle_filter}
\end{figure}
%
%---------------------------------------------------;
%  Figure 11: Three-dimensional analysis            ;
%---------------------------------------------------;
%
\begin{figure}
  \subfigure[Pressure fields \label{Fig:3D_problem_candle_filter_p}]{
    \includegraphics[scale=0.54,angle=270]{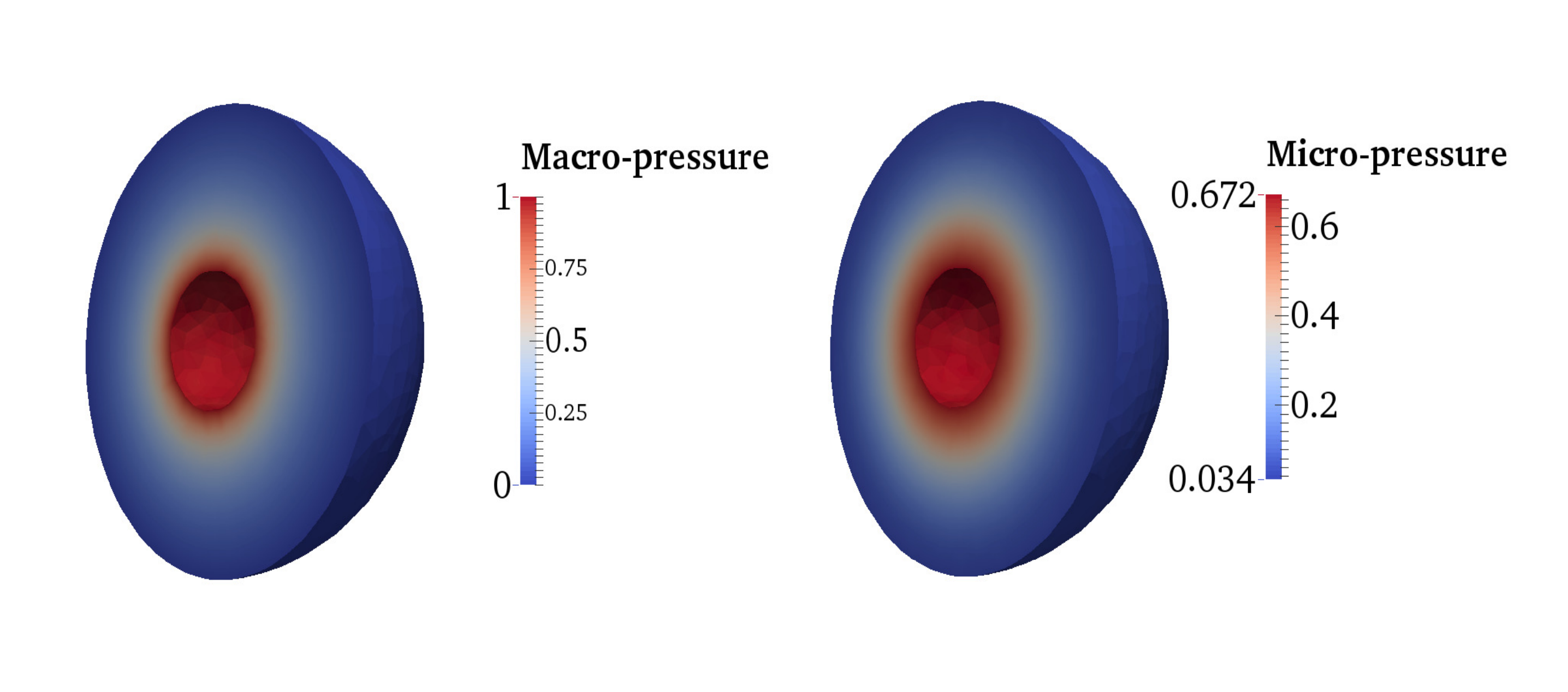}}
  \subfigure[Velocity vector fields \label{Fig:3D_problem_candle_filter_v}]{
    \includegraphics[scale=0.54,angle=270]{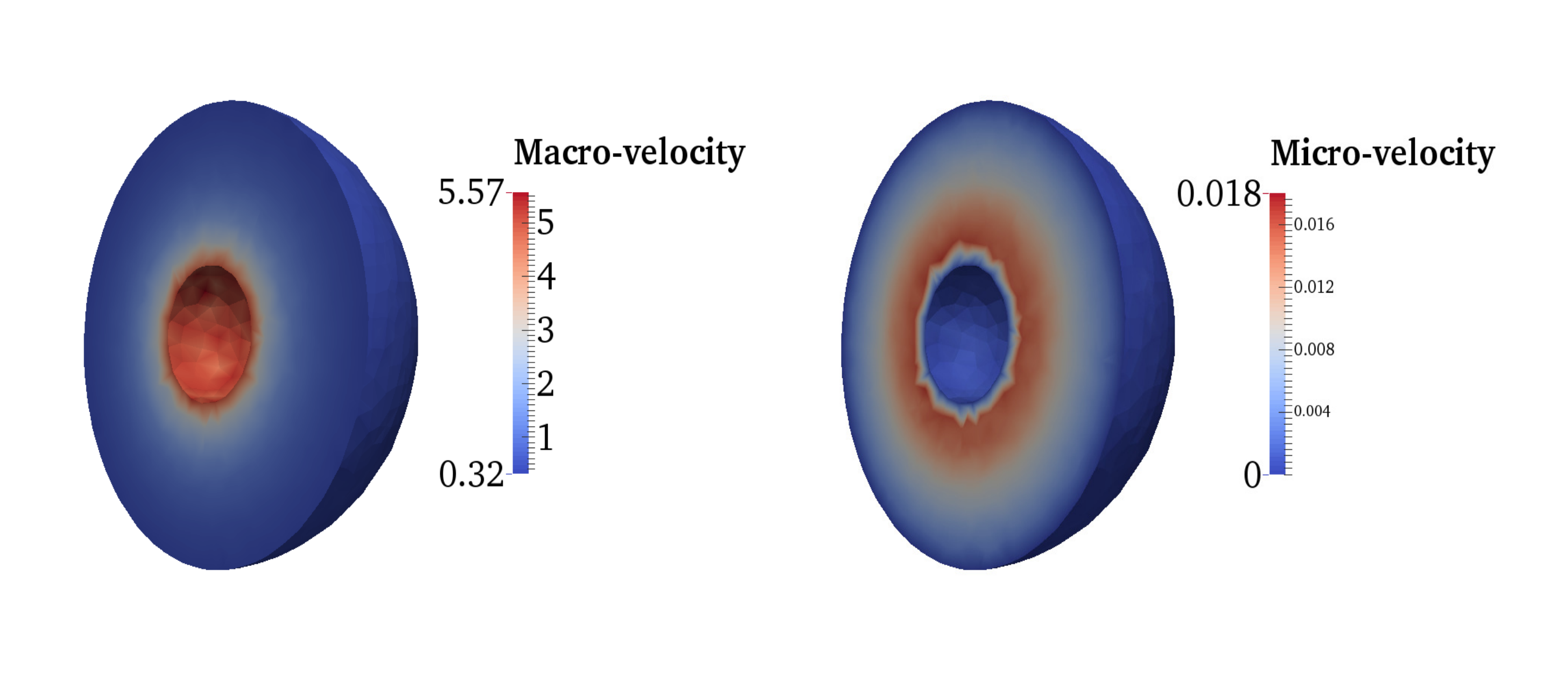}}
\caption{\textsf{Three-dimensional hollow sphere problem:}~This figure shows the contours of pressures and velocities in the two pore-networks under the extended framework for weak enforcement of velocity boundary conditions. On the inner and outer surfaces, pressure is prescribed for the macro-pore network while there is no discharge for the micro-pore network. Although there is no discharge from the micro-pore network on the boundary, there is discharge in the micro-pore network within the domain.}
    \label{Fig:3D_problem_candle_filter}
\end{figure}

%************************************************;
%  Figures: Mechanics-based accuracy assessment  ;
%************************************************;

%--------------------------------;
%  Figure 12: Pipe-bend problem  ;
%--------------------------------;
\begin{figure}
\includegraphics[clip,scale=0.73]{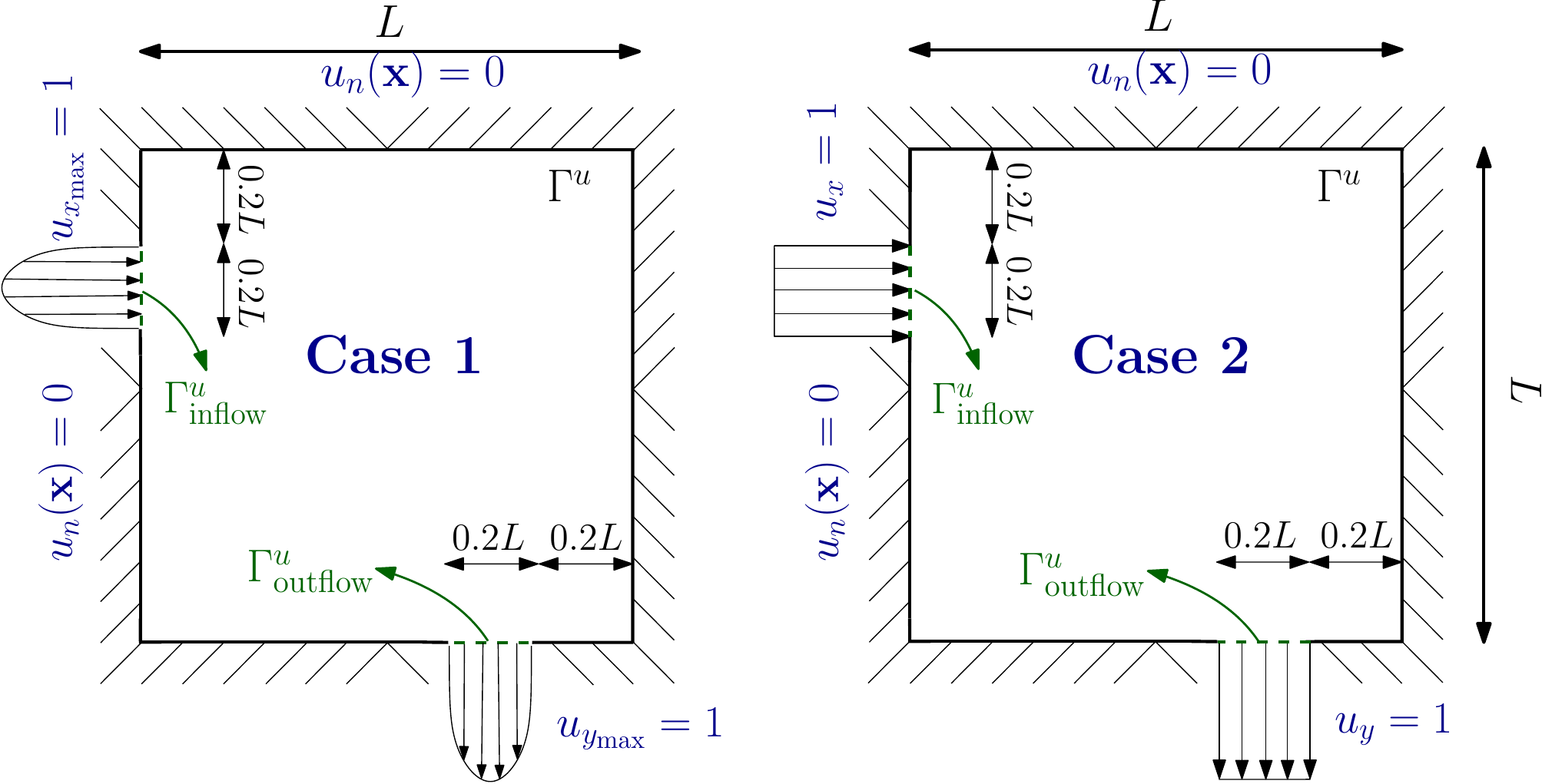}
\caption{\textsf{Pipe bend problem:}~In case 1, for the macro-pore network, an inflow parabolic velocity is enforced on $\Gamma^u_{\mathrm{inflow}}$ while an outflow parabolic velocity is applied on $\Gamma^u_{\mathrm{outflow}}$. In case 2, an inflow constant velocity is enforced on $\Gamma^u_{\mathrm{inflow}}$ while an outflow constant velocity is applied on $\Gamma^u_{\mathrm{outflow}}$ for the macro-pore network. On the other parts of the boundary, normal component of velocity is assumed to be zero.}
\label{Fig:pipe_bend_problem_domain}
\end{figure}
%

%-------------------------------------------------------------------------;
%  Figure 13: Verification of min dissipation theorem and Betti's theorem ;
%-------------------------------------------------------------------------;
\begin{figure}

\vspace{10mm}
\subfigure[Dissipation\label{Fig:pipe_bend_problem_Dissipation}]{
\includegraphics[clip,scale=0.28]{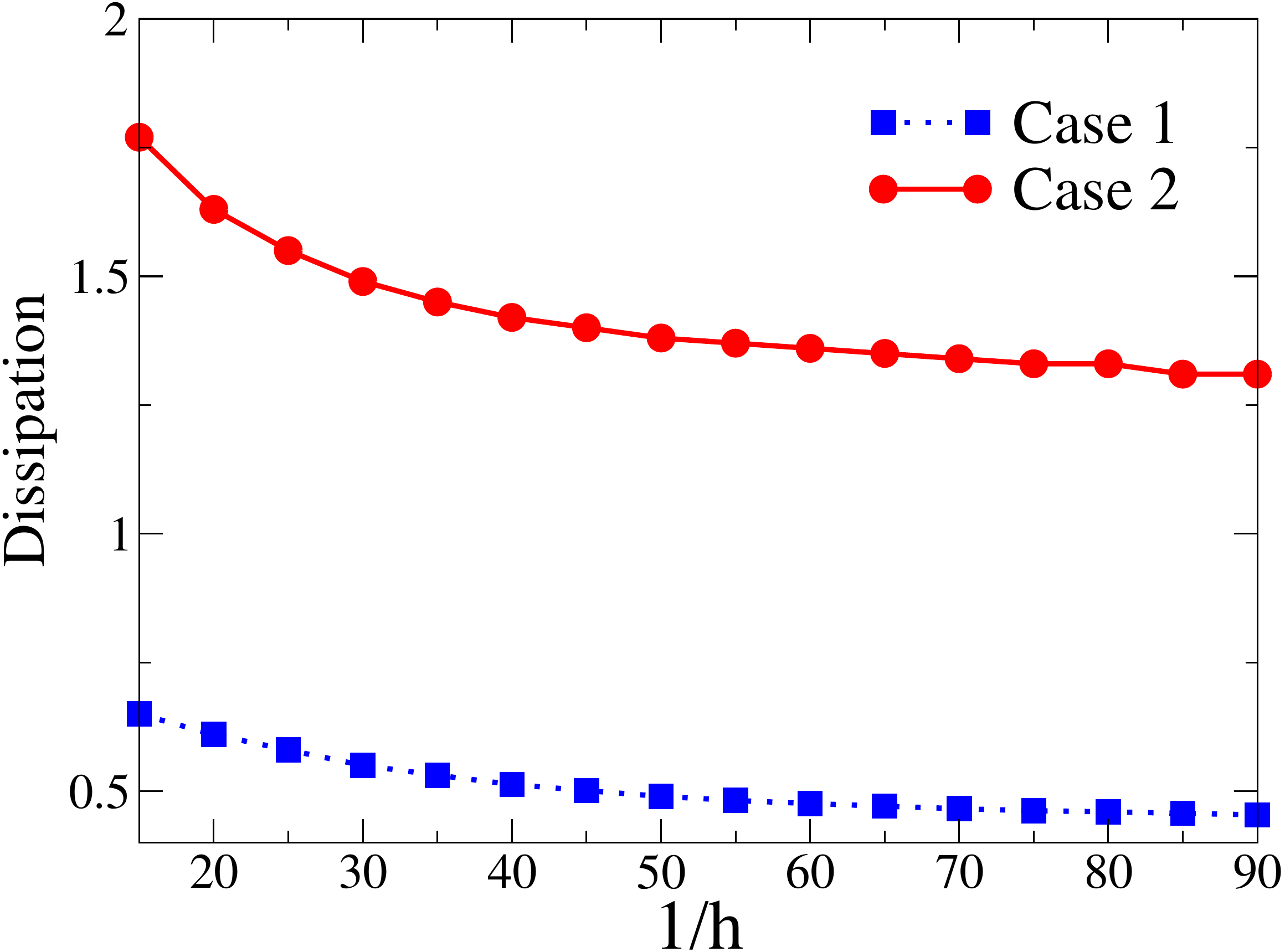}}
\hspace*{5mm}
\subfigure[Error in reciprocal relation\label{Fig:Pipebend_problem_reciprocal_error}]{
\includegraphics[clip,scale=0.28]{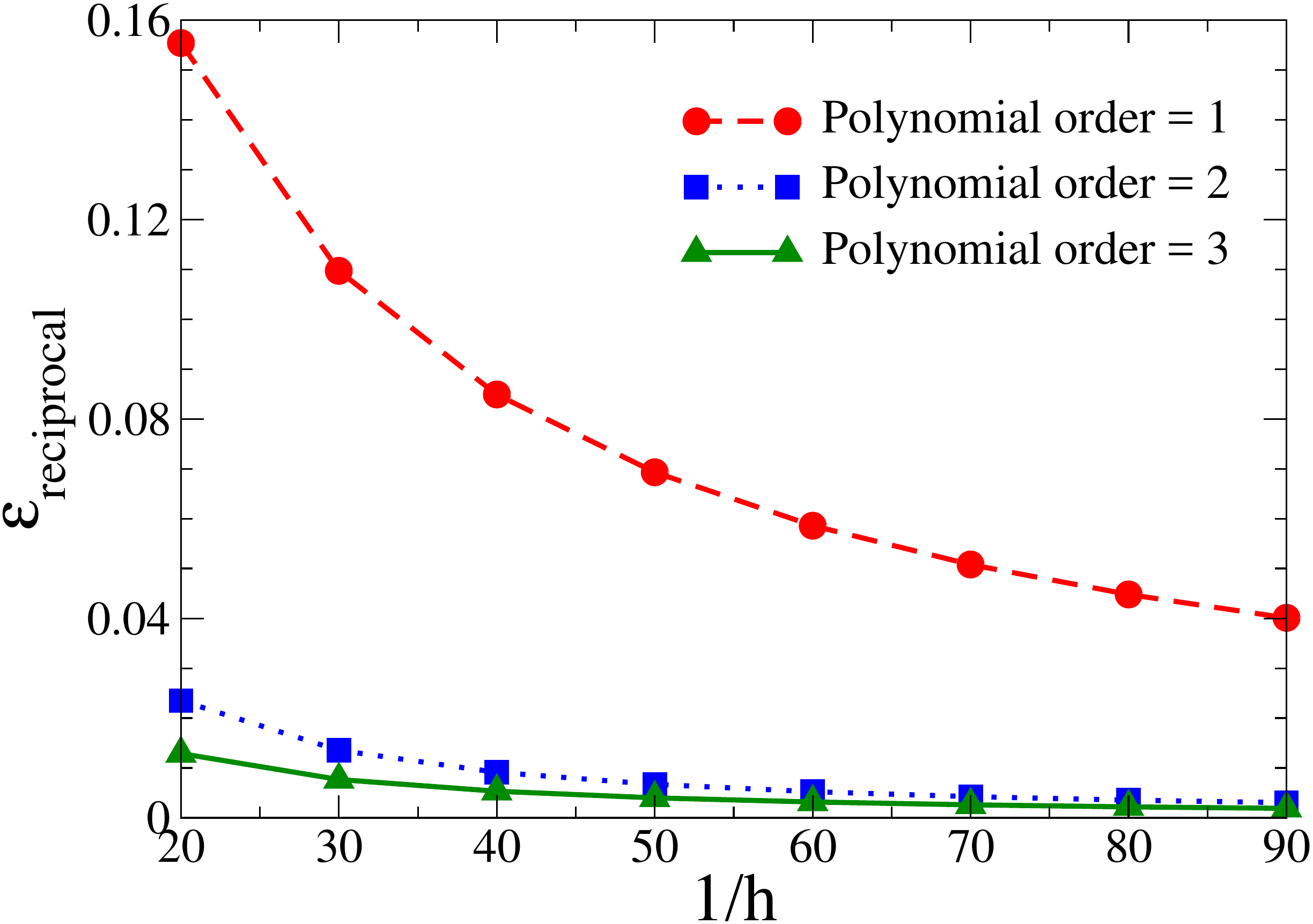}}
\caption{\textsf{Pipe bend problem}:~The left figure shows the variation of dissipation with mesh refinement for both cases shown in figure \ref{Fig:pipe_bend_problem_domain}. As can be seen, the dissipation decreases monotonically with mesh refinement which is in accordance with the theory for this problem. The right figure, shows the variation of $\varepsilon_{\mathrm{reciprocal}}$ with mesh refinement using the two cases for different orders of interpolation. The numerical error in the reciprocal relation decreases monotonically with mesh refinement for this test problem which shows the monotonic convergence of numerical solutions.}
\end{figure}

%***********************************************;
%  Figures: An extension to transient analysis  ;
%***********************************************;

%-------------------------------------------------;
%  Figure 14: Transient 2D flow problem (domain)  ;
%-------------------------------------------------;
\begin{figure}

\vspace{10mm}
\includegraphics[clip,scale=0.9]{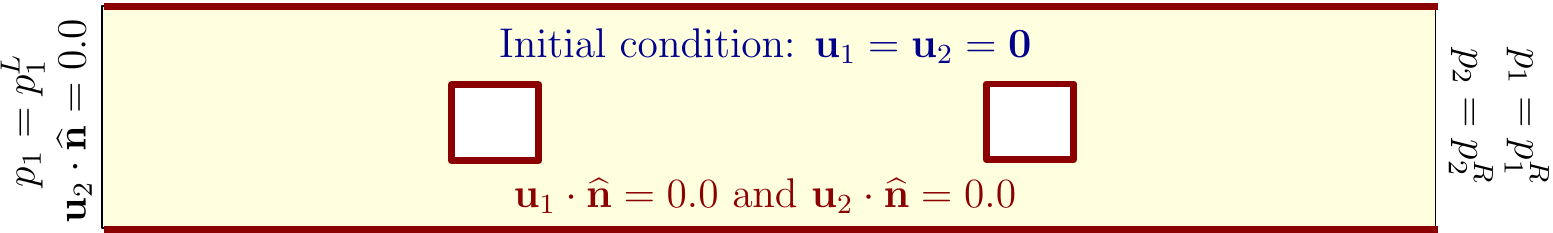}
\caption{\textsf{Transient 2D flow problem}:~This figure shows the computational domain, initial and boundary conditions for the transient
problem.}
\label{Fig:Flow_Transport}
\end{figure}

%----------------------------------------;
%  Figure 15: Transient 2D flow problem  ;
%----------------------------------------;
\begin{figure}
\vspace{1cm}
\includegraphics[clip,scale=0.8]{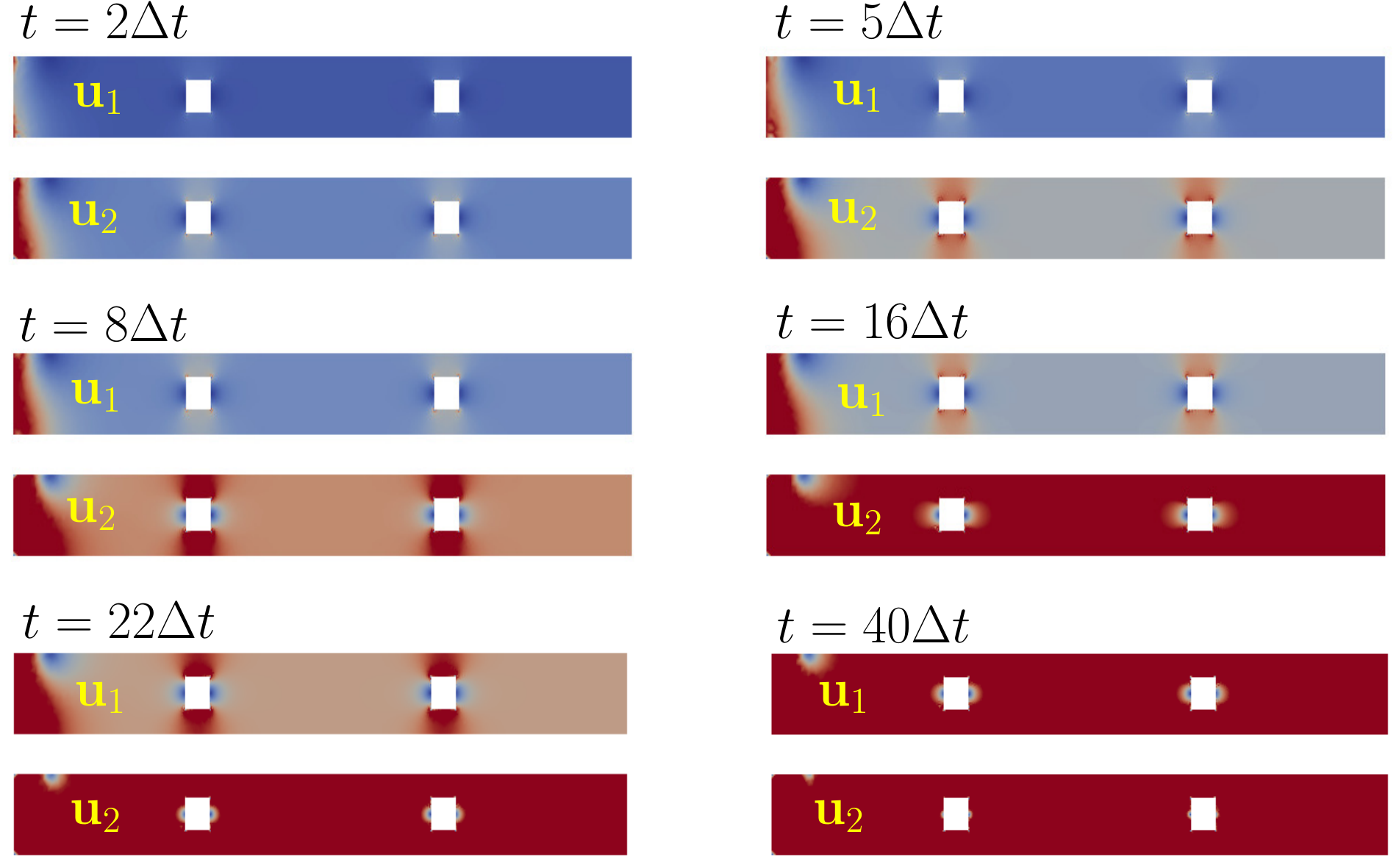}
\caption{\textsf{Transient 2D flow problem}:  This figure shows a comparison between macro- and micro-velocities at different time steps. As can be seen, the rate of decay of the solution in the macro-pore network is slower than that of the micro-pore network which is due to the higher permeability of the macro-pore network. Hence, the micro-velocity reaches the steady state faster than the macro-velocity.}
\label{Fig:V1_vs_V2_Transport}
\end{figure}

%-------------------------------------;
%  Figure 16: Hele-Shaw cell problem  ;
%-------------------------------------;
\begin{figure}

\vspace{10mm}
\includegraphics[clip,scale=0.6]{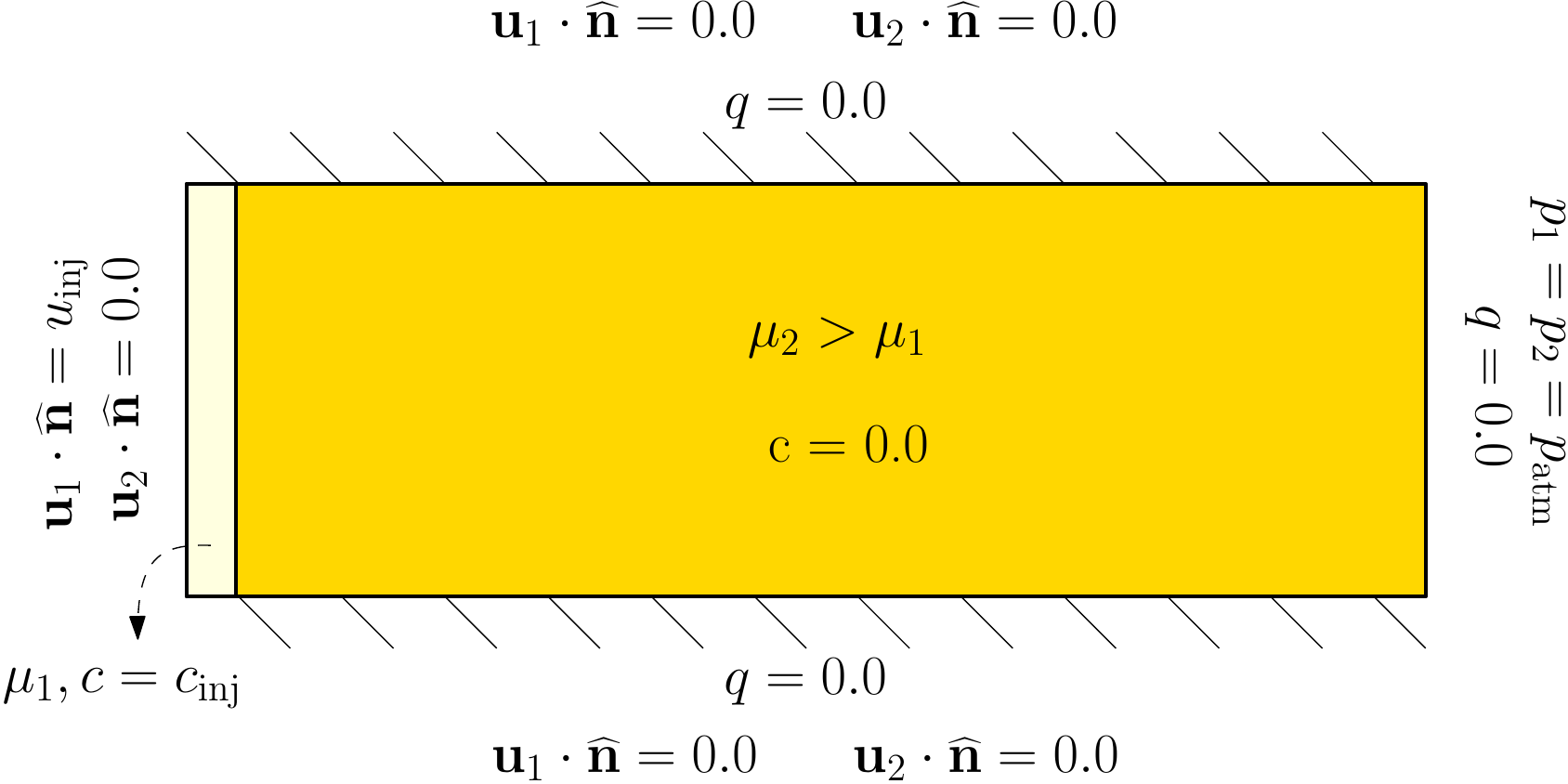}
\caption{\textsf{Hele-Shaw cell:}~This figure shows the pictorial description of the coupled flow-transport problem including initial and boundary conditions.}
\label{Fig:Hele_shaw_Schm}
\end{figure}

%---------------------------------------------;
%  Figure 17: Coupled flow-transport problem  ;
%---------------------------------------------;
\begin{figure}
\vspace{1cm}
\includegraphics[clip,scale=0.8]{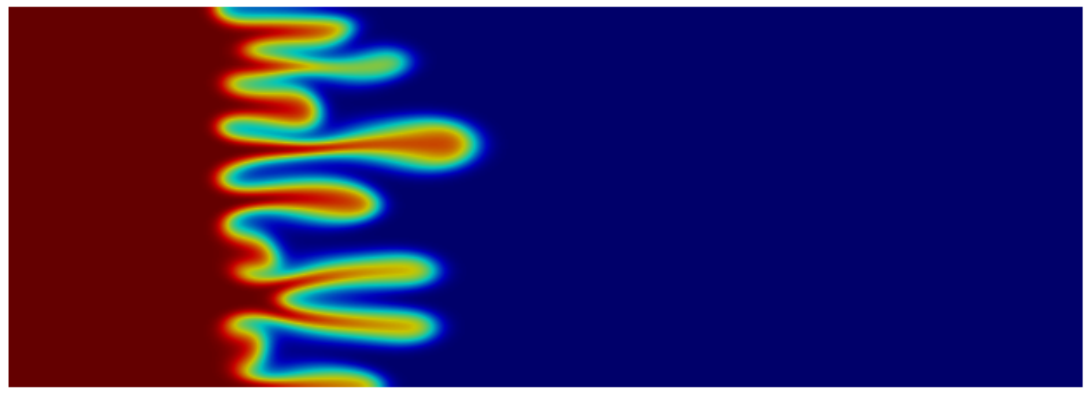}
\caption{\textsf{Coupled flow-transport problem}:~This figure shows that Saffman-Taylor-type physical instabilities can also occur in a porous domain exhibiting double porosity/permeability. As can be seen, the proposed stabilized formulation is capable of eliminating the spurious numerical instabilities without suppressing the underlying physical instability.}
\label{Fig:Hele_Shaw_Concentration_profiles}
\end{figure}

\end{document}